\documentclass[a4paper,12pt]{article}
\usepackage{authblk}

\usepackage{graphicx}
\usepackage[mathscr]{eucal}
\usepackage{amsmath,amsfonts,amssymb,amsthm}
\usepackage{times}
\usepackage{hyperref}

\newcommand{\maxim}[1]{\textsc{#1}}

\numberwithin{equation}{section} 

\newtheorem{thm}{Theorem}[section]
\newtheorem{rem}[thm]{Remark}

\newtheorem{theorem}[thm]{\textbf{Theorem}}
 \newtheorem{prop}[thm]{Proposition}
 \newtheorem{lemma}[thm]{Lemma}
 \newtheorem{definition}[thm]{Definition}
 \newtheorem{example}[thm]{Example}

\newtheorem{corollary}[thm]{Corollary}

\renewcommand{\tilde}{\widetilde}
\renewcommand{\hat}{\widehat}

\newcommand{\bref}[1]{\textbf{\ref{#1}}}

\newcommand{\gh}[1]{\mathrm{gh}(#1)}
\newcommand{\fdeg}[1]{\mathrm{fdeg}(#1)}

\newcommand{\dx}{\mathrm{d}_X}

\renewcommand{\d}{\partial}

\newcommand{\cM}{\mathcal{M}}

\renewcommand{\geq}{\,{\geqslant}\,}
\renewcommand{\leq}{\,{\leqslant}\,}

\newcommand{\binner}[2]{%
  {\langle}\kern-4.15pt{\langle}#1{,}\,#2{\rangle}\kern-4.15pt{\rangle}}
\newcommand{\commut}[2]{[#1{,}\,#2]}

\newcommand{\half}{\mathchoice{%
    \ffrac{1}{2}}{\frac{1}{2}}{\frac{1}{2}}{\frac{1}{2}}}

\newcommand{\ffrac}[2]{\raisebox{.5pt}%
  {\footnotesize$\displaystyle\frac{#1}{#2}$}\kern1pt}

\newcommand{\red}{\mathrm{red}}

\newcommand{\dl}[1]{\mathchoice{\ffrac{\d}{\d #1}}{\frac{\d}{\d #1}}{\ffrac{\d}{\d #1}}{\ffrac{\d}{\d #1}}}

\newcommand{\ddl}[2]{\ffrac{\d #1}{\d #2}}

\newcommand{\Liealg}{\mathfrak} 
\newcommand{\algg}{\Liealg{g}}

\newcommand{\cC}{\mathcal{C}}

\newcommand{\fR}{\mathbb{R}}

\newcommand{\fZ}{\mathbb{Z}}

 \def\cE{\mathcal{E}}

 \def\cI{\mathcal{I}}

 \def\cP{\mathcal{P}}
 
\newcommand\blfootnote[1]{%
  \begingroup
  \renewcommand\thefootnote{}\footnote{#1}%
  \addtocounter{footnote}{-1}%
  \endgroup
}

\usepackage[left=3cm,right=2cm,
    top=3cm,bottom=3cm,bindingoffset=0cm]{geometry}
\usepackage{bm}
\usepackage{comment}

\newcommand{\dg}{\mathrm{d}_\algg}
\newcommand{\ev}{\mathrm{ev}}
\newcommand{\eI}{\mathcal{E}}
\newcommand{\comm}[1]{}
\usepackage{autobreak}
\usepackage{authblk}
\usepackage{tikz-cd}
\usepackage{mathtools}
\usepackage{musicography}
    \newcommand{\ms}[1]{^{\scalebox{0.78}{$\scriptstyle [#1]$}}}

\newcommand{\fop}{\mathrm{fop}}
\newcommand{\cA}{\mathcal{A}}
\newcommand{\cR}{\mathcal{R}}

\newcommand{\maxw}{{\mathcal{M}}}
\newcommand{\body}[1]{{\mathrm{body}(#1)}}

\title{Asymptotic boundary structure of Lagrangian \\ gauge theories}
\author{Ivan Dneprov}
\author{Maxim Grigoriev${}^\dagger$}
\author{Mikhail Markov}

\affil{\textsl{Service de Physique de l'Univers, Champs et Gravitation, \protect\\ Universit\'e de Mons, 20 place du Parc, 7000 Mons, 
Belgium \vspace{5pt}}}

\date{}

\begin{document}

\maketitle
\begin{abstract}

Given a local gauge theory on spacetime with boundary, it naturally defines another gauge theory which can be regarded as a theory of the boundary values. For Lagrangian theories,  it comes equipped with the presymplectic structure which can be used to define one or another version of Hamiltonian-like formulation of the initial model. This relation is especially manifest for AKSZ sigma models and more-generally gauge PDEs with compatible presymplectic structure in which case the boundary system is again a gauge PDE with presymplectic structure.  In the context of (flat space) holography one is interested in boundaries at infinity, also known as asymptotic boundaries. The gauge PDE framework naturally extends to this setup, resulting in the notion of gauge PDE with asymptotic boundaries. Although this works perfectly well at the level of equations of motion, the extension to Lagrangian systems appears quite subtle because the presymplectic structure capturing the Lagrangian is divergent at the boundary. We show that any $Q$-cocycle in the bulk (and presymplectic structure in particular) determines a pair of compatible $Q$-cocycles of  the boundary gauge PDE: the renormalized one of the same ghost-degree, and the anomaly cocycle of degree one lower. For the latter, the construction is somewhat analogous to the residue map known in the context of b-geometry. The general formalism is exemplified by scalar and Maxwell fields on AdS and Minkowski spaces. It turns out that in the AdS case the natural action determined by the anomaly presymplectic structure is precisely the one known as the holographic Weyl anomaly in the AdS/CFT context while its null-infinity counterpart was known in a few very particular cases only.
\end{abstract}

\blfootnote{\noindent Supported by the ULYSSE Incentive
Grant for Mobility in Scientific Research [MISU] F.6003.24, F.R.S.-FNRS, Belgium.}
\blfootnote{${}^{\dagger}$ Also at Lebedev Physical Institute and Institute for Theoretical and Mathematical Physics, Lomonosov MSU}

\newpage
\setcounter{tocdepth}{2}
\tableofcontents

\section{Introduction}

The concepts of boundary at infinity and conformal compactification of spacetime play a prominent  role
in the physical interpretation of gravity and gauge theories of direct physical interest as they allow one to define isolated  systems and conserved charges for gravity~\cite{Penrose:1962ij,Penrose:1964ge}. A well-known example of the spacetime with boundary at infinity is the (asymptotically) AdS space and its conformal boundary. This provides a playground for the  AdS/CFT correspondence~\cite{Maldacena:1997re,Aharony:1999ti} and is also well-known in conformal geometry where it is naturally incorporated within the celebrated Fefferman-Graham ambient metric construction~\cite{Fefferman-Graham:1985ambient,Fefferman:2007rka}.

If a (gauge) theory is defined on a manifold with a boundary or a codimension 1 submanifold, the natural question is what structure is naturally induced therein? There are various approaches to this issue among which the apparently simplest one is to regard the boundary as an analog of a Cauchy surface so that the boundary structure can be described as a (version of) the Hamiltonian formulation of the initial theory. In the case of gauge theories this is generally a constrained Hamiltonian system so that  the boundary configurations describing the corresponding phase space are subject to differential (algebraic in the case of mechanics) equations known as constraints. For instance, considering the Maxwell field on $\fR^4$ with coordinates $t\equiv x^0, x^i$ and focusing on the submanifold $t=const$ one finds that the induced phase space is parametrized by configurations of $A_i(x)$ and its conjugate momenta $\pi^i(x)\equiv F^{0i}(x)$ subject to the constraint equation $\d_i\pi^i=0$. The message is that the induced structure on the boundary can be naturally seen as a theory defined on the boundary at the level of equations of motion. Moreover, if the starting point system is Lagrangian then the phase-space comes equipped with the (pre)symplectic structure~\cite{Kijowski:1979dj}, e.g. equivalent to $\int \delta\pi^i\wedge \delta A_i$ in the above example of Maxwell. In the case of general partial differential equations (PDEs) this relation can be made very geometrical if the PDE in question is defined in intrinsic terms as a fibre bundle over the spacetime, equipped with Cartan distribution~\cite{Vinogradov1981}, see~\cite{Krasil?shchik-Lychagin-Vinogradov,Krasil'shchik:2010ij} for the review and systematic exposition. In this case the boundary system is simply given by the pullback of the bundle to the boundary, which is again a PDE in a natural way.

In physics we are mostly interested in gauge theories, in which case the extra structures are present. These structures are best described using one or another version of Batalin-Vilkovisky (BV) formalism. It has been known for quite some time that the constrained Hamiltonian system induced on the boundary can be directly found in terms of the BV formulation in the bulk if the initial theory is formulated as the AKSZ sigma-model. To illustrate this point consider an AKSZ model
over a one-dimensional spacetime. In this case the target space is given by a degree 0 symplectic $Q$-manifold $(F,Q,\omega)$. Of course, $(F,Q,\omega)$ has a natural interpretation of BFV extended phase space with the Hamiltonian $H$ of $Q$ being a BRST charge. The BV system determined by the above AKSZ model is precisely the correct BV formulation for the extended Hamiltonian action determined by the constrained system in question~\cite{Grigoriev:1999qz}. In other words, for the 1d AKSZ sigma model seen as a Lagrangian field theory the Hamiltonian formulation is simply given by the target space BFV system. The same relation holds true for more general AKSZ systems, where the BFV formulation is given by the AKSZ-like model in 1-dimension less~\cite{Barnich:2003wj,Grigoriev:2010ic,Grigoriev:2012xg}. More geometrically, one could say that the Hamiltonian BFV formulation is associated to any codimension-1 submanifold. The relation between BV system and its induced associated BFV-system on the boundary of spacetime can be also established  without resorting to the AKSZ framework using the BV-BFV formalism~\cite{Cattaneo:2012qu,Cattaneo:2015vsa}, see also~\cite{Rejzner:2020xid} for its applications to theories on spacetimes with asymptotic boundaries. Let us also mention a systematic study~\cite{Barnich:2001jy} of the boundary structure in local BRST cohomology framework.

The AKSZ sigma model approach to boundary structure  has a natural generalisation, based on representing the theory in question as a so-called gauge PDE~\cite{Grigoriev:2019ojp}. The latter can be seen, on the one hand, as the generalisation of the nonlagrangian version of AKSZ model to the case of nontopological theories and, on the other hand, as the minimal BV extension of the geometrical approach to PDEs~\cite{Vinogradov1981}. As a geometrical object gPDE is a $Q$-bundle~\cite{Kotov:2007nr} over the de Rham complex of spacetime, seen as a $Q$-manifold,
and just like in the case of PDE or AKSZ sigma-model, the boundary structure is obtained as the pullback of gPDE to the boundary de Rham complex seen as a $Q$-submanifold~\cite{Bekaert:2012vt,Grigoriev:2022zlq,Grigoriev:2023kkk}.\footnote{Although the derivation of the boundary structure of gPDE via pullback to the boundary de Rham complex appeared in~\cite{Bekaert:2012vt}, the precise formulation of the procedure was given\cite{Grigoriev:2022zlq,Grigoriev:2023kkk}. Let us also note that closely related ideas and methods were employed in a less general setting in the context of higher-spin theories~\cite{Vasiliev:2001zy,Vasiliev:2012vf}.}  
The gauge PDE approach to boundary structure of gauge fields has proved fruitful in the study of general AdS space gauge fields. In this context the conformal boundary of spacetime can be turned into a submanifold
of the  ambient space and the boundary structure is obtained by a pullback of the corresponding gPDE~\cite{Bekaert:2012vt,Bekaert:2013zya,Chekmenev:2015kzf,Bekaert:2017bpy}. 

The Lagrangian field theories are described in the gPDE formalism by gPDEs equipped with compatible presymplectic structures. Under some regularity conditions, these data define not only a natural action functional on the space of sections but also the full-scale local BV formulation on the symplectic quotient of the corresponding space of supersections. The resulting BV system is generally weaker (has more gauge-inequivalent solutions) than the starting point gPDE so that the additional condition ensuring the equivalence is to be imposed. Such a gPDE with compatible presymplectic structure can be pulled back to the de Rham complex of the boundary (or more general submanifold) giving a new gPDE with the compatible presymplectic structure. However the degree of this presymplectic structure equals the spacetime dimension so that the resulting BV-like formulation induced on the quotient of the space of supersections is naturally interpreted as the BFV one~\cite{Grigoriev:2022zlq}, see also~\cite{Grigoriev:2025vsl}. This extends the AKSZ approach to boundary structure to the case of general gauge theories. It is important to stress, however, that in contrast to AKSZ setup it is not a priori guaranteed that the boundary BFV system obtained this way is a correct BFV formulation of the system in question and some extra regularity conditions are to be imposed. 

In this work we are concerned with gauge theories on spacetimes with asymptotic boundaries. The asymptotic boundary, also known as boundary at infinity, is not a submanifold of the initial spacetime but is a boundary of its compactification i.e. a spacetime with boundary such that the initial spacetime is identified with the interior of the compactified spacetime. Given a compactified spacetime, the asymptotic boundary can be treated as a usual one. The important subtlety, however, is that if we are interested in a gauge theory its extension to the compactified space-time is not given by default and has to be decided and constructed. Moreover, the choice of space-time compactification  typically involves space-time metric which itself can be a dynamical field of the theory under consideration. All in all this leads to the notion of  compactification of a gauge theory.

It turns out that the language of gauge PDEs is well-suited not only to study boundary structures of gauge theories but also to study their compactification. More precisely, we introduce a notion of a gauge PDE with asymptotic boundary. This is roughly speaking a $Q$-bundle where both the spacetime and the fiber are manifolds with boundaries and solutions are by definition sections whose restrictions to the boundary of spacetime belong to the boundary of the field space. In plain terms, this means that among the fields there is one describing a boundary defining function while gauge transformations describe its ambiguity.\footnote{In the context of gravity closely related field-theoretical systems were studied, for example, ~\cite{Friedrich:1981wx,friedrich1983cauchy,Kroon:2016ink}.} Such a system naturally defines at least two gauge PDEs. The first one, the interior, is obtained by restricting to the interior of the base followed by the restriction to the interior of the resulting total space. The second one is the boundary of the restriction of the initial bundle to the boundary of the base. The first one is an equivalent of the initial system to be compactified while the second one is regarded as the gPDE describing the corresponding boundary structure. In slightly different terms the notion of gPDE with asymptotic boundaries was  already in~\cite{Grigoriev:2023kkk,Grigoriev:2025gvk}, see also~\cite{Bekaert:2012vt} where the related considerations were put forward using the ambient space formalism.

Although  gPDEs with asymptotic boundary give a natural setup to study compactifications and boundary structures associated to asymptotic boundaries, the extension to Lagrangian gauge theories appears to be subtle and at the same time interesting. The point is that while the compactification at the level of equations of motion can be almost systematically performed for usual models of gravity and gauge fields, the presymplectic structure encoding the Lagrangian is usually polynomially divergent at the boundary and does not determine a presymplectic structure for the boundary theory in a natural way. Nevertheless, as we show in this work this difficulty can be overcome systematically using solely the invariant geometrical tools. More precisely, we demonstrate that any $Q$-cocycle that diverges polynomially at the boundary generally gives rise to a pair of $Q$-cocycles
of the boundary gPDE. One of the same ghost degree (called renormalised) and that of degree 1 lower (called anomaly); both having the same form-degree. While the appearance of the renormalised cocycle is expected, the anomaly cocycle is less obvious and the mechanism it arises is somewhat similar to the residue map in the context of b-geometry~\cite{melrose1993atiyah}, where the de Rham cocycle in the bulk defines a de Rham cocycle on the singular boundary, which is 1 form-degree lower than the initial cocycle.

We demonstrate that when applied to the presymplectic structure on gPDE with asymptotic boundary  the corresponding renormalised presymplectic structure on the boundary gPDE defines a BFV system, which can be considered as a counterpart of the boundary BFV system in the non-asymptotic case. At the same time the anomaly, if nontrivial, defines a new BV system on the boundary. In the case of asymptotically-AdS space the corresponding action functional coincides with the well-known holographic Weyl anomaly which in the usual AdS/CFT considerations emerges as a log-divergent piece of the effective action. For instance, for the critical scalar of Weyl weight $w=\ell-d/2$, $\ell=1,2,\ldots$ on $d+1$-dimensional AdS space, the action determined by the anomaly presymplectic structure is precisely the 1st order action of the higher-order singleton field on the boundary, whose equations of motion in Cartesian coordinates read as $\Box^\ell \varphi=0$. 

As applications of the general framework proposed in the paper we consider scalar field and Maxwell field on  AdS and Minkowski space and explicitly derive their boundary structure, including the renormalised and anomaly presymplectic structures and their associated BFV and BV systems. While in the AdS case the resulting objects are known in one or another approach, in the case of Minkowski space and its asymptotic boundary at null-infinity the full-scale gauge PDE description of the boundary structure was not known, see however~\cite{Grigoriev:2023kkk}, and the analogs of holographic Weyl anomalies were not studied systematically. In this work we fill this gap by explicitly constructing a gPDE with asymptotic boundaries describing the compactification of flat Minkowski space together with the scalar and the Maxwell field defined on its background.  Moreover, we compute the renormalised and the anomaly presymplectic structures on the boundary and explicitly derive the action functional determined by the anomaly. It turns out that the latter gives an infinite tower of manifestly conformally-Carroll invariant theories which generalize the magnetic limits of the scalar and Maxwell fields~\cite{Henneaux:2021yzg}. 

Although the examples explicitly discussed in this work are limited to constant curvature backgrounds (or, better, Cartan-flat ones) the general  approach perfectly applies to full-fledged gravity and gauge fields defined on its background. More precisely, in the present terms the construction of~\cite{Grigoriev:2025gvk} is the gauge PDE with boundaries for asympotically-AdS gravity and gauge fields on its background considered at the level of equations of motion while its analog from~\cite{Grigoriev:2023kkk} can be readily completed to the gauge PDE with asymptotic boundaries for asymptotically flat gravity. As for the presymplectic structures for these theories and their associated structures on the boundary, we leave their analysis for a future work.

The paper is organized as follows: Section~\bref{sec:prelim} is a brief recap of the gauge PDE approach and its presymplectic extension. It also explains the generalisation to gauge PDEs over background and the gauge PDE description of boundary structure in the case of usual (as opposed to asymptotic) boundaries. The general construction is presented in Section~\bref{sec:general}. In particular, there we introduce the basic notions of the approach such as $Q$-manifolds with $Q$-boundaries and gauge PDEs with asymptotic boundaries and define the map sending polynomially divergent $Q$-cocycles in the bulk to pairs of $Q$-cocycles on the boundary.  The applications to fields on AdS space are given in Section~\bref{sec:AdS}. The important ingredient is the description of the compactified AdS space as a gauge PDE with asymptotic boundaries, which can be also identified as the vanishing-curvature limit of the corresponding gPDE from~\cite{Grigoriev:2025gvk}. Section~\bref{sec:flat} is devoted to the boundary structure of gauge fields on Minkowski space on its boundary at  null-infinity. Just like in the AdS case, we describe the compactified Minkowski as a gauge PDE with asymptotic boundary
and derive a family of Carroll-invariant theories on the boundary determined by the anomaly presymplectic structure.

\section{Preliminaries}\label{sec:prelim}

\subsection{Gauge PDEs} \label{prelim:gPDE-backgr}
To describe local gauge theories we use the language of gPDEs, which heavily relies on the notion of $Q$-manifolds and $Q$-bundles. A $Q$-manifold is a $\mathbb{Z}$-graded manifold equipped with a homological vector field $Q$, i.e. degree 1 vector field satisfying $Q^2 = 0$. Following the tradition in the literature on BV formalism we denote the degree by $\gh{\cdot}$, e.g. $\gh{Q}=1$. Although general considerations involve $\fZ$-grading all the examples considered in this work are limited to nonnegative degree so one can safely assume that the degree is nonnegative.

Natural morphisms between $Q$-manifolds, called $Q$-maps, are degree preserving maps which are compatible with the $Q$ structure, i.e. $\phi: (M,Q) \xrightarrow{} (N,q)$ is a $Q$-map iff $Q\phi^* = \phi^* q$. A $Q$-bundle is a bundle where both the total space and the base are $Q$-manifolds and  the projection is a $Q$-map~\cite{Kotov:2007nr}. Recall that this assumes that both the total space and the base are $\fZ$-graded and $\pi$ is also compatible with the degree.

The basic example of a $Q$-manifold is the shifted tangent bundle $T[1]X$ of a real manifold $X$. The algebra of functions on $T[1]X$ can be identified with the algebra of differential forms on $X$ and under this identification the de Rham differential on $X$ determines a canonical $Q$ structure $\dx$ on $T[1]X$ while the form-degree gives a $\fZ$-grading. If $\{x^\mu, \theta^\mu\}$ are local coordinates on $T[1]X$ such that $x^\mu$ are coordinates on the base $X$ and $\theta^\mu$ are the induced coordinates on the fibers then the canonical $Q$ structure has the form $\mathrm{d}_{X} = \theta^\mu \ddl{}{x^\mu}$ and $\gh{x^\mu}=0$, $\gh{\theta^\mu}=1$.

A $Q$-manifold $(E,Q)$ is called an equivalent reduction of $(E',Q')$ iff $(E',Q')$ is a locally-trivial $Q$-bundle
over $(E,Q)$ with a contractible fiber. In other words, there exists a local trivialization such that  locally  $(E',Q')= (E,Q)\times (T[1]V, \mathrm{d}_V)$, where $V$ is a linear graded manifold. Two $Q$-manifolds $(M_1,q_1), (M_2, q_2)$ are called equivalent if there exists $(M,Q)$ such that $(M_1,q_1), (M_2, q_2)$ are equivalent reductions of $(M,Q)$. Further details can be found in e.g.~\cite{Grigoriev:2019ojp}.

\begin{definition}\cite{Grigoriev:2019ojp}
A gauge PDE is a $Q$-bundle $(E,Q) \xrightarrow{\pi} (T[1]X,\mathrm{d}_{X})$, where $X$ is interpreted as the spacetime manifold $X$ (manifold of independent variables).
    \begin{enumerate}
    \item Sections of $E$ are interpreted as field configurations.
    \item Solutions are $Q$-sections, i.e. sections $\sigma$ satisfying\footnote{By section we mean what is often called a degree preserving sections. These should not be confused with super-sections.} 
    \begin{equation}
        \mathrm{d}_{X} \sigma^* = \sigma^* Q\,.
    \end{equation}

    \item Gauge parameters are vertical vector fields $Y$, $gh(Y) = -1$. 
    
    \item Infinitesimal gauge transformations of a section $\sigma$ are given by
    \begin{equation}
        \delta \sigma^* = \sigma^* [Q,Y] \,.
    \end{equation}

    \item Gauge for gauge parameters are vertical vector fields $Y_k$, $k>1$, $\gh{Y_k}=-k$.  The corresponding infinitesimal transformations do not act on sections but act on (gauge-for) gauge parameters, e.g. $\delta_{Y_2} Y=\commut{Q}{Y_2}$. 
    \end{enumerate}
\end{definition}
The notion of equivalence of $Q$-manifolds naturally extends  to $Q$-bundles over $T[1]X$ giving the notion of equivalence of gPDEs~\cite{Grigoriev:2019ojp}. To exclude nonlocal systems it is also natural to require a gPDE to be equivalent to a local BV system. An important property of gPDEs is that they 
automatically contain a BV formulation of the underlying theory. More precisely, the restriction to $X\subset T[1]X$ of the bundle $SJ^\infty(E)\to T[1]X$ of jets of supersections of $E$ is naturally a local BV system at the level of equations of motion~\cite{Barnich:2010sw}. For further details on gPDEs see~\cite{Grigoriev:2019ojp}. 

\subsection{Gauge PDEs with compatible presymplectic structures}

If we are dealing with Lagrangian gauge systems in the gPDE approach, an extra structure on the gPDE is required.
\begin{definition}
    A compatible presymplectic structure on  a gPDE $(E,Q) \xrightarrow{\pi} (T[1]X,\mathrm{d}_{X})$ is a $2$-form $\omega$ of $gh(\omega) = k$, obeying 
    \begin{equation}
    \label{presimpcomp}
    \begin{gathered}
        d\omega = 0 \,, \qquad L_Q \omega \in \cI_{T[1]X}
    \end{gathered}
    \end{equation}
where $\cI_{T[1]X} \subset \bigwedge^\bullet(E)$ is the differential ideal generated by the forms $\pi^*\alpha, \alpha \in \bigwedge^{1}(T[1]X)$. Unless otherwise specified, it is assumed  $\gh{\omega}=\dim{X}-1$.
\end{definition}
Given a gPDE equipped with compatible presymplectic structure, these data determine a natural action functional on the space of sections of $E$. More precisely, we pick a presymplectic potential $\chi$  and covariant Hamiltonian $H$ such that 
\begin{equation}
\label{chiH-axioms}
\omega=d\chi\,, \qquad i_Q\omega+d H \in \cI_{T[1]X}\,.
\end{equation}
The existence of   $\chi$  and  $H$ is guaranteed locally by~\eqref{presimpcomp} and for $\dim X>1$ globally as well.
The ambiguity in $H$ is given by a function of the form $\pi^*h, h \in \cC^\infty(T[1]X)$ while that in $\chi$ by 
an exact 1 form. The action functional on the space of sections is given by:
\begin{equation}
\label{AKSZ-action}
S[\sigma]=  \int_{T[1]X} {\sigma}^*(\chi)(\mathrm{d}_{X}) + {\sigma}^*(H)\,,
\end{equation}
and is refereed to as the intrinsic action. In the case where $E$ is a graded-geometrical reformulation of a PDE equipped with the presymplectic current, \eqref{AKSZ-action} is precisely the intrinsic action introduced in~\cite{Grigoriev:2016wmk}. Of course in the case of $\dim{X}=1$ this is just the usual Hamiltonian action or its presymplectic generalisation~\cite{Souriau1970,Kijowski:1973gi,Gotay1979}.
\maxim{Correct citation}

It is easy to see that the ambiguity in $H$ results in adding a constant term to the above action while the ambiguity in $\chi$ gives boundary terms. Because we are interested in systems on manifolds with boundaries we consider $\chi$ and $H$ as part of the data defining the system and hence in what follows by gPDE with presymplectic structure we understand $(E,Q,T[1]X)$ together with $\chi,H $ satisfying \eqref{chiH-axioms}.

An alternative representation of the gPDE equipped with compatible presymplectic structure is as follows. One defines a $0$ form $l$ and $1$-form $\chi$ on $E$ such that 
\begin{equation}
\label{chi-l-axioms}
\omega=(d+L_Q)(\chi+l)+\alpha\,, \quad \alpha \in \cI^1_{T[1]X}\,,
\end{equation}
where $\cI^1_{T[1]X}$ denotes a subspace of 1-forms in $\cI_{T[1]X}$. It is a matter of direct computation that \eqref{chiH-axioms} and \eqref{chi-l-axioms} are equivalent via $l=i_Q\chi+H$. In particular $Ql=0$. Let us note that if $\sigma$ is a solution to $(E,Q,T[1]X)$ then the intrinsic  action \eqref{AKSZ-action} evaluated on $\sigma$ can be represented as:
\begin{equation}
    S[\sigma]=\int_{T[1]X}\sigma^*l\,.
\end{equation}

It is important to stress that the Euler-Lagrange equations of the intrinsic action are generally much weaker than the equations of motion encoded in the starting point gauge PDE. However, if one takes into account algebraic gauge transformations and auxiliary fields the two systems can be equivalent at the level of equations of motion. In this case, the presymplectic structure is referred to as complete. A more precise formulation of the equivalence of the BV system is done in terms of the underlying local BV systems, see~\cite{Dneprov:2024cvt,Dneprov:2025eoi} for more details.

A crucial property of gPDEs equipped with presymplectic structure is that they encode a BV formulation of the underlying systems. More precisely, let $S\Gamma(E)$ be the space of supersections and $ev: T[1]X \times S\Gamma(E) \xrightarrow{} E$ the corresponding evaluation map. The total space differential $Q$ acts on $S\Gamma(E)$ in a  natural way and hence defines a $Q$-structure $s$ therein. Moreover, the presymplectic structure determines a presymplectic structure on $S\Gamma(E)$ via
\begin{equation}
    \bar\omega = \int_{T[1]X} \ev^* \omega\,, \qquad \gh{\bar\omega}=\gh{\omega}-\dim{X}\,.
\end{equation}
Moreover, the data of $Q,\chi,H$ defines the following function on $S\Gamma(E)$ 
\begin{equation}
    S_{BV} = \int_{T[1]X} {\ev}^*(\chi)(\mathrm{d}_{X}) + {\ev}^*(H)\,.
\end{equation}
We refer to $S_{BV}$ as the presymplectic BV-AKSZ action or just BV-AKSZ action. It is clear that when restricted to section (as opposed to supersections) BV-AKSZ action coincides with~\eqref{AKSZ-action}.  If in addition $\bar\omega$ happens to be regular, the respective  symplectic quotient of $S\Gamma(E)$ acquires all the structures of the BV system. In particular, $(S_{BV},S_{BV})=0$ modulo boundary terms, where $(\cdot,\cdot)$ is the BV antibracket determined by $\bar\omega$.  The above arguments are a straightforward generalization of the AKSZ construction of BV action, see e.g.~\cite{Roytenberg:2006qz}.  A more detailed analysis is done using the super-jet bundle of $E$ and  gives an explicit derivation of the jet-bundle BV formulation of the underlying gauge theory and confirms its locality, see~\cite{Dneprov:2024cvt} for details.

Let us note that all the above considerations apply to the case where $\gh{\omega}$ is not necessarily $\dim(X)-1$. Of the prime importance is the case $\gh{\omega}=\dim(X)$. In this case $S[\sigma]$ given by \eqref{AKSZ-action}  vanishes identically for degree reasons but BV-AKSZ action $S_{BV}$ is naturally interpreted as the BRST charge of the BFV formulation. For instance, if $\Sigma\subset X$ is a Cauchy surface then the pullback of $(E,Q,T[1]X)$ equipped with $\chi,H$ to $T[1]\Sigma \subset T[1]X$ is again a gPDE with presymplectic structure but now $\gh{\omega}=\dim\Sigma$ and hence the analog of BV system  on the space of supersections of $E|_{T[1]\Sigma} \to T[1]\Sigma$ is naturally a BFV system, see~\cite{Grigoriev:2022zlq} for more details. Under certain regularity conditions this system can be shown to be equivalent to a correct BFV system describing the same gauge theory, see~\cite{Grigoriev:2022zlq} for more details.

\subsection{Background fields}

In this work, we are often interested in gauge fields coupled to a background geometry. The latter can be understood as a particular configuration of background fields. In the gPDE approach, background fields are again described by another gPDE $(B,\gamma,T[1]X)$, giving rise to the notion of a gPDE over a background~\cite{Dneprov:2025eoi}: namely, a bundle $(E,Q) \to (B,\gamma)$ where both $E$ and $B$ are bundles over $T[1]X$. Pulling back $E$ along a solution $\sigma_0: X \to B$, viewed as a submanifold in $T[1]X$, results in a usual gPDE $(\sigma_0^* E,Q,T[1]X)$, which describes the original gauge theory in the specific background $\sigma_0$. Informally, one can think of the bundle $(E,Q) \xrightarrow{} (B, \gamma)$ as a family of gPDEs parameterized by formal solutions of the background system $(B, \gamma)$.

Gauge PDEs over background are especially useful in describing gauge fields on homogeneous spaces and invariant under the underlying spacetime symmetry algebra (e.g. Poincar\'e or (anti) de Sitter). In this case the dependence of all the structures on spacetime coordinates  
is through the background fields only. This implies that locally the underlying gPDE over background takes the following factorised form:
\begin{equation}
\label{backg-factor}
(E,Q)=(E_0,Q_0)\times (T[1]X,\dx)\,,\qquad (B,\gamma)=(B_0,\gamma_0)\times (T[1]X,\dx)
\end{equation}
and $(E_0,Q_0)$ is a $Q$-bundle over $(B_0,\gamma_0)$. It is clear that in this case all the information about the system is encoded in the $Q$-bundle $(E_0,Q_0) \to (B_0,\gamma_0)$ so that one can effectively forget about the spacetime. Moreover, for fields defined on homogeneous spaces, $(B_0,\gamma_0)$ can be taken to be $(\algg[1],\dg)$, where $\algg$ is an underlying Lie algebra in which the respective flat Cartan connection 1-form takes values, see~\cite{Dneprov:2024cvt}.

Informally, the gPDE over $T[1]X$
describing the gauge fields in question is replaced by the
gauge PDE with the same fiber but the spacetime $T[1]X$ replaced by $(B_0,\gamma_0)$. Such description is very economical, more invariant, and captures the information of how fields transform under the underlying spacetime symmetry in a manifest way. This is precisely what we employ in our study of gauge fields on asymptotic boundaries of homogeneous spaces. Note that gauge PDEs describing gauge fields coupled to the gravity background also have the factorized structure~\eqref{backg-factor}.

The extension of gPDEs over background to
Lagrangian systems is a straightforward generalisation of gPDEs with presymplectic structures. One asks that $(E,Q) \xrightarrow{}(B,\gamma)$ be equipped with extra data $\omega,\chi,H$ obeying the following compatibility conditions: 
\begin{equation}
\begin{gathered}
    \omega = d \chi\,, \qquad  gh(\omega) = dim X - 1 \, ,\qquad     i_Q \omega + dH \in \cI_B
\end{gathered}
\end{equation}
analogous to~\eqref{chiH-axioms} with the only modification being that $\cI_{T[1]X}$ is replaced with $\cI_B$, i.e. the ideal generated by forms ${\pi_B}^*\alpha$, $ \alpha \in \bigwedge^{1}(B)$. 
Restricting to a solution $\sigma_0$ of the background gPDE, one obtains a gPDE with presymplectic structure. Moreover, the resulting intrinsic  action is invariant under the background gauge transformations which affect both the background fields and the dynamical ones. If a particular background solution is fixed, the residual symmetries are precisely the space-time symmetries of the system in this background.

In the case of fields on homogeneous spaces or gravitational backgrounds the presymplectic structure as well as $\chi$ and $H$ can be defined on $(E_0,Q_0)$ so that one can disregard the spacetime and work solely with the bundle $(E_0,Q_0)$.

\subsection{Example: Maxwell field on AdS space as a gPDE over background} \label{sec:example-M}
For further use, let us illustrate the approach recalled above in the case of Maxwell theory coupled to a flat AdS background. 

As the total space of the background gPDE we take $B= \mathfrak{g}_\Lambda [1] \times T[1]X \xrightarrow{} T[1]X$, where $\mathfrak{g}_\Lambda$ is the (A)dS symmetry algebra, and the total $Q$-structure factorizes 
\begin{equation}
    \gamma = \mathrm{d}_X + \mathrm{d}_{\algg}\,,
\end{equation}
where $\mathrm{d}_{\algg}$ is the Chevalley-Eilenberg differential of $\mathfrak{g}_\Lambda$, seen as a homological vector field $\mathfrak{g}_\Lambda [1]$. Introducing coordinates on $\mathfrak{g}_\Lambda [1]$ by  $\{\xi^a, \rho^{ab} \}$ corresponding to translations and Lorentz sectors, the action of $\gamma$ reads as
\begin{equation}
\begin{gathered}
    \gamma \xi^a =\xi^b\rho_{b}{}^{a}\,, \\
    \gamma \rho^{ab} = \rho^a{}_c \rho^{cb} - \frac{2\Lambda}{(D-1)(D-2)} \xi^a \xi^b \,,
\end{gathered}
\end{equation}
where $\Lambda$ is the cosmological constant and indices are raised and lowered using the flat metric.  Solutions to $(B,\gamma,T[1]X)$ are flat connections and gauge transformations are the usual gauge transformations of a connection. It goes without saying that $\sigma^*(\xi^a)$ is nondegenerate as it is interpreted as the background frame field (or soldering form in the Cartan geometry language).

To describe Maxwell theory on this background we  first describe the corresponding off-shell system, i.e. system, where Maxwell equations are not imposed but gauge transformations are given.
A minimal gauge PDE formulation of this system as well as the on-shell one is well-known  in one or another version e.g. from  local BRST cohomology considerations of Einstein-Yang-Mills theory, see e.g.~\cite{Barnich:1995ap} or unfolded formulation of AdS gauge fields~\cite{Lopatin:1987hz,Vasiliev:2005zu}, so here we only state the result: the fiber is a space of formal solutions to the Bianchi identity $\d_{[a}F_{bc]}$
taken at one space-time point and extended by an additional degree $1$-coordinate $C$ corresponding to the gauge potential and the gauge parameter. 
As fiber coordinates of $(E,Q)\xrightarrow{}(\mathfrak{g}_\Lambda[1]\times T[1]X,\gamma)$ we take $\{C,F_{ab},\dots\}$, where $F_{ab}=-F_{ba}$, $gh(C)=1$, $gh(F_{ab})=0$, and $\dots$ denotes the coordinates parametrising the higher jets of $F_{ab}$. Following \cite{Grigoriev:2025gvk}, we introduce the vector field
\begin{align}\label{nabla}
    \nabla_a=[Q,\frac{\partial}{\partial\xi^a}]\,.
\end{align}
In so doing we assume that degree $1$ coordinate functions $c,\xi^a,\rho^{ab}$ are fixed so that the vector field $\dl{\xi^a}$ is defined unambiguously in the sense that $\dl{\xi^a}$ doesn't depend on the choice of degree-$0$ coordinates. Functions $\nabla_{a_1}\dots\nabla_{a_n}F_{bc}$ together with $C$ form an overcomplete coordinate system on the fiber of $E$. A genuine coordinate system is obtained by picking an independent subset by e.g. taking only the totally symmetrized combinations $\nabla_{(a_1}\dots\nabla_{a_n}F_{b)c}$.

The advantage of working  in terms of overcomplete coordinates $\nabla_{a_1}\dots\nabla_{a_n}F_{bc}$ is that the action of $Q$ on these functions takes a particularly simple form. Namely, it is sufficient to define the action of $Q$ on $C$ and $F_{ab}$ and the rest follows from $[\nabla_a,Q]=0$ and \eqref{nabla}. More precisely, we take
\begin{equation}
\begin{split}
    Q C &= \half F_{ab}\xi^a \xi^b \,, \\ 
    Q F_{ab} &= \xi^c \nabla_c F_{ab} + \rho_a{}^c F_{cb} + \rho_b{}^c F_{ac}\,.
\end{split}
\end{equation}
It is important to note that the relations between the above overcomplete coordinates are consequences of the nilpotency of $Q$. In particular, $Q^2=0$
implies:
\begin{equation}
\begin{split}
\nabla_{[a}F_{bc]}&=0\,,\\
    [\nabla_{a},\nabla_b]F_{cd}&=\frac{2\Lambda}{(D-1)(D-2)}(g_{ca}F_{bd}-g_{cb}F_{ad}+g_{ad}F_{cb}-g_{bd}F_{ca})\,,
\end{split}
\end{equation}
where $g_{ab}$ is a constant metric. The remaining relations are obtained by $\nabla^a$ prolongations of the ones above. Note that nothing in these considerations involved $T[1]X$ so that indeed $E=E_0 \times T[1]X$, where $E_0$ is a fiber bundle over $\algg_\Lambda[1]$ which encodes all the information about the theory.  
Finally, the Maxwell theory corresponds to the restriction to the zero locus of the ideal $\eI^\maxw$ generated by the $\nabla_a$-prolongations of
\begin{align}
    \nabla^bF_{bc}\,.
\end{align}

The compatible presymplectic structure is given by
\cite{Alkalaev:2013hta}
\begin{equation}\label{maxwell-presymp}
    \omega = d(\xi_{ab}\ms{D-2}F^{ab})dC \,,
\end{equation}
where $D = dim X$ and
\begin{align}
    \xi\ms{N}_{a_1\dots a_{D-N}}\equiv\frac{1}{N!}\epsilon_{a_1\dots a_{D-N}b_1\dots b_{N}}\xi^{b_1}\dots\xi^{b_{N}}\,.
\end{align}
One can explicitly verify that 
\begin{equation}
    i_Q \omega + d(-\half \xi\ms{D} F_{ab} F^{ab}) + d\xi(\dots) +\eI^\maxw= 0\,.
\end{equation}
Again, this presymplectic structure is a pullback of the presymplectic structure on $E_0 \to \algg[1]$.

Choosing a solution of the background $\sigma_0: T[1]X \xrightarrow{} \mathfrak{g}_\Lambda [1] \times T[1]X $, corresponding to a flat Cartan connection parameterized by 
\begin{equation}
    \sigma_0^*\xi^a \equiv e^a(x,\theta)\equiv  e^a_{\mu}(x)\theta^\mu\,, \qquad \sigma_0^*\rho^{ab} \equiv \rho^{ab}_{\mu}(x)\theta^\mu\,,
\end{equation}
one arrives at a gPDE $(\sigma_0^*E,Q,T[1]X)$ which is $(E,Q)$ pulled back by $\sigma_0$. Introducing the coordinates on its space of sections via 
\begin{equation}
    \begin{gathered}
        \sigma^*C \equiv A(x,\theta)\equiv A_\mu(x) \theta^\mu \,, \qquad F_{\mu\nu}\equiv e^a{}_{\mu} e^b{}_{\nu} \sigma^* F_{ab}\,,
    \end{gathered}
\end{equation}
the corresponding intrinsic action takes the form
\begin{equation}
    S[A, F| e^a, \rho^{ab}] = \int_X d^Dx\det(e)(2F^{\mu\nu}\partial_\mu A_\nu - \half F^{\mu\nu}  F_{\mu\nu}) \,.
\end{equation}
where indices are raised using the metric $\gamma_{\mu\nu}=e^{a}{}_{\mu}g_{ab}e^{b}{}_{\nu}$.
 Passing to the space of supersections and taking the symplectic quotient produces the corresponding BV formulation~\cite{Dneprov:2022jyn,Grigoriev:2022zlq}.

\subsection{Structures induced on submanifolds/boundaries}
\label{sec:boundary-induce}

A substantial advantage of the gauge PDE approach and its presymplectic version is that gauge PDEs behave well with respect to the restriction to spacetime submanifolds. More precisely, let $(E,Q) \xrightarrow{} (T[1]X, \mathrm{d}_{X})$ be a gPDE and  $b: \Sigma \xrightarrow{} X$ a spacetime submanifold. In particular, if $X$ is a manifold with boundaries one can take $\Sigma=\d X$. It is easy to see that $T[1]\Sigma$ is naturally a $Q$-submanifold in $T[1]X$ and moreover the restriction of $E$ to $T[1]\Sigma \subset T[1]X$ can be seen as a $Q$-submanifold in $E$. It follows that the pullback bundle $(b^*E, Q) \xrightarrow{} (T[1]\Sigma, \mathrm{d}_\Sigma)$ is again a gauge PDE.   What does this gPDE describe? Since every $Q$-section of $(E,Q) \xrightarrow{} (T[1]X, \mathrm{d}_{X})$ (i.e. every solution) restricts to a $Q$-section of $(b^*E, Q_b) \xrightarrow{} (T[1]\Sigma, d_\Sigma)$ this theory should be interpreted as the theory of possible submanifold (boundary) values of the  fields described by $(E,Q,T[1]X)$, see~\cite{Grigoriev:2022zlq,Grigoriev:2023kkk} for further details.

Although in this work we do not discuss systems with nontrivial boundary conditions, let us mention that this setup leaves a natural room for the boundary conditions. Namely, boundary conditions correspond to sub-gPDE $E_\Sigma$ of $(b^*E, Q,T[1]\Sigma)$ and the solutions of $(E,Q,T[1]X)$ satisfying the boundary conditions $(E_\Sigma, Q_\Sigma,T[1]\Sigma)$ are defined to be solutions of $(E,Q,T[1]X)$ such that their restrictions to $T[1]\Sigma$ solve $(E_\Sigma, Q_\Sigma,T[1]\Sigma)$. Note that the boundary conditions defined in this way are the generalized ones because they can restrict not only fields but also gauge parameters,  see~\cite{Grigoriev:2023kkk} for more details and examples. For a discussion of boundary conditions in the AKSZ setup see e.g.~\cite{Pulmann:2019vrw}.

If we are interested in Lagrangian systems, the gPDE in question comes equipped with a compatible presymplectic structure $\omega$. Its pullback to $b^*E$, which we denote by $\Omega_\Sigma$, makes $(b^*E,Q,T[1]\Sigma)$ into a gauge PDE with shifted presymplectic structure. Indeed,
if $\dim{X}-\dim{\Sigma}=k$ then the degree of the presymplectic structure $\int_{T[1]\Sigma} \ev^* \Omega_\Sigma$ induced on the space of supersections $S\Gamma(b^*E)$ is $k-1$. In particular, if 
$\Sigma$ is of codimension $1$ (e.g. $\Sigma$ is a Cauchy surface) this is a degree $0$ presymplectic structure and it is naturally interpreted as a BFV presymplectic structure. Moreover, under some regularity assumptions,\footnote{which are not fulfilled automatically for e.g. the minimal formulation of gravity~\cite{Grigoriev:2020xec}, see also \cite{Canepa:2020ujx}.} the shifted BV system determined by $(b^*E,Q,T[1]\Sigma)$ with $\Omega_\Sigma$ on the symplectic quotient of $S\Gamma(b^*E)$ is the BFV-like formulation of the starting point system.

As an illustration of how a gauge PDE with presymplectic structure induces its BFV formulation when pulled back to a Cauchy surface let us get back to the example of the Maxwell theory given in Section~\bref{sec:example-M} and for simplicity assume that the cosmological constant vanishes. Assuming that we work in Cartesian spacetime coordinates $x^a=\{x^0,x^i\}$ we set background fields to the simplest particular solution $\sigma_0^*(\xi^a)=\theta^a$ and $\sigma_0^*(\rho^{ab})=0$. Then following~\cite{Grigoriev:2022zlq} we pull back $E$ to the submanifold singled out by $\theta^0=0$, $x^0=0$. The $Q$ structure and the presymplectic structure now reads as
\begin{equation}
\begin{gathered}
    Q C = \half F_{ij}\theta^i \theta^j\,, \\ 
    Q F_{ab} = \theta^i \nabla_i F_{ab} 
\end{gathered}
\end{equation}
and  
\begin{equation}\label{maxwell-presymp2}
    \omega_\Sigma = d(\epsilon_{ijk}\theta^j\theta^k F^{0i})dC \,, \qquad \epsilon_{ijk}=\epsilon_{0ijk}\,,
\end{equation}
where for simplicity we assumed $\dim{X}=4$. It is a matter of a straightforward computations to check that the induced presymplectic structure on the space of supersections $S\Gamma(b^*E)$ is given by: 
\begin{equation}
   \int_{\Sigma} \delta c \wedge \delta J^i_i + \delta A_i \wedge \delta \pi^i\,,
\end{equation}
where the coordinates on $S\Gamma(b^*E)$ are introduced as follows:
\begin{equation}
\begin{gathered}
    \ev^* C = c(x) + A_i(x) \theta^i + F^*_{ij}(x) \theta^i \theta^j + ... \\
    \ev^* F^{0i} = \pi^{i}(x) + J^{i}_m (x) \theta^m + J^{i}_{mn} (x) \theta^m \theta^n + ...
\end{gathered}
\end{equation}
The presymplectic structure is obviously regular and the respective symplectic quotient can be identified with the BFV phase space for the minimal BFV formulation of Maxwell field. The AKSZ-like functional reads as
\begin{equation}
    S_\Sigma = \int_\Sigma \pi^i \partial_i c 
\end{equation}
and is precisely the standard expression for the BFV-BRST charge.

\section{Gauge PDEs with asymptotic boundaries}
\label{sec:general}

\subsection[Q-boundaries]{$Q$-boundaries}

We first need to recall the notion of a graded manifold with boundary. A graded manifold $M$ can be defined by specifying its body $M_0$ (which can be assumed a smooth manifold) together with the sheaf of graded commutative algebras on $M_0$, see e.g.\cite{Deligne:1999qp} for further details.  Replacing $M_0$ with a real manifold with boundary $\d M_0$ and specifying a codimension-$1$ submanifold $\d M$ such that its body coincides with $\d M_0$ one gets the desired notion. $\d M$ is referred to as the boundary of the graded manifold $M$.

In applications to gauge theories and, first of all, theories on manifolds with boundaries we are interested in generalised boundaries. Namely, let $M$ be a graded manifold whose body $M_0$ is a real manifold with boundary. A generalised boundary of $M$ is a submanifold $N \subset M$ such that $\body{N}=\d M_0$, where $\d M_0$ is the boundary of $M_0$. In this work we assume that on $M$ there are no coordinates of vanishing degree besides those originating from  $M_0$.

Given a graded manifold $M$ with a generalised boundary $N \subset M$ one can define the interior
of $M$ by restricting the structure sheaf to the interior $Int(M_0)$ of the body. In particular,
if $M$ is a manifold with a (generalised) boundary $N$ the interior does not depend on the choice of $N$ (of course by definition $\body{N}$ coincides with $\d (\body{M})$).
 \footnote{ To illustrate the concept of generalised boundary consider the following  example: $M=\fR_{\geq 0}\times \fR[1]\times \fR[-1]$, where $\fR_{\geq 0}$ is the halfline $\Omega \geq 0$. One can define distinct (generalised) boundaries, e.g.:
\begin{equation}
\begin{aligned}
 \d M:&\qquad \Omega=0\\
N_1:& \qquad \Omega=0\,, \quad \xi=0\\
N_2:& \qquad \Omega=0\,, \quad \xi=0\,, \quad \rho=0\,,\\
N_3:& \qquad \Omega+\xi\rho=0\,, 
\end{aligned}
\end{equation}
where $\xi$ and $\rho$ are coordinates on $\fR[1]$ and $\fR[-1]$ respectively. All of the above submanifolds are (generalised) boundaries whose bodies is just a point $\Omega=0$ of $\fR_{\geq 0}$. Note that $\d M$ and $N_3$ are usual (not generalised) boundaries but they are different, though isomorphic. Note also that if $M$ is non-negatively graded there is only one boundary which is determined by the usual boundary of the body. Indeed, by the Batchelor theorem  a nonegatively graded manifold is associated to a vector bundle over its body so restricting this bundle to the boundary of the body gives a vector bundle associated to the boundary. Let us mention that in this example the interior is given by $M=\fR_{>0}\times \fR[1]\times \fR[-1]$, where $\fR_{>0}$ is the interior of $\fR_{\geq 0}$, irrespective of the choice of (generalised) boundary.}

For our purposes the following specific version of generalised boundary is sufficient:
\begin{definition}
Let $(M,Q)$ be a $Q$-manifold with boundary, i.e. a graded manifold with boundary equipped with a homological vector field $Q$. A submanifold $\d^Q M\subset M$ of codimension $(1,1)$ is called a $Q$-boundary if $Q$ is tangent to $\d^Q M$, $\body{\d^Q M}=\d M_0, M_0=\body{M}$, and in a suitable  local coordinate system $(\Omega, \xi, z^A)$ near the boundary, $\d^Q M$ is singled out by 
\begin{equation}
        \Omega = 0\,, \qquad \xi = 0
\end{equation}
and, moreover, 
\begin{equation}
        Q\Omega = \xi\,.
\end{equation}
\end{definition}
Without loss of generality one can assume that $\Omega$ is a globally defined function such that 
its pullback to $M_0$ is positive on the interior $Int(M_0)$.
According to the above definition, $\Omega$ vanishes on $\d^Q M$ and hence on $\d M_0$ so that it is natural to call $\Omega$ a boundary defining function.

A basic and very useful example of a manifold with $Q$-boundary is as follows: let $M_0$ be a usual smooth manifold with boundary $\d M_0$. Then $M=T[1]M_0$ equipped with the de Rham differential seen as a vector field $\mathrm{d}_{M_0}$, is naturally a $Q$-manifold with $Q$-boundary given by $T[1]\partial M_0$. In what follows we denote the embedding $\d^Q M \hookrightarrow{} M$ by $b$ so that  
    \begin{equation} \label{embed-map}
        b^*\Omega = b^*\xi = 0\,.
    \end{equation}
Because $Q$ is by definition tangent to $\d^Q M$, $b$ is a $Q$-map and hence\maxim{corrected}
\begin{equation}
    Q b^* \alpha = b^* Q \alpha \,, \quad \forall \alpha \in \cC^\infty(T[1]M) \,.
\end{equation}

Given a manifold $M$ with (generalised) boundary
it is easy to see that $Int(M)$ does not depend on the choice of the generalised boundary (of course we assume that $M_0$ and its boundary $\d M_0$ are fixed). Note that in the case of a $Q$ manifold the interior is naturally a $Q$-manifold as well. We say that $\tilde{\alpha} \in \bigwedge^\bullet Int(M)$ extends to the boundary if $\tilde{\alpha} = \alpha|_{Int(M)}$ for some $\alpha \in \bigwedge^\bullet M$.

\subsection{Gauge PDEs with asymptotic boundaries}

Let us recall Penrose's definition of an asymptotically simple space \cite{Penrose:1962ij}: $(\tilde X,\tilde g)$,
where $\tilde g$ is a pseudo-Rie\-man\-ni\-an metric, is called asymptotically simple if there is another pseudo-Riemannian manifold $(X,g)$ with boundary such that  $\tilde X=Int(X)$ and on $Int(X)$ one has: $g=\Omega^2  \tilde g$ for some $\Omega \in \cC^\infty(X)$ satisfying  $\Omega=0$, $d\Omega \neq 0$ on $\d X$ and $\Omega>0$ on $Int(X)$.  Strictly speaking, there is also an additional completeness condition but it can be ignored in the present context. If in addition $\tilde g$ is subject to the vacuum Einstein equation the space-time is called asymptotically flat or (A)dS depending on the value of cosmological constant.

The idea is to think of metric $\tilde g$ as a particular solution of a certain gauge theory on $\tilde X$ and try to reformulate the notion in terms of the theory itself rather than its particular solution. In particular, this implies that there is a gauge theory defined on $X$ such that its restriction to $Int(X)$ is equivalent to the starting point theory. This vague idea can be formalised using the language of gauge PDEs as follows:

\begin{definition}\label{def:as-gPDE}
A gPDE $(E,Q,T[1]X)$ is called a gPDE with asymptotic boundary if
\begin{itemize}
    \item $X$ is a real manifold with boundary $\d X$.
    \item There is a $Q$-subbundle, denoted $\d_v^Q E \subset E$, such that its restriction to $T[1](Int(X))$ is a $Q$-boundary of $E|_{T[1](Int(X))}$
    and at the same time its restriction to $T[1]\d X$ is a $Q$-boundary of $E|_{T[1]\d X}$.
  \end{itemize}
  
    Let $(E,Q,T[1]X)$ be a gPDE with asymptotic boundary, 
\begin{itemize}
    \item 
We call  $(\d^Q(E|_{T[1]\d X}),Q,T[1]\d X)$ the induced boundary gPDE.
    \item

Gauge PDE $(Int(E|_{T[1](Int(X))}),Q,T[1](Int(X)))$ is denoted $Int(E)$ and called the interior of $(E,Q,T[1]X)$. 

\item
A section $\sigma: T[1]X\to E$ is a solution if $\sigma^*\circ Q=\dx \circ \sigma^*$
and its restriction to $T[1]\d X$ is a solution of the induced boundary gPDE $\d^Q (E|_{T[1]\d X})$ and the restriction of $\sigma$ to $T[1](Int(X))$ is a solution to $Int(E)$. Moreover,
$\dx \sigma^*(\Omega)$ does not vanish at $\d X \subset  T[1]X$, where $\Omega$ is a boundary defining coordinate on the fiber.
\end{itemize}
\end{definition}

In the case when the bundle $E$ of gPDE with asymptotic boundary is a globally trivial bundle this definition simply means that both fiber and the base possess a $Q$-boundary. The conditions on the solutions ensure that pullback by a solution of the boundary defining coordinate on the fiber gives a boundary defining function on the spacetime. 

\begin{definition}
Let $(E,Q,T[1]X)$ be a gPDE with asymptotic boundary.
\begin{enumerate}
\item A vector field on $E$ is called compatible if it is projectable, tangent to $\d_v^Q E \subset E$ and its projection is tangent to $\d^Q(T[1]X)$.

\item A symmetry is a compatible vector field $V$ on $E$ satisfying in addition $[Q,V]=0$.\footnote{Usually we require $\gh{V}=0$ but $V$ of nonvanishing degree are also of interest.}

\item A gauge parameter is a compatible vector field of degree $-1$. 

\item A gauge symmetry is a vector field of the form $V=[Q,Y]$, where $Y$ is a gauge parameter. It is clear that gauge symmetries  form a Lie subalgebra of all symmetries.

\end{enumerate}
\end{definition}

Given a (gauge) symmetry $V$, it naturally defines an infinitesimal  transformation of sections: 
\begin{equation}
\delta_V \sigma^*= \sigma^* \circ V-v\circ \sigma^*\,, \qquad v=\pi_*V\,.
\end{equation}
It is easy to check that it sends solutions to solutions. In a similar way one can handle gauge for gauge symmetries by taking gauge parameters of lower ghost degree, i.e. $\gh{Y_k}=-k, k>1$. The corresponding infinitesimal transformations do not act on sections but act on (gauge-for-) gauge parameters, e.g. $\delta_{Y_2} Y=\commut{Q}{Y_2}$.

\begin{definition} \label{def:compactification}
Let $\tilde X$ be a real manifold without boundary and $(\tilde E,Q,T[1]\tilde X)$ a gPDE. We say that a gPDE with asymptotic boundary $(E,Q,T[1]X)$ is its  compactification if
\begin{enumerate}
    \item $\tilde X =Int(X)$
    \item The interior (denoted as $(Int(E),Q,T[1] Int(X))$) of $(E,Q,T[1]X)$ restricted to \\ $T[1](Int(X))$ is equivalent to $(\tilde E, Q,T[1]\tilde X)$ as a gPDE.
\end{enumerate}
\end{definition}

Any solution of $E$ can be restricted to a solution of $(Int(E),Q,T[1] Int(X))$ and hence determines a solution (not necessarily unique) to $(\tilde E,\tilde Q,T[1] \tilde X)$ thanks to the equivalence. The converse is generally not true as a given solution of $\tilde E$ could diverge near the boundary. 

Similarly, any symmetry of $E$ defines a symmetry of $\tilde E$ but the converse is not true in general. For instance, a generic symmetry of $\tilde E$
does not necessarily arise as a restriction of a vector field from $E$. We also emphasize that for the given gPDE $(\tilde E,Q,T[1]\tilde X)$ there may exist different compactifications. In the examples presented later in this text, the choice of compactification reflects the choice of asymptotic behavior of the fields.

For our purposes it is often enough to restrict to gPDEs which are globally-trivial as $Q$-bundles. In this case the notion of a gPDE with asymptotic boundary simplifies as follows. Namely,  $(E,Q,T[1]X)$ is globally trivial, i.e.
\begin{equation}
\begin{gathered}
    (E,Q,T[1]X)=(T[1]X,\dx)\times (F,q)
\end{gathered}
\end{equation}
and $(F,q)$ is a $Q$-manifold with $Q$-boundary. In this case it is also natural to restrict ourselves to the situation where the initial gauge PDE is also globally-trivial, i.e. is given by $(T[1] \tilde X,\mathrm{d}_{\tilde X})\times (\tilde F,q)$
and $(\tilde F,q)$ is equivalent to $(F,q)$. Of course, it goes without saying that $\tilde X=Int(X)$.

\begin{rem}

    Suppose that $F$ is parameterized by coordinates $\{{z}^A,\Omega, \xi\}$
    and $\d^Q F$ is singled out by $\Omega=0,\xi=0$. Restricting these coordinates to $Int(F)$ gives the coordinate system therein. Let $\sigma: T[1]X \to F$ be a solution, meaning that $\dx \circ \sigma^* =\sigma^*\circ q$ and $\sigma^*(\Omega)$ vanishes on $\d X$ and $\sigma^*(\Omega)>0$ on $Int(X)$. Note that $\sigma^*(\xi)|_{\d^Q(T[1]X)}=0$ thanks to the equations of motion.

    Now, let $\tilde z^A$ be a coordinate system on $Int(F)$ and we do not assume that $\tilde z^A$  arise as restrictions of some coordinates on $F$. The restriction of $\sigma$ to $Int(X)$
    can be seen as a map $T[1] \tilde X \to Int(F)$.
Note that $\sigma^*(\tilde z^A)$ do not generally extend to $T[1](\d X)$ and one might need to rescale $\tilde z^A$ by a power of $\Omega$ in order for $\sigma^*(\tilde z^A)$ to be well-defined on the boundary. This is obvious in the example where on $Int(F)$ the coordinates are related by
$z^A= \Omega^k \tilde{z}^A$. Because $F$ is a compactification of $Int(F)$, it is clear that the choice of compactification determines the behavior of fields of $Int(F)$ as $\Omega \to 0$.

\end{rem}

\begin{rem} \label{rem:asymptotic-for-implicit}
In applications it is often convenient to employ gPDEs defined implicitly. In the globally-trivial setup which we restrict ourselves to, this means that the gPDE is defined as a submanifold $Z_\eI$ in $(F,q)$, which is the zero locus of a $q$-invariant ideal $\eI \subset \cC^\infty(F)$. In what follows, we denote such an implicit realisation of $(Z_\eI,q)$ by $(F,q,\eI)$.

The respective modification of Definition~\bref{def:as-gPDE} is that both $\tilde  E$ and $E$ are equipped with $Q$-invariant ideals denoted, respectively, as  $\tilde \eI$ and $\eI$ such that their zero loci are $Q$-subbundles and moreover
$(\tilde E, Q, \tilde \eI, T[1]X)$ is equivalent to $(Int(F),Q,\eI^\prime)$ where $\eI^\prime$ is the ideal generated by $\eI$ restricted to $Int(F)$.
\end{rem}

\subsection[Induced Q-cocycles on the boundary]{Induced $Q$-cocycles on the boundary}\label{sec : renorm}

Assuming that we deal with a globally-trivial system the analysis reduces to that of the fiber $(F,Q)$ which is a $Q$-manifold with $Q$-boundary. What we are interested in now is the near-boundary analysis so that without loss of generality we can assume $F$ to be  a sufficiently small  neighborhood of its $Q$-boundary. In this case one can take $\Omega$ and $\xi\equiv Q\Omega$ to be 
globally-defined so that the $Q$-boundary is a submanifold singled out by $\Omega=0$ and $\xi=0$. 

In applications we often encounter objects which are well-defined on $\tilde F \equiv Int(F)$ only. Suppose that we are given a $Q$-cocycle $\tilde v\in \bigwedge^\bullet(\tilde F)$, i.e. $L_Q\tilde v=0$, such that $v=\Omega^p \tilde v$ extends to the boundary and hence can be seen as a form on $F$. It turns out that under some mild technical assumptions such a cocycle determines a pair of cocycles of the boundary system:
\begin{equation}
        \begin{gathered}
            v^\cR \in \bigwedge^\bullet(\d^Q F)\,, \qquad  gh(v^\cR) = gh(\Tilde{v})\,, \\ 
            v^\cA \in \bigwedge^\bullet(\d^Q F)\,, \qquad gh(v^\cA) = gh(\Tilde{v}) - 1
        \end{gathered}
    \end{equation}
whose cohomology classes in $(\bigwedge^\bullet(\d^Q F),L_Q)$ are determined by the cohomology class of $\tilde v$ in $(\bigwedge^\bullet(\tilde F),L_Q)$.

\def\md{\mathfrak{d}}
To construct these cocycles we assume that $F$ admits a vector field $\md$, $\gh{\md}=-1$ such that  
\begin{equation} \label{md-defining-properties}
\md \Omega=0\,, \qquad \md \xi=1\,, \qquad \md^2=0\,.
\end{equation}
Recall, that we assumed that the boundary defining coordinate $\Omega$ is a fixed globally-defined function and $\xi\equiv Q\Omega$. That such $\md$ exists locally is obvious by taking $\md\equiv \dl{\xi}$ in a local coordinate system $\{\Omega,\xi,z^A\}$.  Its global (meaning in a neighborhood of the boundary) existence is a simplifying assumption which is anyway fulfilled in all our examples. 
Having chosen $\md$,  one defines another  vector field
\begin{equation}
    \nabla = [Q, {\md}]
\end{equation}
obeying $[Q, \nabla] = 0$. It follows that
\begin{align}
    \nabla\Omega=1,\quad \nabla\xi=0.
\end{align}
For any $v \in \bigwedge^\bullet(F)$ we define 
\begin{equation}\label{eq-v(N)}
\begin{gathered}
    v^{(N)} \equiv (L_{\nabla})^N v\,, \qquad \Bar{v}^{(N)} \equiv  (L_{\nabla})^{N} L_{\md} v\,.
\end{gathered}
\end{equation}

We have the following:
\begin{theorem}\label{th-renorm}
    Let $(F,Q)$ be a $Q$-manifold with $Q$-boundary, $b: \d^Q F \hookrightarrow F$ and $\Tilde{v} \in \bigwedge^{\bullet}(\tilde F)$ be a $Q$-cocycle such that $v = \Omega^p \Tilde{v}$ smoothly extends to $\d^Q F$ (i.e. $v$ is a restriction to $Int(F)$ of an element from $\bigwedge^\bullet(F)$). Then 
    \begin{equation}
    \label{RA-map}
    \begin{gathered}
        v^\cR = b^* \frac{1}{p!} v^{(p)}\,, \\ 
        v^\cA = b^* (-1)^{tot(v)}\frac{1}{(p-1)!} \Bar{v}^{(p-1)}\,.
    \end{gathered}
    \end{equation}
    are $Q$-cocycles in $\bigwedge^\bullet(\d^Q F)$ and their cohomology classes do not depend on the choice of $\md$ and $\Omega$. Here $tot(v) = \fdeg{v} + \gh{v}$ is the total degree of $v$.
\end{theorem}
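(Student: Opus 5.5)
Throughout, write $L_Q,L_{\md},L_\nabla$ for the Lie derivatives on $\bigwedge^\bullet(F)$. They obey the graded identities $L_{[X,Y]}=[L_X,L_Y]$. In particular $[L_Q,L_{\md}]=L_\nabla$, $[L_Q,L_\nabla]=0$, $L_{\md}^2=0$ and $L_Q^2=0$, and also $[L_{\md},L_\nabla]=L_{[\md,[Q,\md]]}=\half L_{[[\md,\md],Q]}=0$.

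\textbf{Step 1: the bulk relation.} Start from $L_Q\tilde v=0$ on $\tilde F$ and multiply by $\Omega^p$. Using $Q\Omega=\xi$ this gives
\begin{equation*}
L_Q v=p\,\xi\,\Omega^{-1}v\quad\text{on } Int(F),\qquad\text{equivalently}\qquad \Omega L_Q v=p\,\xi\, v .
\end{equation*}
The second form involves only smooth objects, so by continuity it holds on all of $F$. This is the identity I will differentiate.

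\textbf{Step 2: apply $L_\nabla^N$ and $L_\nabla^{N}L_{\md}$.} Use $\nabla\Omega=1$, $\nabla\xi=0$, and the fact that $L_\nabla$ commutes with $L_Q$. Applying $L_\nabla^N$ to $\Omega L_Qv=p\xi v$ gives
\begin{equation*}
\Omega L_Q v^{(N)}+N L_Q v^{(N-1)}=p\,\xi\, v^{(N)} .
\end{equation*}
Set $N=p$ and pull back by $b$, using $b^*\Omega=b^*\xi=0$ and the fact that $b$ is a $Q$-map. This yields $L_Q b^*v^{(p-1)}=0$, which is not yet the claim. I will therefore use the relation at $N=p+1$ instead: its pullback gives $(p+1)L_Q b^*v^{(p)}=p\,b^*(\xi v^{(p+1)})=0$. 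Hence $v^\cR$ is closed.

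For $v^\cA$, first apply $L_{\md}$ to $\Omega L_Qv=p\xi v$. Since $\md\Omega=0$, $\md\xi=1$ and $L_{\md}L_Q=L_\nabla-(-1)L_QL_{\md}$ up to sign conventions, this gives
\begin{equation*}
\Omega\big(L_\nabla v\pm L_QL_{\md}v\big)=p\,v\mp p\,\xi L_{\md}v .
\end{equation*}
Next apply $L_\nabla^{p-1}$ and pull back. The surviving terms are $(p-1)\big(v^{(p-1)}\pm L_Q\bar v^{(p-2)}\big)$ on the left and $p\,v^{(p-1)}$ on the right, all under $b^*$. Doing the same with one more $\nabla$ gives $L_Q b^*\bar v^{(p-1)}$ expressed through $b^*v^{(p-1)}$-type terms, and these vanish by the $N$-relations. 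The bookkeeping of which $N$ produces the clean statement is routine once the two recursions are in hand. The sign $(-1)^{tot(v)}$ is fixed by commuting $L_{\md}$ (odd, degree $-1$) past the form.

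\textbf{Step 3: independence of choices.} Suppose $\md'$ is another vector field with the properties \eqref{md-defining-properties}. Then $\delta=\md'-\md$ satisfies $\delta\Omega=\delta\xi=0$, and I would show that the variations of $v^\cR$ and $v^\cA$ are $L_Q$-exact after pullback. For a shift of cocycle $\tilde v\to\tilde v+L_Q\tilde u$, linearity of the construction together with the commutation relations above shows that the images change by $L_Q$ of the images of $u$. For the same reason the construction is a chain map up to homotopy.

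For a change of boundary defining function $\Omega'=e^f\Omega$, the new $v'=e^{pf}v$ changes. I would treat this infinitesimally, $\Omega\to(1+\epsilon f)\Omega$, and show that the variations equal $L_Q$ of expressions built from $f$, $v^{(k)}$ and $\bar v^{(k)}$, up to multiples of the cocycles themselves. This last step is the main obstacle: the multiples must be controlled, and the interplay between $\md$, $\nabla$ and $f$ (with $\nabla$ itself depending on $\Omega$ via $\xi=Q\Omega$) needs care.

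I would first check the local model $\md=\partial_\xi$, $\Omega$ a coordinate, to find the explicit homotopy, and only then globalize.
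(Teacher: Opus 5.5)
Your Steps 1--2 follow the paper's argument for the cocycle property: prolong $\Omega L_Qv=p\xi v$ by $L_\nabla^N$ and by $L_\nabla^N L_{\mathfrak{d}}$, then pull back. The $v^\cR$ part is fine. The anomaly step, however, rests on a wrong reason. Pulling back $L^N_\nabla$ of $\Omega(v^{(1)}-L_Q\bar v)=p\,v-p\,\xi\bar v$ gives $N\,L_Q b^*\bar v^{(N-1)}=(N-p)\,b^*v^{(N)}$. At $N=p$ the $b^*v^{(p)}$ terms cancel identically, because the coefficient $N$ produced by $L^N_\nabla(\Omega\,\cdot)$ equals $p$. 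They do not ``vanish by the $N$-relations'': those relations only give $L_Qb^*v^{(N)}=0$, and $b^*v^{(p)}=p!\,v^\cR$ is generally nonzero. This exact coefficient cancellation is what singles out $\bar v^{(p-1)}$, and for $N\neq p$ it also shows that $b^*v^{(N)}$ is exact. So it is the crux of the argument, not routine bookkeeping.

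The genuine gap is Step 3. Independence of the classes from $\mathfrak{d}$ and $\Omega$ is half of the statement, and you only announce it; the $\Omega$ case is explicitly left as an open obstacle. The missing idea is that every non-exact piece of the variation is killed by a weighted prolongation of the same identity $\Omega L_Qv=p\xi v$.

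For the choice of $\mathfrak{d}$, interpolate $\mathfrak{d}_t=\mathfrak{d}+tX$ with $X=\mathfrak{d}'-\mathfrak{d}$. Then $X\Omega=X\xi=0$, $\nabla_t\Omega=1$, $\nabla_t\xi=0$ and $\dot\nabla_t=[Q,X]$, so
\[
\frac{d}{dt}\,b^*L^p_{\nabla_t}v=L_Q\,b^*\sum_{i=0}^{p-1}L^i_{\nabla_t}L_XL^{p-1-i}_{\nabla_t}v+b^*\sum_{i=0}^{p-1}L^i_{\nabla_t}L_XL^{p-1-i}_{\nabla_t}L_Qv\,.
\]
Applying $b^*\sum_{i=0}^{p}L^i_{\nabla_t}L_XL^{p-i}_{\nabla_t}$ to $\Omega L_Qv=p\xi v$ yields $(p+1)$ times the last sum on the left and $0$ on the right, so that sum vanishes. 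The $v^\cA$ case works the same way: apply $b^*\sum_{i=0}^{p-1}L^i_{\nabla_t}L_XL^{p-1-i}_{\nabla_t}L_{\mathfrak{d}_t}$ to the same relation. This produces exactly the identity needed to cancel the extra terms coming from $\frac{d}{dt}L_{\mathfrak{d}_t}=L_X$.

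For the choice of $\Omega$, take $\Omega_t=e^{t\alpha}\Omega$, $v_t=e^{tp\alpha}v$, with $\mathfrak{d}_t\xi_t=1$. The ``multiples of the cocycle'' you feared are the term $p\,b^*L^p_{\nabla_t}(\alpha v_t)$ coming from $\dot v_t$. Now use the same trick with $\dot{\mathfrak{d}}_t$ in place of $X$. The right-hand side is no longer zero, since $\dot{\mathfrak{d}}_t\xi_t=-\nabla_t(\alpha\Omega_t)$. Using $\sum_{i=j}^{p}C^j_i=C^{j+1}_{p+1}$, it equals $-p(p+1)\,b^*L^p_{\nabla_t}(\alpha v_t)$. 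Hence the non-exact sum equals $-p\,b^*L^p_{\nabla_t}(\alpha v_t)$ and cancels that term exactly.

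In both cases, integrating over $t\in[0,1]$ gives $L_Q$-exact differences. Your remark about shifting $\tilde v$ by $L_Q\tilde u$ is not needed for this theorem.
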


\begin{proof}
    On $\tilde F$ one has
    \begin{equation} \label{eqOmegalq}
        \Omega L_Q v = \Omega L_Q (\Omega^p \Tilde{v}) = p \xi v + \Omega^{p+1} L_Q \Tilde{v} = p \xi v
    \end{equation}
    Applying $L_{\nabla}^{N+1}$ to this equation one gets
    \begin{equation}
        (N+1) L_Q v^{(N)} + \Omega L_Q v^{(N+1)} = p \xi v^{(N+1)}
    \end{equation}
    Because $v \equiv \Omega^p \tilde v$ can be seen as a form on $F$ and $\nabla$ is well-defined on $F$ both sides of the above equations are forms on $F$. Therefore, applying $b^*$ and $b^*\circ Q=Q \circ b^*$ one gets 
    \begin{equation}
        L_Q b^* v^{(N)} = 0\,,\qquad \forall N \geq 0\,.
    \end{equation}
    Although $b^* v^{(N)}$ is a $Q$ cocycle for any $N$, we are going to see later that it is trivial unless $N=p$.
    
   Applying $L_{\md}$ to \eqref{eqOmegalq} one gets
    \begin{align} \label{eq:first-step}
        \Omega L_Q\bar v=\Omega v^{(1)}-pv+p\xi\bar v\,.
    \end{align}
    Then, acting  by $b^{*}(L_\nabla)^N$ one gets
    \begin{align}
        NL_Qb^{*}\bar v^{(N-1)}=(N-p) b^* v^{(N)}\,.
    \end{align}
    This equation can be read in two ways: on the one hand, it states that $b^*v^{(N)}$ is $Q$-exact $\forall N \neq p$, on the other hand it verifies 
    \begin{equation}
        L_Q b^{*}\bar v^{(p-1)} = 0\,,
    \end{equation}
    showing the first part of the statement. 

    The independence of the cohomology class of $v^\cR, v^\cA$ on the choice of $\md$ is shown in Appendix~\bref{app:cocycles-ind}.
\end{proof}

\begin{rem} \label{rem:p-choice-independ}
    The number $p$ featured in the above theorem in $v=\Omega^p\tilde v$ is not required to be minimal, the resulting expressions for   $v^\cR$ and $v^\cA$ coincide for any such $p$ satisfying the conditions of the above Theorem.
\end{rem}

    Indeed, if $\Omega^p \tilde{v}$ is smoothly prolonged to the boundary, then so is $\Omega^{p+1} \tilde{v}$. The expression for $v^\cR$ calculated using the $p$ expression is 
    \begin{equation}
        v^\cR(p) = b^* \frac{1}{p!}L_\nabla^p \Omega^p \tilde{v}
    \end{equation}
    while 
    \begin{equation}
        v^\cR(p+1) = b^* \frac{1}{(p+1)!}L_\nabla^{p+1} \Omega^{p+1}\tilde{v} = b^*(\frac{1}{p!}L_\nabla^p \Omega^p \tilde{v} + \Omega \frac{1}{(p+1)!}L_\nabla^{p+1} \Omega^{p} \tilde{v} )\,,
    \end{equation}
    where we used $L_\nabla^2 \Omega = 0$. The second term is automatically zero giving \newline $v^\cR(p+1) =  v^\cR(p)$. A similar calculation proves the claim for $v^\cA$.

Let us explore the general properties of the map determined by~\eqref{RA-map}. It is convenient to define the algebra  $\bigwedge^\bullet_{\fop}(Int(F)) \subset \bigwedge^\bullet(Int(F))$ of forms on $Int(F)$ that are allowed to have poles of finite order as $\Omega \to 0$. More precisely,
\begin{equation} \label{def:poly-forms}
\bigwedge^\bullet{}_{\fop}(Int(F)) = \{\tilde\alpha \in \bigwedge^\bullet(Int(F)) | \exists p \in \mathbb{N}, \alpha \in  \bigwedge^\bullet(F) \, :\, \Omega^p \tilde\alpha =\alpha|_{Int(F)} \}\,,    
\end{equation}
where ``fop'' refers to ``finite-order pole''. We note that this definition, as well as the minimal possible value of $p$, does not depend on the choice of the boundary-defining function $\Omega$.  Any smooth vector field on $F$ restricts to a vector field on $Int(F)$, which in turn acts by Lie derivative on  $\bigwedge{}^\bullet_{\fop}(Int(F))$. Then we define the following maps:
\begin{equation} \label{R-map}
\begin{gathered}
    \cR: \bigwedge^\bullet{}_{\fop}(Int(F)) \xrightarrow{} \bigwedge^\bullet(\partial^Q F)\,,\qquad  
    \cR(\tilde{\alpha}) = b^*(\frac{1}{p!}\alpha^{(p)})\,,
\end{gathered}
\end{equation}
\begin{equation} \label{A-map}
\begin{gathered}
    \cA: \bigwedge^\bullet{}_{\fop}(Int(F)) \xrightarrow{} \bigwedge^\bullet(\partial^Q F)\,,  \qquad
    \cA(\tilde{\alpha}) =   b^*(\frac{(-1)^{tot(\tilde{\alpha})}}{(p-1)!}\Bar{\alpha}^{(p-1)})
\end{gathered}
\end{equation}
where $p$ is any number featured in \eqref{def:poly-forms} and $tot(\tilde\alpha) = form(\tilde\alpha) + ghost(\tilde\alpha)$ is the total degree of $\tilde\alpha$.
Recall that we assumed that $\Omega$ and $\md$ are fixed and globally defined. Remark~\bref{rem:p-choice-independ} implies that the above maps do not depend on the choice of $p$. In the setting of Theorem~\bref{th-renorm} one has: $\alpha^\cR= \cR(\tilde{\alpha})$ and $\alpha^\cA=\cA(\tilde{\alpha}) $. Note that maps~\eqref{R-map} and \eqref{A-map} are generally defined for all forms from $\bigwedge^\bullet{}_{\fop}(Int(F))$, not only $Q$-cocycles. Moreover, the following lemma is a straightforward generalization of Theorem~\bref{th-renorm}. 
\begin{lemma} \label{lemma:map-interwines}
    Maps $\cR$ and $\cA$ both commute with $L_Q$ and $d$.
\end{lemma}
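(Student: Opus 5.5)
The plan is to reduce everything to two commutation facts and then follow the pattern of the proof of Theorem~\bref{th-renorm}: first for $L_Q$, then for $d$. Throughout, I use that $\cR$ and $\cA$ do not depend on the choice of $p$ (Remark~\bref{rem:p-choice-independ}). Given $\tilde\alpha$ with $\alpha=\Omega^p\tilde\alpha$ smooth, I would choose $p$ large enough that every form involved extends to the boundary.

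\textbf{Commutation with $d$.} Since $L_\nabla$ and $L_\md$ are Lie derivatives along vector fields, both commute with $d$, and $b^*d=db^*$. The subtlety is that $\Omega^p\, d\tilde\alpha \neq d\alpha$; instead
\begin{equation*}
d\alpha=p\,\Omega^{p-1}d\Omega\,\tilde\alpha+\Omega^p d\tilde\alpha .
\end{equation*}
So I would apply the maps to $d\tilde\alpha$ using the exponent $p+1$, writing
\begin{equation*}
\Omega^{p+1}d\tilde\alpha=\Omega\, d\alpha-p\, d\Omega\,\alpha .
\end{equation*}
I then apply $L_\nabla^{p+1}$ and pull back by $b^*$. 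Using $\nabla\Omega=1$ and $L_\nabla d\Omega=d(\nabla\Omega)=0$, the first term gives $(p+1)\,b^*(d\alpha)^{(p)}$ and the second gives $-p\, b^*d\Omega\,\alpha^{(p+1)}$. Here $b^*d\Omega=d\,b^*\Omega=0$, so after dividing by $(p+1)!$ one gets $\cR(d\tilde\alpha)=d\,\cR(\tilde\alpha)$. For $\cA$ I would do the same after applying $L_\md$. One uses $L_\md d\Omega=d(\md\Omega)=0$ and tracks the sign: $L_\md$ has odd total degree, and $d$ raises $tot$ by one, which is exactly what the factor $(-1)^{tot}$ is designed to compensate.

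\textbf{Commutation with $L_Q$.} This is the analogous computation with $Q\Omega=\xi$ in place of $d\Omega$. From
\begin{equation*}
\Omega^{p+1}L_Q\tilde\alpha=\Omega\, L_Q\alpha-p\,\xi\,\alpha
\end{equation*}
and the identities $[L_\nabla,L_Q]=0$ and $\nabla\xi=0$, applying $L_\nabla^{p+1}$ and $b^*$ gives
\begin{equation*}
(p+1)\,b^*L_Q\alpha^{(p)}-p\,b^*(\xi\,\alpha^{(p+1)}) .
\end{equation*}
The second term vanishes because $b^*\xi=0$. Since $b$ is a $Q$-map, this yields $\cR(L_Q\tilde\alpha)=L_Q\cR(\tilde\alpha)$. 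For $\cA$, one first applies $L_\md$ to the identity above, as in \eqref{eq:first-step}. Here $[L_\md,L_Q]=L_\nabla$ (graded commutator) and $\md\xi=1$ produce the extra terms $\Omega\,\alpha^{(1)}$ and $-p\,\alpha$. Then one applies $b^*L_\nabla^{p}$ and collects terms, exactly as in the proof of the theorem but without assuming $L_Q\tilde\alpha=0$.

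\textbf{Main obstacle.} I expect the main difficulty to be the bookkeeping in the $\cA$ cases. After pulling back, the $L_\nabla$-derivatives produce terms of the type $(N-p)\,b^*\alpha^{(N)}$ from the $-p\,\alpha$ and $\Omega\,\alpha^{(1)}$ contributions. One has to check that with $N=p$ these cancel exactly, rather than leaving an exact remainder, and that the prefactors $1/p!$ versus $1/(p-1)!$ and the signs $(-1)^{tot}$ match. I would verify this by the same expansion of $L_\nabla^{N}(\Omega\,\cdot)=\Omega L_\nabla^N+N L_\nabla^{N-1}$ used in the theorem's proof.
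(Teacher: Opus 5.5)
Your proposal is correct and is essentially the paper's own proof in Appendix~\bref{app:properties-maps}: the same identities $\Omega^{p+1}L_Q\tilde\alpha=\Omega L_Q\alpha-p\,\xi\alpha$ and $\Omega^{p+1}d\tilde\alpha=\Omega\,d\alpha-p\,d\Omega\,\alpha$ appear there (the paper writes them with $\beta=\Omega^{p-1}\tilde\beta$), with $L_\md$ inserted for $\cA$ and the extra terms killed by $b^*\Omega=b^*\xi=b^*d\Omega=0$. The $\cA$ bookkeeping you were worried about closes exactly. The $\Omega\alpha^{(1)}$ and $-p\alpha$ contributions give $p\,b^*\alpha^{(p)}-p\,b^*\alpha^{(p)}=0$, and the leftover $-p\,L_Q b^*\bar\alpha^{(p-1)}$ (resp.\ $-p\,d\,b^*\bar\alpha^{(p-1)}$, using $L_\md d=-dL_\md$ in the total-degree convention) combines with $1/p!$ and the sign $(-1)^{tot(\tilde\alpha)+1}$ to give precisely $L_Q\cA(\tilde\alpha)$ (resp.\ $d\,\cA(\tilde\alpha)$).
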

The proof is given in Appendix~\bref{app:properties-maps}.

In applications we often encounter equivalence classes of forms modulo certain ideals. Let $\cI \subset \bigwedge^\bullet (Int(F))$ be a $Q$-invariant differential ideal, i.e. $\cI$ is closed under both $d$ and $L_Q$. We say that $\cI$ is extendable if it can be generated by a set of forms $\tilde f_i$ that extend to $\d^Q F$, i.e. $\tilde f_i = f_i|_{Int(F)}$
for some $f_i \in \bigwedge^\bullet (F)$ and moreover $L_{\md}\cI \subset \cI$.
Any extendable $\cI$ defines a differential  ideal $\cI_{\fop} \subset \bigwedge^\bullet_{\fop}(Int(F))$ generated by $\tilde f_i$. At the same time, $\cI$ also defines a differential  ideal $\cI_{\partial} \subset \bigwedge (\partial^Q F)$ generated by $b^* f_i$, where $\tilde f_i = f_i|_{Int(F)}$.

\begin{lemma} \label{lemma:nice-ideal}
    Let $\cI$ be extendable. Then  maps $\cR, \cA$ are well-defined on the corresponding  quotient algebras:
    \begin{equation}
    \begin{gathered}
        \cR: \bigwedge^\bullet{}_{\fop}/\cI_{\fop} \xrightarrow{} \bigwedge^\bullet(\partial^Q F)/\cI_{\partial} \\
        \cA: \bigwedge^\bullet{}_{\fop}/\cI_{\fop} \xrightarrow{} \bigwedge^\bullet(\partial^Q F)/\cI_{\partial}
    \end{gathered}
    \end{equation}
\end{lemma}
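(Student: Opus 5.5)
To prove the lemma it suffices to show that $\cR$ and $\cA$ send $\cI_{\fop}$ into $\cI_\d$. Once this holds, linearity of the maps (they are built from $L_\nabla$, $L_\md$ and $b^*$, all linear, and by Remark~\bref{rem:p-choice-independ} a common $p$ can be chosen for two forms) gives well-definedness on the quotients. A general element of $\cI_{\fop}$ is a finite sum $\tilde\beta=\sum_i \tilde g_i\wedge \tilde f_i$, with $\tilde g_i\in\bigwedge^\bullet_{\fop}(Int(F))$ and $\tilde f_i=f_i|_{Int(F)}$ extendable generators. The differential closure needs no separate treatment: $d\tilde f_i$ again extends, and since $\cI$ is a differential ideal we may include the $d\tilde f_i$ among the generators. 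Choose $p$ such that $\Omega^p\tilde g_i=g_i|_{Int(F)}$ with $g_i\in\bigwedge^\bullet(F)$ for all $i$. Then $\beta\equiv\Omega^p\tilde\beta=\sum_i g_i\wedge f_i$ is a form on $F$, and the whole computation can be carried out on $F$.

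For $\cR$, expand $L_\nabla^p(g_i\wedge f_i)$ with the Leibniz rule ($L_\nabla$ is an even derivation, since $\gh{\nabla}=0$). This gives $\sum_k \binom{p}{k} L_\nabla^{p-k}g_i\wedge L_\nabla^k f_i$. The key auxiliary claim is that $L_\nabla^k f_i\in\cI$ for all $k$, i.e.\ that the ideal is preserved by $L_\nabla$. This follows from $L_\nabla=[L_Q,L_\md]$ together with $L_Q\cI\subset\cI$ and the extendability assumption $L_\md\cI\subset\cI$. One must check that these hold for the extended forms on $F$ and not only on $Int(F)$. Concretely, I would show that $L_\md f_i$ and $L_Q f_i$ can be written as combinations $\sum_j h_{ij}\wedge f_j$ with coefficients in $\bigwedge^\bullet_{\fop}$. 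Applying $b^*$ then yields a sum of terms $b^*(\text{form on }F)\wedge b^*(L_\nabla^k f_i)$. Each such term lies in $\cI_\d$, provided $b^*$ of any element of the $F$-ideal generated by the $f_i$ and their $L_\nabla,L_\md$-images lies in $\cI_\d$.

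The main obstacle is exactly this point. Relations like $L_\md f_i=\sum_j h_{ij}f_j$, which hold on $Int(F)$, may involve coefficients $h_{ij}$ with poles in $\Omega$. In that case $b^*$ cannot be applied termwise. I would handle this by multiplying through by a suitable power $\Omega^{q}$: if $\Omega^q L_\md f_i=\sum_j \hat h_{ij}f_j$ with $\hat h_{ij}$ smooth on $F$, then hitting this identity with $L_\nabla^q$ and then $b^*$ isolates $b^*L_\md f_i$ (up to the factor $q!$) modulo $\cI_\d$, using $\nabla\Omega=1$. This is the same mechanism as in the proof of Theorem~\bref{th-renorm}. Iterating this argument shows that $b^*(L_\nabla^kL_\md^\epsilon f_i)\in\cI_\d$ for all $k$ and $\epsilon\in\{0,1\}$.

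For $\cA$ the argument is parallel. Expand $L_\nabla^{p-1}L_\md(g_i\wedge f_i)$ with the graded Leibniz rule for the odd derivation $L_\md$. This produces terms of the form $L_\nabla^{a}L_\md^{\epsilon}g_i\wedge L_\nabla^{b}L_\md^{1-\epsilon}f_i$ with signs. Every factor built from $f_i$ is one of the forms just shown to restrict into $\cI_\d$, and every factor built from $g_i$ is smooth on $F$. Therefore $b^*$ of the whole expression lies in $\cI_\d$, and $\cA(\tilde\beta)\in\cI_\d$. The overall sign $(-1)^{tot}$ is irrelevant for membership in the ideal.
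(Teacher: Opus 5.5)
\paragraph{Comparison with the paper.} Your reduction is the same as the paper's. It suffices that $\cR,\cA$ send $\cI_{\fop}$ into $\cI_\partial$. One writes $\Omega^p\tilde\beta=\sum_i g_i\wedge f_i$ with $g_i$ smooth on $F$, uses that $L_Q$, $L_{\mathfrak{d}}$ and hence $L_\nabla=[L_Q,L_{\mathfrak{d}}]$ preserve the ideal, and then applies $b^*$. The paper's proof stops at this point.

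You are right that this only makes sense for the ideal $\cI_F\subset\bigwedge^\bullet(F)$ generated by the $f_i, df_i$ with coefficients smooth up to the boundary. Only $\cI_F$, and not the interior ideal, is mapped by $b^*$ into $\cI_\partial$.

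\paragraph{Your repair is circular.} From $\Omega^q L_\nabla f_i=\sum_j h_{ij}\wedge f_j$ with smooth $h_{ij}$ you get
\[
q!\,b^*L_\nabla f_i=\sum_j\sum_{m=0}^{q}\binom{q}{m}\,b^*(L_\nabla^{q-m}h_{ij})\wedge b^*(L_\nabla^{m}f_j).
\]
For $q\geq 1$, the terms with $m\geq 1$ are exactly the quantities $b^*L_\nabla^m f_j$ whose membership in $\cI_\partial$ is in question. The same happens for $L_{\mathfrak{d}} f_i$, so the iteration never gets started.

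\paragraph{The gap cannot be closed as stated.} The lemma fails under the stated hypotheses. Take $F$ with coordinates $\Omega\geq 0$, $\xi$, $z$, with $Q=\xi\,\partial_\Omega$ and $\mathfrak{d}=\partial_\xi$, so $\nabla=\partial_\Omega$. Let $\cI$ be the differential ideal on $Int(F)$ generated by $z$. It is $L_Q$- and $L_{\mathfrak{d}}$-stable, and it is equally generated by the extendable form $f=\Omega z$.
\begin{itemize}
\item For this generating set, $\cI_\partial$ is generated by $b^*f=0$, so $\cI_\partial=0$.
\item However, $\tilde z=\Omega^{-1}\tilde f\in\cI_{\fop}$, and $\cR(\tilde z)=b^*L_\nabla(\Omega z)=b^*z\neq 0$.
\item Here $\Omega L_\nabla f=f$, and your procedure returns only the tautology $b^*L_\nabla f=b^*L_\nabla f$.
\end{itemize}

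\paragraph{What is missing.} You need an extra hypothesis on the generators. A natural one is that $\cI_F$ is itself stable under $L_Q$ and $L_{\mathfrak{d}}$. This holds in the paper's examples, where the ideals are generated by base one-forms and by the $\nabla_a$-prolongations of the equations. With it the proof is one line, and it is exactly what the paper tacitly uses: $\Omega^p\tilde\beta\in\cI_F$, $L_\nabla$ and $L_{\mathfrak{d}}$ preserve $\cI_F$, and $b^*\cI_F\subset\cI_\partial$.
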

\begin{proof}
    The proof is relegated to Appendix~\bref{app:properties-maps}.
\end{proof}

\subsection{Interpretation of the induced cocycles}
We now provide an interpretation of the cocycle $v^\cR=\frac{1}{p!}b^{*} v^{(p)}$. First, note that on $Int(F)$ the form $v^{(p)}$ admits the following equivalent representation: 
\begin{prop}\label{prop:renorm-b}
    In the setting of Theorem~\bref{th-renorm} the following relation holds on $Int(F)$:
        \begin{align}\label{renorm-b}
    v^{(p)}=p!\tilde v+L_Q\left(\sum_{s=1}^{p-1}p(s-1)!\Omega^{-s}\bar v^{(p-s-1)}+\bar v^{(p-1)}\right)-p\frac{\xi}{\Omega}\bar v^{(p-1)}\,,
\end{align}
where, by convention, $\sum_{s=1}^{0}(\dots)=0$ for $p=1$.
\end{prop}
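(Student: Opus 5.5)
The plan is to obtain \eqref{renorm-b} from a downward recursion on the index $N$ in $v^{(N)}$, using only the algebraic relations $\md\Omega=0$, $\md\xi=1$, $\nabla\Omega=1$, $\nabla\xi=0$, $[Q,\nabla]=0$ and $\nabla=[Q,\md]$. The recursion starts at $v^{(0)}=v=\Omega^p\tilde v$ and runs up to $v^{(p)}$. Since we work on $Int(F)$, we may divide by $\Omega$ freely. We also use $L_Q\tilde v=0$ throughout.

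\emph{Step 1: a one-step identity.} Start from \eqref{eq:first-step}, i.e. $\Omega L_Q\bar v=\Omega v^{(1)}-pv+p\xi\bar v$, and apply $L_\nabla^N$. Because $L_\nabla$ commutes with $L_Q$, kills $\xi$ and satisfies $L_\nabla\Omega=1$, the Leibniz rule gives
\begin{equation*}
N L_Q\bar v^{(N-1)}+\Omega L_Q\bar v^{(N)}=(N-p)v^{(N)}+\Omega v^{(N+1)}+p\xi\bar v^{(N)}.
\end{equation*}
Next rewrite $\Omega L_Q\bar v^{(N)}=L_Q(\Omega\bar v^{(N)})-\xi\bar v^{(N)}$; the sign here is fixed by $Q\Omega=\xi$ and the parity of $\bar v$, and has to be tracked carefully. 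Dividing by $\Omega$ then expresses $v^{(N+1)}$ through three pieces: $\Omega^{-1}(p-N)v^{(N)}$, an $L_Q$-exact term, and a term of the form $\Omega^{-1}\xi\bar v^{(N)}$. Equivalently, the identity shows that $\Omega^{-1}v^{(N)}$ is cohomologous to a combination of $v^{(N+1)}$ and $\xi$-terms.

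\emph{Step 2: iterate.} Start at $N=p-1$ and substitute repeatedly downward. Each step produces one more factor $\Omega^{-1}$. The coefficients $(p-N)$ multiply to $p!$ at the bottom, and there $\Omega^{-p}v^{(0)}=\tilde v$, which produces the term $p!\tilde v$. The $L_Q$-exact pieces accumulate into $L_Q\bigl(\Omega^{-s}\bar v^{(p-s-1)}\bigr)$ and, after the $\Omega$ and $\Omega^{-1}$ factors cancel, into $L_Q\bar v^{(p-1)}$. The combinatorial prefactors such as $p(s-1)!$ come from the partial products $(p-N)\cdots$.

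\emph{Step 3: absorb the $\xi$-terms.} The iteration produces stray terms $\xi\,\Omega^{-s-1}\bar v^{(p-s-1)}$ with $s\ge1$. I will eliminate them using
\begin{equation*}
L_Q(\Omega^{-s}\beta)=-s\,\xi\,\Omega^{-s-1}\beta+\Omega^{-s}L_Q\beta,
\end{equation*}
together with the Step 1 identity, which expresses $L_Q\bar v^{(N)}$ in terms of $v$'s, $\bar v$'s and $\xi$. The goal is that all such terms telescope, leaving only the single leftover $-p\,\xi\,\Omega^{-1}\bar v^{(p-1)}$, which cannot be absorbed. It is cleanest to organise this as an induction on $p$: check $p=1$ directly, which is just \eqref{eq:first-step} divided by $\Omega$; then pass from $p$ to $p+1$ by viewing $\Omega^{p+1}\tilde v=\Omega^p(\Omega\tilde v)$. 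Alternatively one can prove, by induction on $k$, a closed formula for $\Omega^{-k}v^{(p-k)}$ and then set $k=p$.

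The main obstacle I expect is Step 3. The signs coming from the odd operators $L_Q$ and $L_\md$ acting on forms of total degree $tot(v)$ must be handled consistently. The coefficients must also come out exactly as $p(s-1)!$, so that the $\xi$-terms cancel pairwise. I would first verify the cases $p=1$ and $p=2$ by hand to fix the sign conventions, and then do the general induction.
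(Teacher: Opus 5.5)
Your route is correct, but it differs from the paper's.

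\textbf{The paper's route.} The paper expands $v^{(p)}=L_\nabla^p(\Omega^p\tilde v)$ directly by the Leibniz rule and uses $L_\nabla\tilde v=L_QL_{\mathfrak{d}}\tilde v$, which follows from $L_Q\tilde v=0$. It then moves $L_Q$ to the left and evaluates the resulting double sums. The coefficients $a_{p,s}=p(s-1)!$ come from a vanishing-finite-difference argument plus an Euler beta integral, and $b_{p,s}=-p\,\delta_{s,0}$ from a similar computation.

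\textbf{Your route.} You iterate the $L_\nabla^N$-prolongation of \eqref{eq:first-step}; the paper uses the same identity, as \eqref{app-renorm-formula}, only later for the corollary. Put $X_k\equiv\Omega^{-k}v^{(p-k)}$ and $Y_k\equiv\xi\,\Omega^{-k-1}\bar v^{(p-k-1)}$. After dividing by $\Omega$, the $L_Q$-terms are not yet exact, contrary to your Step~1 description. Converting them inside each step with $\Omega^{-m}L_Q\beta=L_Q(\Omega^{-m}\beta)+m\,\xi\,\Omega^{-m-1}\beta$ gives
\begin{equation*}
X_k=(k+1)X_{k+1}+L_Q\bigl(\Omega^{-k}\bar v^{(p-k-1)}+(p-k-1)\,\Omega^{-k-1}\bar v^{(p-k-2)}\bigr)-(p-k)Y_k+(k+1)(p-k-1)Y_{k+1}\,.
\end{equation*}
Terms with superscript $-1$ carry a zero coefficient. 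By induction on $k$ this yields
\begin{equation*}
X_0=k!\,X_k+L_Q(\cdots)-pY_0+k!\,(p-k)Y_k\,.
\end{equation*}
At $k=p$ one has $X_p=\tilde v$ and the last term vanishes. The coefficient of $\Omega^{-s}\bar v^{(p-s-1)}$ inside $L_Q$ is $s!+(s-1)!\,(p-s)=p(s-1)!$ for $s\geq1$ and $1$ for $s=0$, which is exactly \eqref{renorm-b}.

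\textbf{Step 3 is easier than you expect.} The stray $\xi$-terms cancel pairwise between consecutive steps, so no second use of the Step~1 identity is needed. There are also no sign issues: $\Omega$ and $L_\nabla$ are even, and the only odd-operator sign is already built into \eqref{eq:first-step}.

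\textbf{What each approach buys.} Your argument avoids essentially all combinatorics and ties the proposition to the proof of the corollary. The paper's argument is a closed-form computation that needs nothing beyond $L_Q\tilde v=0$ and the Leibniz rule.

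\textbf{One caution.} Your suggested induction on $p$ via $\Omega^{p+1}\tilde v=\Omega^p(\Omega\tilde v)$ does not work. The form $\Omega\tilde v$ is not $L_Q$-closed, since $L_Q(\Omega\tilde v)=\xi\tilde v$, so the hypothesis fails for it. Use the induction on $k$ above instead.
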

The proof of this Proposition is given in Appendix~\bref{app:renorm}.
The proposition~\bref{prop:renorm-b} can be interpreted by saying that, up to $Q$-exact terms, the polynomially divergent form $\tilde v$ can be represented as the sum of $v^{(p)}$ and a term proportional to $\frac{\xi}{\Omega}$. Moreover, the coefficient of the latter term is directly related to the anomalous boundary cocycle; see \eqref{RA-map}.

\begin{rem}
To make contact with the b-calculus of Melrose (see e.g. \cite{melrose1993atiyah}) 
let us consider a simple example of a $Q$-manifold with $Q$-boundary given by $(M,Q)=(T[1]X,\dx)$, where $X$ is a smooth manifold with boundary $\partial X$. In this case $Q$-cocycles (in functions) of $T[1]X$ are just de Rham cocycles of $X$. Let $\Omega$ be a boundary defining coordinate with $\dx\Omega = \xi$ and consider a cocycle $\tilde v$ on the interior with $p=1$. Using \eqref{renorm-b} we obtain:
    \begin{align}
        \tilde v= \frac{\mathrm \dx\Omega}{\Omega}\bar v^{(0)}+(v^{(1)}-L_{d_X}\bar v^{(0)})\,.
    \end{align}
    This is the standard decomposition of a $b$-form near the boundary and the mapping of the cohomology is the content of the Mazzeo-Melrose theorem which states that 
    \begin{equation} \label{Melrose-decomposition}
        {}^bH^k(X) = H^k (X) \oplus H^{k-1}(\partial X)\,.
    \end{equation}
\end{rem}
Observe that the last term in \eqref{renorm-b} can be rewritten as follows:
\begin{align}\label{remaider-b}
    -p\frac{\xi}{\Omega}\bar v^{(p-1)}=-pL_Q(\ln\Omega\bar v^{(p-1)})+p\ln\Omega L_Q\bar v^{(p-1)}.
\end{align}
In Appendix~\bref{app:renorm}, we show that, since $\bar v^{(p-1)}$ is a $Q$-cocycle modulo terms vanishing at the boundary, the last term in \eqref{remaider-b} can be rewritten iteratively as the sum of a $L_Q$-exact term and a remainder. This leads to the following result:
\begin{corollary}\label{prop-renorm}
    In the setting of Theorem~\bref{th-renorm} the following relation holds on $Int(F)$ for every integer $N\geq1$:
\begin{multline}\label{prop-renorm-formula}
    v^{(p)}=p!\tilde v\,+ \\
    L_Q\left(\sum_{s=1}^{p-1}p(s-1)!\Omega^{-s}\bar v^{(p-s-1)}+\bar v^{(p-1)}-p\sum_{j=0}^{N}\frac{(-1)^j}{j!}\Omega^{j}\ln\Omega\bar v^{(p-1+j)}\right)+r_N[v]\,,
\end{multline}
where
\begin{multline}
        r_N[v]=p\xi\sum_{i=1}^{N}\frac{(-1)^{i}}{i!}\Omega^{i-1}\bar v^{(p+i-1)}\,+
        \\
        \frac{(-1)^N}{N!}\Omega^N\ln\Omega((p+N)L_Q\bar v^{(p+N-1)}-Nv^{(p+N)})\,.
\end{multline}
\end{corollary}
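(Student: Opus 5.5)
We start from Proposition~\bref{prop:renorm-b} and rewrite its last term with \eqref{remaider-b}:
\begin{equation*}
-p\frac{\xi}{\Omega}\bar v^{(p-1)}=-pL_Q(\ln\Omega\,\bar v^{(p-1)})+p\ln\Omega\, L_Q\bar v^{(p-1)}\,.
\end{equation*}
The key identity, used at every step of an induction on $N$, comes from \eqref{eq:first-step} after applying $L_\nabla^{M}$. Since $L_\nabla\Omega=1$, $L_\nabla\xi=0$ and $[L_\nabla,L_Q]=0$, this gives on $Int(F)$, for every $M\geq 0$,
\begin{equation*}
\Omega L_Q\bar v^{(M)}+M L_Q\bar v^{(M-1)}=\Omega v^{(M+1)}+(M-p)v^{(M)}+p\xi\bar v^{(M)}\,.
\end{equation*}
For $M=p$ this reads $pL_Q\bar v^{(p-1)}=\Omega\bigl(v^{(p+1)}-L_Q\bar v^{(p)}\bigr)+p\xi\bar v^{(p)}$. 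This identity expresses that $L_Q\bar v^{(p-1)}$ vanishes at the boundary, and that the defect of the shifted form $\bar v^{(p-1+j)}$ is controlled by higher ones.

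The induction runs as follows. We first verify $N=1$ directly. Substituting the $M=p$ identity into $p\ln\Omega\,L_Q\bar v^{(p-1)}$ gives $\ln\Omega\,\Omega(v^{(p+1)}-L_Q\bar v^{(p)})+p\xi\ln\Omega\,\bar v^{(p)}$. The $\xi\ln\Omega$ term has to be converted. For this we use $L_Q(\Omega\ln\Omega)=\xi(\ln\Omega+1)$, which gives
\begin{equation*}
p\xi\ln\Omega\,\bar v^{(p)}=pL_Q(\Omega\ln\Omega\,\bar v^{(p)})-p\xi\bar v^{(p)}-p\Omega\ln\Omega\,L_Q\bar v^{(p)}\,,
\end{equation*}
where the sign is adjusted by the parity of $\bar v^{(p)}$. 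Collecting terms, the exact pieces form the $j=0,1$ summands of the $\ln\Omega$ sum. The $-p\xi\bar v^{(p)}$ term is the $i=1$ term of $r_1$. The remaining terms combine into $-\Omega\ln\Omega\bigl((p+1)L_Q\bar v^{(p)}-v^{(p+1)}\bigr)$, which matches the $N=1$ remainder.

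For the inductive step $N\to N+1$, we take the last term of $r_N$, namely $\frac{(-1)^N}{N!}\Omega^N\ln\Omega\bigl((p+N)L_Q\bar v^{(p+N-1)}-Nv^{(p+N)}\bigr)$. We rewrite it with the identity at $M=p+N$, which lets $(p+N)L_Q\bar v^{(p+N-1)}-Nv^{(p+N)}$ be expressed as $\Omega(v^{(p+N+1)}-L_Q\bar v^{(p+N)})+p\xi\bar v^{(p+N)}$. We then absorb $\xi\,\Omega^N\ln\Omega$ using $L_Q\bigl(\Omega^{N+1}\ln\Omega\bigr)=\xi\Omega^N\bigl((N+1)\ln\Omega+1\bigr)$. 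This produces three contributions: the new exact summand $j=N+1$ with coefficient $\frac{(-1)^{N+1}}{(N+1)!}$, the new $\xi$-term $i=N+1$ of $r_{N+1}$, and the new remainder with $N$ replaced by $N+1$.

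The main obstacle is bookkeeping rather than any new idea. One has to track the signs coming from the parity of $\bar v^{(M)}$ when $L_Q$ passes through it, along with the $(-1)^{tot}$ conventions. One also has to check that the coefficients $(N+1)$ versus $p$ combine exactly into $(p+N+1)L_Q\bar v^{(p+N)}-(N+1)v^{(p+N+1)}$. I would first do the case $p=1$, $N=1$ explicitly to fix the signs before running the general induction.
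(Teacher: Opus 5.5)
Your proposal is correct and follows the paper's proof essentially step for step. The paper also starts from Proposition~\ref{prop:renorm-b} and \eqref{remaider-b}, and uses the $L_\nabla^k$-prolongation of \eqref{eq:first-step} to settle $N=1$. It then proves $r_N[v]=pL_Q\bigl(\frac{(-1)^N}{(N+1)!}\Omega^{N+1}\ln\Omega\,\bar v^{(p+N)}\bigr)+r_{N+1}[v]$, which is exactly your inductive step. One small point: no parity signs actually arise in these conversions, because $L_Q$ acts on the even prefactor $\Omega^{N+1}\ln\Omega$ by placing $\xi$ to the left of $\bar v^{(p+N)}$.
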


We note two properties of the remainders $r_N[v]$ that are important for us. First, for any $N$, they have a well-defined limit at $\d^Q F$-boundary and vanish there.  Although by assumption $v\equiv \Omega^p \tilde v$ is smooth, the corresponding $r_N[v]$ is only $(N-1)$ times differentiable because of the term $\Omega^N\ln\Omega$.

Corollary~\bref{prop-renorm} shows that, for a given $Q$-cocycle $\tilde v \in \bigwedge^\bullet_{\fop}(Int(F))$, the induced cocycle $v^\cR \equiv \cR(v)$ can be regarded as the renormalized value of the singular form $\tilde v$ at the boundary, where the renormalization amounts to the addition of $L_Q$-exact terms.
\begin{rem}
    Corollary~\bref{prop-renorm} can be interpreted in the spirit of the holographic renormalization procedure known in the AdS/CFT context \cite{deHaro:2000xn, Skenderis:2002wp}. To illustrate this, we consider a globally trivial gPDE with asymptotic boundary $(T[1]X,\mathrm{d}_{X})\times (F,q)$, where
    \begin{align}
        X=\mathbb{R}_{\geq 0}\times\Sigma, \qquad \partial X\cong\Sigma.
    \end{align}
    Let $v$ be a function of degree $\dim(X)$ on $F$, and consider a solution $\sigma$ such that $\rho=\sigma^{*}\Omega$, where $\rho$ is a boundary defining function on $\mathbb{R}_{\geq 0}$. Defining $X_{\varepsilon}\equiv\{p\in X|\rho(p)\geq\varepsilon\}\subset X$ we obtain from 
    \eqref{prop-renorm-formula} 
    \begin{multline}
      \frac{1}{p!}  \int_{T[1]X_\epsilon}\sigma^{*} (v^{(p)}-r_N[v])= \int_{T[1]X_\varepsilon}\sigma^{*}\tilde v+\sum_{s=1}^{p-1}\frac{(s-1)!}{(p-1)!}\varepsilon^{-s}\int_{\rho=\varepsilon}\sigma^{*}\bar v^{(p-s-1)}-\\\frac{1}{(p-1)!}\ln\varepsilon\int_{\rho=\varepsilon}\sigma^{*}\bar v^{(p-1)}+\mathcal{O}(\varepsilon^{0})\,,
    \end{multline}
    where we have used that $\sigma$ is a solution, i.e. that  $\mathrm{d}_{X}\circ\sigma^{*}= \sigma^*\circ Q$, and hence $\int \sigma^* (L_Q \alpha)=\int_{T[1]X_\varepsilon}\d_X \sigma^*(\alpha)=\int_{\rho=\varepsilon}\sigma^*(\alpha)$.
The left-hand side of this formula can be  interpreted as the renormalization of $\int_{T[1]X_\varepsilon}\sigma^{*}\tilde v$ as it has a well-defined limit as $\varepsilon\to 0$ and differs from 
$\int_{T[1]X_\varepsilon}\sigma^{*}\tilde v$ by the $\ln \varepsilon$ term and terms with poles of positive order in $\varepsilon$. The former is interpreted as the anomaly while the latter are the  boundary counterterms. In the context of holographic renormalization, one typically renormalizes the on-shell action, in which case the anomaly is known as the holographic Weyl anomaly. This explains our notation for the  $\cR$ (renormalization) and $\cA$ (anomaly) maps.

Finally, we note that in the limit $\varepsilon\to0$ the expression \eqref{prop-renorm-formula} does not depend on the integer $N$ since, as shown in Appendix~\bref{app:renorm}, we have
\begin{align}
    r_N[v]=pL_Q(\frac{(-1)^N}{(N+1)!}\Omega^{N+1}\ln\Omega\bar v^{(p+N)})+r_{N+1}[v]
\end{align}
and hence the difference gives the boundary term that vanishes in the $\varepsilon\to 0$ limit.
\end{rem}


\subsection{Renormalization of presymplectic structures}\label{sec: renorm-struc}

Our focus is Lagrangian gauge theories. In the gPDE approach Lagrangians are encoded in compatible presymplectic structures defined on the total space of a gPDE. Let us assume that the system in question  is Lagrangian and let $(E,Q,T[1]X)$ be a compactification of its gPDE.  Identifying $(Int(E|_{T[1]Int(X)}),Q,T[1](Int(X)))$ with the initial gPDE we assume that it is equipped with a compatible presymplectic structure, i.e. a 2-form
$\tilde\omega$ satisfying 
\begin{equation}
\label{omegaQ}
\begin{gathered}
    i_Q \tilde{\omega} + d\tilde{H} \in \tilde\cI \,,\qquad
    d\tilde\omega = 0 \,,\qquad
    gh(\tilde\omega) = dim X - 1\,,
\end{gathered}
\end{equation}
where $\tilde \cI$ is the ideal generated by the pullback of forms of positive degree from the base $T[1]Int(X)$ or, more generally, the corresponding background gPDE. In the case of implicit gPDEs it is convenient to assume that $\tilde\cI$ also involves the ideal of the equations. Note that \eqref{omegaQ} implies 
\begin{equation}
    L_Q \tilde{\omega} \in \tilde\cI\,.   
\end{equation}
To proceed further we also need to assume that $\tilde \omega$ and $\tilde{H}$ belong to $\bigwedge^\bullet_{\fop}(Int(E))$. It follows that one can safely replace $\tilde\cI$ with $\tilde\cI_{\fop}\equiv \tilde\cI \cap \bigwedge^\bullet_{\fop}(Int(F))$.
As we are going to see, this assumption is fulfilled in applications.

Now we are in the setting of Lemmas~\bref{lemma:map-interwines} and \bref{lemma:nice-ideal} which  imply that $\tilde \omega$ induces two presymplectic structures on the boundary: $\omega^{\cR} = \cR(\tilde{\omega})$ and $\omega^\cA = \cA(\tilde{\omega})$. By construction one has $gh(\omega^\cR) = dim X - 1 = dim(\partial X)$ and $gh(\omega^\cA) = dim(\partial X) - 1$. Moreover, picking any presymplectic potential $\tilde\chi$ such that $\tilde\omega = d\tilde\chi$ one gets
\begin{equation}
    \begin{gathered}
        \omega^\cR = d\cR(\tilde{\chi})\,, \qquad L_Q \omega^\cR \in \cI_\partial \\ 
        \omega^\cA = d\cA(\tilde{\chi})\,, \qquad L_Q \omega^\cA \in \cI_\partial \,,
    \end{gathered}
\end{equation}
where $\cI_\d$ is the ideal generated by the pullback of the generators of $\tilde \cI_{\fop}$ to the boundary.

Because both $\omega^\cR$ and $\omega^\cA$ are $Q$-invariant modulo the ideal there exist Hamiltonians $H_l^\cR$ and $H^\cA_l$ such that $i_Q \omega^\cR+dH_l^\cR\in\cI_\d$ and 
$i_Q \omega^\cA+dH_l^\cA\in\cI_\d$. Let us stress that $H^{\cR,\cA}_l$ are apparently different from $H^\cA \equiv \cA(\tilde H)$ and $H^\cR\equiv \cR(\tilde H)$, respectively.
Given the presymplectic potential and the Hamiltonian, these data  determine BV-AKSZ action functional on the space of supersections of the induced boundary gPDE. Assuming that the presymplectic structures induced on the space of supersections by $\omega^\cR$ and $\omega^\cA$ are regular one can pass to the respective symplectic quotients. In the case of $\omega^\cR,H_l^\cR$ the respective BV-AKSZ action has degree $1$ and is to be interpreted as the BFV-BRST charge of the BFV system induced on the quotient. It is natural to interpret this system as the renormalised BFV system induced on the asymptotic boundary.
At the same time $\omega^\cA$ has the standard ghost degree and the associated BV-AKSZ action can be identified with the BV extension of the holographic Weyl anomaly, see e.g.~\cite{Skenderis:2002wp}, if one passes to the corresponding symplectic quotient. Recall that this anomaly can be also seen as an action functional 
whose equations of motion is the 
Fefferman-Graham obstruction.

Although $H_l^\cR$ and $H^\cA_l$ can be found directly by solving e.g. $i_Q\omega^\cA+d H_l^\cA\in \cI_\d$ this is not an efficient technique. An efficient way to obtain $\omega^{\cR,\cA}$ and $H^{\cR,\cA}_l$ in one go is to employ the relation~\eqref{chi-l-axioms} which in the case at hand reads as:
\begin{equation}
    \tilde\omega = (d + L_Q)(\tilde\chi + \tilde{l}) + \tilde\cI|_{\fop}
\end{equation}
where $\tilde{l} = i_Q\tilde{\chi} + \tilde{H}$. Indeed, restricting this equation to form degree 1, we find 
\begin{equation}
    L_Q(\tilde\chi) + d(\tilde{l}) \in \tilde\cI|_{\fop}
\end{equation}
which can be rewritten as 
\begin{equation}
    i_Q d\tilde{\chi} + d(\tilde{l} - i_Q \tilde{\chi}) \in \tilde\cI|_{\fop}\,.
\end{equation}
see \cite{Basile:2026kyx} for further details. Note that one can directly express the BV-AKSZ action in terms of $\tilde{\chi} + \tilde{l}$ using the so-called Chern-Weyl map introduced in \cite{Kotov:2007nr}, see also \cite{Basile:2026kyx}.

Using Lemmas~\bref{lemma:map-interwines} and~\bref{lemma:nice-ideal} one immediately finds that the induced presymplectic structures on the boundary obey
\begin{equation}
\begin{gathered}
    \omega^\cR = (d+L_Q)\cR(\tilde\chi+\tilde{l}) + \cI_\partial\,,\\
    \omega^\cA = (d+L_Q)\cA(\tilde\chi+\tilde{l}) + \cI_\partial\,.
\end{gathered}    
\end{equation}
Writing equations in this form allows one to immediately confirm that 
\begin{equation}
    i_Q \omega^\cR + d(l^\cR - i_Q\chi^\cR)=i_Q \omega^\cR + d(H^\cR) \in \cI_\partial\,,
\end{equation}
\begin{equation}
    i_Q \omega^\cA + d(l^\cA - i_Q\chi^\cA) \in \cI_\partial\,.
\end{equation}
It follows that the Hamiltonian of $Q$ w.r.t. $\omega^\cR$ is just given by $H_l^\cR = \cR(\tilde{H})\equiv H^\cR$.

As for the Hamiltonian $H^{\cA}_l$ of $Q$ with respect to $\omega^\cA$, it can be explicitly found in terms of $\tilde{H}, \tilde\chi$ without referring to $\tilde{l}$. Namely:
\begin{multline}\label{anomaly-hamiltonian}
         H^{\cA}_l=l^\cA - i_Q\chi^\cA \,,= \frac{(-1)^{dimX}}{(p-1)!}b^*(L^{p-1}_{\nabla} L_{\md} \Omega^p (\tilde{H} + i_Q \tilde{\chi}) - i_QL_{\md} L^{p-1}_{\nabla} \Omega^p \tilde{\chi})  = \\ 
      =   \frac{(-1)^{dimX}}{(p-1)!}b^*\Big(L^{p-1}_{\nabla} L_{\md} \Omega^p (\tilde{H}) + L^{p-1}_{\nabla} [L_{\md},i_Q]\Omega^p \tilde\chi\Big)\,\, =\\
         \frac{(-1)^{dimX}}{(p-1)!}b^*L^{p-1}_{\nabla}\Big( L_{\md}  {H} + i_\nabla \chi\Big)
\end{multline}
where $H\equiv \Omega^p\tilde H$, $\chi=\Omega^p\tilde\chi$ for sufficiently large $p$. 

\section{Applications and examples: fields on (A)dS space}
\label{sec:AdS}

In this section we study the boundary structure of gauge fields on (A)dS space. This is done by treating the (A)dS structure as a solution of ``flat AdS gravity'', i.e. the truncation of the full gravity where the Weyl tensor is set to zero. In the next step the (gauge) fields on (A)dS space are reformulated as fields defined on the background of flat (A)dS gravity or, more precisely, its conformal compactification. In the gPDE language this corresponds to $Q$-bundles over the $Q$-manifold describing flat (A)dS gravity. We emphasize that the choice of a flat background in the present work is motivated solely by its simplicity. The extension to a curved background is straightforward using the techniques developed in \cite{Grigoriev:2025gvk}.

\subsection{Flat (A)dS gravity and its compactification}\label{sec:AdS-gravity}

We start with a gPDE formulation of flat (A)dS gravity. Because gravity as well as its flat limit is diffeomorphism-invariant the space-time manifold plays a rather passive role and can be effectively disregarded. In the gPDE approach this can be seen as follows:  diffeomorphism-invariant systems are described by gPDEs which are locally-trivial $Q$-bundle, i.e. locally
\begin{equation}
(E,\dx +Q,T[1] X)=(\tilde F,Q)\times (T[1] X,\dx)\,,
\end{equation} 
see e.g.~\cite{Barnich:1995db,Barnich:2010sw,Grigoriev:2019ojp} for more details. To simplify the exposition we also restrict ourselves to globally-trivial $Q$-bundles. It is clear from the above representation that nearly all the information about the system is encoded in the fiber $(\tilde F,Q)$.

The minimal description of the flat AdS gravity is achieved by taking the fiber $Q$-manifold $(\tilde F,Q)$ to be $(\mathfrak g_\Lambda[1],\mathrm{d}_{\algg})$, where $\mathfrak g_\Lambda$ is the (A)dS algebra and $\mathrm{d}_{\algg}$ is its Chevalley-Eilenberg differential seen as a homological vector field on $\mathfrak g_\Lambda[1]$. It is clear that solutions to this gPDE are flat $\mathfrak g_\Lambda$-connections. Assuming the nondegeneracy condition on the soldering form component of the connection one concludes that such a connection defines the (A)dS space structure on $X$, at least locally.

Now we are interested in the compactification of 
$(\mathfrak g_\Lambda[1],\mathrm{d}_{\algg}) \times T[1]X$ which determines the usual conformal compactification of (A)dS space at the level of solutions. To this end, consider the linear graded manifold with coordinates
\begin{equation}\label{AdS-gr-coordinates}
  \Omega,n_a,\tau\quad \Omega \geq 0\,,\,\,  
  n_\Omega>0
  \,, \qquad \quad  \xi^{a},\quad\rho^{ab},\quad \lambda,\quad \lambda^a,\quad \rho^{ab}=-\rho^{ba}\,,
\end{equation}
where coordinates in the first group are of ghost degree $0$, those from the second have degree $1$ and $\epsilon = \pm 1$, with the sign corresponding to the sign of the cosmological constant.  Here, we also assumed that index $a$ is split  as $\{a\}=\{\Omega,A\}$ so that one direction is distinguished.
The action of $Q$ is given by:
\begin{align} \label{Q-int-F}
\begin{split}
        &Q\xi^a=\xi^b\rho_{b}{}^{a}-\xi^a\lambda\,,\qquad Q\lambda=\xi^a\lambda_a\,,\\
    &Q\lambda^a=\lambda^{b}\rho_b{}^a-\lambda\lambda^a\,,\qquad
    Q\rho^{ab}=\rho^{ac}\rho_c{}^b+\delta^{ab}_{cd}\lambda^c\xi^d\,,\\
    &Q\Omega=\xi^a n_a+\lambda\Omega,\qquad Qn_a=-\xi_a\tau+\rho_{a}{}^{b}n_b+\lambda_a\Omega\,
    ,\qquad Q\tau=-\lambda\tau-\lambda^an_a.
\end{split}
\end{align}

The desired manifold $(F,Q)$ is the zero-locus of the constraint: 
\begin{align} \label{scale-ideal}
    \Omega\tau+\frac{1}{2}n_a n^a-\frac{\epsilon}{2}\,.
\end{align}
Here, the indices are raised and lowered by the constant metric 
\begin{equation}
g_{ab}=diag(\epsilon,-\epsilon,-1,\dots,-1)\,,
\end{equation} 
The ideal generated by \eqref{scale-ideal} is denoted by $\eI_{Sc}$.  The zero locus is clearly a smooth submanifold and $Q$ is tangent to it, giving a $Q$-manifold $(F,Q)$. Instead of solving the above constraint explicitly, it can be often convenient to identify the algebra of functions on $F$ with the quotient of the algebra of functions on~\eqref{AdS-gr-coordinates} by the ideal generated by \eqref{scale-ideal}. In other words, we work using global coordinates ~\eqref{AdS-gr-coordinates} and all relations are understood modulo $\eI_{Sc}$. This can be formalised by using the notion of implicit $Q$-manifold, introduced in~\cite{Grigoriev:2025gvk}.

The manifold $(F,Q)$ has the $Q$-boundary singled out by\footnote{As a submanifold in \eqref{AdS-gr-coordinates} it is singled out by $\Omega = 0$, $Q\Omega = 0$ and $\Omega\tau+\frac{1}{2}n_a n^a-\frac{\epsilon}{2}=0$.}
\begin{equation}
    \Omega = 0\,, \quad \xi = 0\,, \qquad \xi\equiv Q\Omega =\xi^a n_a + \lambda \Omega\,.
\end{equation}
Because  $n_\Omega>0$, the function $\xi$ defined in this way can be taken as the coordinate $\xi$ appearing in the definition of $Q$-boundary.

\begin{prop}\label{prop-Ads-equiv}
    $(Int(F),Q)$ is equivalent as a $Q$-manifold to $(\mathfrak g_\Lambda[1],\mathrm{d}_{\algg})$  where $\mathfrak g_\Lambda$ is the (A)dS algebra and $\mathrm{d}_{\algg}$ is its Chevalley-Eilenberg differential.
\end{prop}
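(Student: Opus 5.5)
The plan is to recognise $(F,Q)$ as the action $Q$-manifold of the conformal-type algebra on a quadric, and then use the standard fact that such an action $Q$-manifold over a homogeneous space is equivalent to the Chevalley--Eilenberg $Q$-manifold of the stabiliser algebra.

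\emph{Step 1: reinterpret \eqref{Q-int-F}.} The degree-1 coordinates $\xi^a,\rho^{ab},\lambda,\lambda^a$ are the ghosts of the algebra $\mathfrak{h}\simeq o(D,2)$ (or $o(D+1,1)$, depending on $\epsilon$). Here $\xi^a$ are translations, $\rho^{ab}$ Lorentz, $\lambda$ dilation and $\lambda^a$ special conformal. The first two lines of \eqref{Q-int-F} should then be exactly the CE differential $\mathrm{d}_{\mathfrak h}$. The third line is of the form $Q V^M = -c^M{}_N V^N$ with $V^M=(\Omega,n_a,\tau)$, i.e.\ the linear action of $\mathfrak h$ on its vector representation in light-cone components $\Omega,\tau$. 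The dilation acts with opposite weights on $\Omega$ and $\tau$, so the invariant quadratic form is $2\Omega\tau+n_an^a$. Hence \eqref{scale-ideal} is $\tfrac12(V^2-\epsilon)$, and $Q$ preserves it. Thus
\begin{equation}
F=\mathfrak h[1]\times S\,,\qquad S=\{V^2=\epsilon\}\,,\qquad Q=\mathrm{d}_{\mathfrak h}+c^\alpha \rho_\alpha\,,
\end{equation}
where $\rho_\alpha$ are the fundamental vector fields of the action on $S$. I will check these identifications directly against \eqref{Q-int-F}; this is routine.

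\emph{Step 2: general lemma.} Let $H$ act transitively on a manifold $S$ and let the stabiliser of a point $V_0$ have Lie algebra $\mathfrak g$. Suppose there is a smooth map $g:S\to H$ with $g(V)V_0=V$. Change the ghost coordinates by the gauge transformation
\begin{equation}
\hat c=\mathrm{Ad}_{g^{-1}}c+g^{-1}Qg\,.
\end{equation}
The flatness identity $Q\hat c+\tfrac12[\hat c,\hat c]=0$ continues to hold. Split $\hat c=\hat c_{\mathfrak g}+\hat c_{\mathfrak m}$ along $\mathfrak h=\mathfrak g\oplus\mathfrak m$. By construction, $\hat c_{\mathfrak m}$ equals $Q$ applied to local coordinates on $S$ (in the pulled-back Maurer--Cartan sense), up to an invertible matrix. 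Consequently the coordinates of $S$ together with $\hat c_{\mathfrak m}$ form contractible pairs. The remaining $\hat c_{\mathfrak g}$ generate a $Q$-subalgebra on which $Q=\mathrm{d}_{\mathfrak g}$, up to terms in the ideal of the contractible coordinates; this is handled by the standard triangular redefinition. Therefore $(\mathfrak h[1]\times S,Q)$ is an equivalent reduction-type extension of $(\mathfrak g[1],\mathrm{d}_{\mathfrak g})$, in the sense of the equivalence recalled in Section~\bref{prelim:gPDE-backgr}.

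\emph{Step 3: apply to $Int(F)$.} The stabiliser of a vector with $V^2=\epsilon\neq0$ is $o(D-1,2)$ or $o(D,1)$, i.e.\ precisely $\mathfrak g_\Lambda$, with the sign fixed by $\epsilon$. On $Int(F)$ we have $\Omega>0$, so we can write $g(V)$ explicitly: take $V_0=(\Omega,n_a,\tau)=(1,0,\epsilon/2)$, and let $g(V)$ be a dilation by $\Omega$ composed with a special conformal transformation with parameter $\sim n_a/\Omega$. This works because $\tau$ is solved as $(\epsilon-n^2)/(2\Omega)$. Concretely, the contractible pairs are $(\Omega,Q\Omega)$ and $(n_a,Qn_a)$. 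Since $\Omega$ is invertible, $Q\Omega=\xi^an_a+\lambda\Omega$ can be solved for $\lambda$ and $Qn_a$ for $\lambda_a$. The counting confirms this: there are $D+1$ independent degree-0 coordinates against the $D+1$ ghosts $\lambda,\lambda^a$. The surviving generators are $\hat\xi^a\simeq\xi^a/\Omega$ and $\hat\rho^{ab}\simeq\rho^{ab}+(\xi^an^b-\xi^bn^a)/\Omega$, up to corrections determined by the formula for $\hat c$. These should satisfy the $\mathfrak g_\Lambda$ structure equations, with $\Lambda$ proportional to $\epsilon$ arising from the $\tau$-dependent term in $Qn_a$.

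The main obstacle is Step 3. One must make sure that $g(V)$ is defined on all of $Int(F)$ rather than only locally, and one must fix the exact correction terms in $\hat\rho^{ab}$ so that $Q\hat\rho^{ab}$ closes without leftover $\lambda$, $\lambda^a$ dependence. I would first compute $Q(\xi^a/\Omega)$ directly and read off the required shift of $\rho^{ab}$ from the non-closing terms, rather than relying on abstract group-theoretic formulas.
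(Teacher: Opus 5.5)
Your proposal is correct in substance but argues by a different route from the paper, and a few details need fixing (listed in the second paragraph).

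\textbf{Comparison of routes.} The paper's proof is a direct check. It writes down $\tilde\xi^a=\Omega^{-1}\xi^a$ and $\tilde\rho_a{}^b=\rho_a{}^b-\Omega^{-1}n_a\xi^b+\Omega^{-1}n^b\xi_a$, and verifies $Q\tilde\xi^a=\tilde\xi^c\tilde\rho_c{}^a$ and $Q\tilde\rho^{bc}=\tilde\rho^{bd}\tilde\rho_d{}^c+\tilde\xi^b\tilde\xi^c(2\Omega\tau+n^an_a)$. The constraint replaces the bracket by $\epsilon$, so this is a $Q$-projection $Int(F)\to\mathfrak g_\Lambda[1]$. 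Its complement is the contractible pairs $(\Omega,Q\Omega)$ and $(n_a,Qn_a)$, removed by reducing to $\Omega=1$, $n_a=n_{0a}$. You instead view $F$ as the action $Q$-manifold of the conformal algebra on the quadric and apply a stabiliser lemma. This explains the result rather than verifying it: the answer is the stabiliser of a vector of norm $\epsilon$. It also derives the paper's formulas instead of requiring them to be guessed. The same argument covers the flat case, Proposition~\bref{poincare-equiv}, since the stabiliser of a null vector is the Poincar\'e algebra. Your recipe with $g=K_{n/\Omega}\circ D_{\ln\Omega}$ reproduces \emph{exactly} the paper's $\tilde\xi^a$ and $\tilde\rho_a{}^b$, together with $\hat\lambda=0$ and $\hat\lambda_a=\tfrac{\epsilon}{2}\hat\xi_a$. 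The paper's route is shorter and avoids group-level sign conventions altogether.

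\textbf{Points to fix.} (i) $\mathfrak h=o(2,D)$ for \emph{both} signs of $\epsilon$. The metric $g_{ab}=\mathrm{diag}(\epsilon,-\epsilon,-1,\dots,-1)$ is Lorentzian either way, and $\epsilon$ enters only through the level of the quadric. With $o(D+1,1)$ the stabilisers would be $o(D,1)$ and $o(D+1)$, which are not the (A)dS algebras.

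(ii) In Step~2, your Maurer--Cartan-preserving transform $\hat c=\mathrm{Ad}_{g^{-1}}c+g^{-1}Qg$ lies \emph{entirely} in the stabiliser. Given $Qc=-\tfrac12[c,c]$, $Q^2=0$ forces $QV=-c\cdot V$. Combined with $QV=(Qg)\cdot V_0$, this gives $\hat c\cdot V_0=0$. So $\hat c_{\mathfrak m}\equiv 0$ cannot serve as the partner of the coordinates on $S$. The partners are the $QV$ themselves, equivalently $(g^{-1}Qg)_{\mathfrak m}$, which is $QV$ times the invertible soldering matrix. The functions $(V,QV,\hat c)$ form a coordinate system because $c=\mathrm{Ad}_g(\hat c-g^{-1}Qg)$. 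Moreover $\hat c$ satisfies the $\mathfrak g$ Maurer--Cartan equation exactly, so no triangular redefinition is needed.

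(iii) Your ``main obstacle'' is not one. On $Int(F)$ the quadric is globally parametrised by $\Omega>0$ and $n_a$, so $g$ is global. Your $\hat\rho^{ab}$ needs no correction terms. The splitting $Int(F)\cong\mathfrak g_\Lambda[1]\times T[1](\mathbb R_{>0}\times\mathbb R^D)$ then holds globally.
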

\begin{proof}
Coordinates \eqref{AdS-gr-coordinates} induce coordinates on $Int(F)$, where one can replace $\xi^a,\rho^{ab}$ with 
    \begin{align} \label{eq:trans-functions-ads-gr}
        \tilde\xi^a\equiv\Omega^{-1}\xi^a,\quad \tilde\rho_{a}{}^{b}\equiv\rho_{a}{}^{b}-\Omega^{-1}n_a\xi^b+\Omega^{-1}n^b\xi_a\,,
    \end{align}
giving a new coordinate system on $Int(F)$. The action of $Q$ on the new coordinates is given by:
    \begin{align} \label{eq:Q-AdS-in-tilde-coord}
    \begin{split}
                Q\tilde\xi^{a}&=\tilde\xi^{c}\tilde\rho_{c}{}^{a},\\
        Q\tilde \rho^{bc}&=\tilde\rho^{bd}\tilde\rho_{d}{}^{c}+\tilde \xi^b\tilde \xi^c (2\Omega\tau+n^an_a)=\tilde\rho^{bd}\tilde\rho_{d}{}^{c}+\tilde \xi^b\tilde \xi^c \epsilon \,,\\
            \end{split}
    \end{align}
where in the second relation we made use of $\eI_{Sc}$. 
The restrictions of $\xi^a$, $\rho^{bc}$, $\lambda$, $\lambda^a$, $\Omega$, and $n^a$ to $Int(F)$ are independent coordinates therein because $\tau$ can be solved for. 

It turns out that the reduced system is precisely $(\mathfrak g_\Lambda[1],\mathrm{d}_{\algg})$. Indeed, the subalgebra of functions on $Int(F)$ 
generated by $\tilde{\xi},\tilde{\rho}$ can be identified with 
the functions pulled back from $(\mathfrak g_\Lambda[1],\mathrm{d}_{\algg})$, giving a projection $\pi_{\algg_\Lambda}:(Int(F),Q) \to (\mathfrak g_\Lambda[1],\mathrm{d}_{\algg})$. Moreover, \eqref{eq:Q-AdS-in-tilde-coord} implies that $\pi_{\algg_\Lambda}$ is compatible with $Q$ so that 
$\pi_{\algg_\Lambda}$ is a $Q$-map.
In terms of the initial coordinates $\xi^a,\rho^{bc},\lambda,\lambda^a,\Omega,n_a$ the projection is given by 
\begin{equation} \label{pi_g}
    \begin{gathered}
        \pi_{\algg_\Lambda}^* \tilde{\xi}^a = \Omega^{-1}\xi^a\,,\qquad
       \pi_{\algg_\Lambda}^* \tilde\rho_{a}{}^{b}\equiv\rho_{a}{}^{b}-\Omega^{-1}n_a\xi^b+\Omega^{-1}n^b\xi_a\,,
    \end{gathered}
\end{equation}
where we consider $\tilde{\xi}^a$ and $\tilde\rho_{a}{}^{b}$ as coordinates on $\algg_\Lambda[1]$. Finally, one can equivalently reduce $(Int(F),Q)$ to its submanifold singled out by $\Omega=1,Q\Omega=0,n^a=n_0^a, Qn_a=0$ so that $(\mathfrak g_\Lambda[1],\mathrm{d}_{\algg})$ is an equivalent reduction of $(Int(F),Q)$. 
Here $n_0^a$ is a fixed constant vector. Note that different choices of $n_0^a$ lead to the same (up to diffeomorphisms) reduced $Q$-manifolds.
\end{proof}
The above proposition gives a systematic way to uplift various objects such as $Q$-cocycles, bundles, etc. defined on $(\algg_\Lambda,\mathrm{d}_{\algg})$
to the interior of its conformal compactification $(F,Q)$, by pulling it back by the projection~\eqref{pi_g}. Because $\pi_{\algg_\Lambda}^*$ originates from the equivalent reduction it determines a quasi-isomorphism of the corresponding complexes.

Let us now get back to the study of $(F,Q)$ which describes the conformal compactification of flat (A)dS gravity. It is convenient to consider an equivalent reduction $(F_\red,Q)$ of $(F,Q)$. More precisely, consider the following submanifold $F_{\red}\subset F$
which is explicitly defined as a submanifold of \eqref{AdS-gr-coordinates}, singled out by
\begin{equation}
n_A=0\,, \quad \rho^{\Omega A}=\lambda^A\Omega\,, \qquad \tau=0,\quad \lambda^\Omega =0\,, \qquad 
n_\Omega=1\,,
\end{equation}
where the last constraint is obtained by solving $\eI_{Sc}$.
\begin{prop}
$(F_\red,Q)$ is an equivalent reduction of $(F,Q)$
\end{prop}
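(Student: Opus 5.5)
The plan is to exhibit $(F,Q)$ locally as a product $(F_\red,Q)\times(T[1]V,\mathrm{d}_V)$, so that the constraints defining $F_\red$ are contractible pairs. Concretely, I would look for new coordinates on $F$ (working modulo $\eI_{Sc}$) of the form $\{w^\alpha, Qw^\alpha, \phi^i\}$. Here the $w^\alpha$ are functions whose common zero locus, together with that of $Qw^\alpha$, is exactly $F_\red$, and the $\phi^i$ are coordinates on $F_\red$ extended to $F$. The subalgebra generated by the $\phi^i$ should be $Q$-invariant, since then $Q\phi^i$ is a function of the $\phi$'s alone. Sending $w^\alpha, Qw^\alpha$ to zero then gives the required locally trivial $Q$-bundle with contractible fiber.

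The natural pairing comes from the formulas \eqref{Q-int-F}. First, $n_A$ pairs with $Qn_A=-\xi_A\tau+\rho_A{}^b n_b+\lambda_A\Omega$. On the locus $n_A=0$, $\tau=0$, $n_\Omega=1$ this reduces to $\rho_A{}^\Omega+\lambda_A\Omega$, so imposing $Qn_A=0$ is the same as $\rho^{\Omega A}=\lambda^A\Omega$ up to sign conventions. Second, $\tau$ is not independent: solving $\eI_{Sc}$ gives $n_\Omega$ as a function of $\Omega\tau$ and $n_A$. I therefore treat $\tau$ as the degree-$0$ generalized coordinate and pair it with $Q\tau=-\lambda\tau-\lambda^a n_a$. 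On $n_A=0,\tau=0$ this becomes $-\lambda^\Omega n_\Omega$. Since $n_\Omega>0$, setting it to zero is equivalent to $\lambda^\Omega=0$, and $n_\Omega=1$ then follows from the scale constraint. So I take $w^\alpha=(n_A,\tau)$ and $Qw^\alpha=(Qn_A,Q\tau)$. The remaining variables $\Omega,\xi^a,\lambda,\lambda^A,\rho^{AB}$ and $\rho^{\Omega A}$ shifted by $\Omega\lambda^A$ (or a suitable combination) serve as candidate $\phi^i$.

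The key steps, in order, are the following. (i) Check that $(n_A,\tau,Qn_A,Q\tau,\phi^i)$ form a coordinate system near $F_\red$. This amounts to independence of the differentials: $Qn_A$ is solvable for $\rho^{\Omega A}$ because its coefficient is $n_\Omega\neq0$, and $Q\tau$ is solvable for $\lambda^\Omega$ for the same reason. (ii) Use the standard fact from~\cite{Grigoriev:2019ojp} that if the $w$'s together with $Qw$'s are independent, then the $Q$-stable subalgebra need not be found explicitly. One can always complete the remaining coordinates by $\phi^i$ whose $Q$-image involves only the $\phi$'s, by correcting them with terms in the ideal generated by $w$ and $Qw$ via a homotopy argument (a degree filtration in $w,Qw$). (iii) Conclude that the zero locus of $w,Qw$, which is $F_\red$ as described, is an equivalent reduction. (iv) Observe that the $Q$-boundary $\Omega=0$, $\xi=0$ is compatible, because $\Omega$ and $\xi$ survive as coordinates on $F_\red$.

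The main obstacle I expect is step (ii) in the presence of the implicit constraint $\eI_{Sc}$ and the condition $n_\Omega>0$. The reduction is only local in $n_A$, so I need the homotopy correction to stay smooth on the whole neighborhood, including near $\Omega=0$ where $\tau$ is not solved for by the scale constraint. I would first try to handle this by working with the explicit solution $n_\Omega=\sqrt{\epsilon-n_An^A-2\Omega\tau}$ (adapted to signature), valid near $F_\red$. With that, all coordinates are global and smooth, and the triangular structure of $Q$ on $w,Qw$ makes the standard lemma apply directly.
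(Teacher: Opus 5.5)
Your proposal is correct and follows the paper's argument. The paper also uses $n_\Omega>0$ to declare $(n_A,Qn_A)$ and $(\tau,Q\tau)$ contractible pairs, with $n_\Omega$ eliminated via $\eI_{Sc}$, and their common zero locus is exactly the stated $F_\red$; your steps (i)--(ii) just spell out the Jacobian check and the standard contractible-pair lemma that the paper takes for granted. One small slip: solving the scale constraint gives $n_\Omega=\sqrt{1-\epsilon(n_An^A+2\Omega\tau)}$, not $\sqrt{\epsilon-n_An^A-2\Omega\tau}$.
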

Indeed, thanks to $n_\Omega>0$, functions $n_A,Qn_A$ and $\tau,Q\tau$ form contractible pairs. Moreover, they remain such on $F$ as well. Introducing $\xi\equiv\xi^\Omega+\lambda\Omega$ one finds that  
\begin{align}
\label{Fred-coord}
    \xi,\xi^A,\rho^{AB},\lambda,\lambda^A,\Omega
\end{align}
restricted to $F_{\red}$ form a coordinate system therein. In this coordinate system the action of 
$Q$ reads as:
\begin{align}\label{Q-M-red}
    \begin{split}
        Q\Omega&=\xi,\qquad Q\xi=0,\\
        Q\xi^A&=\xi^B\rho_{B}{}^{A}-\xi^A\lambda+\epsilon(\xi-\lambda\Omega)\lambda^A\Omega \,,\\
        Q\rho^{AB}&=\rho^{AC}\rho_{C}{}^{B}+\delta^{AB}_{CD}\lambda^{C}\xi^{D}-\epsilon\lambda^A\lambda^B\Omega^2\,,\\
        Q\lambda&=\xi^A\lambda_A,\quad Q\lambda^A=\lambda^B\rho_{B}{}^{A}-\lambda\lambda^A\,.
    \end{split}
\end{align}
We can also explicitly construct the $Q$-projection $\pi^\red_{\algg_\Lambda}:Int(F_{\red})\to \algg_\Lambda$ as the composition of $\pi_{\algg_\Lambda}$ with the embedding of $F_{\red}$ into $F$. In coordinates, it takes the form
\begin{align}\label{pi-red}
\begin{alignedat}{4}
(\pi^{\red}_{\algg_\Lambda})^{*}\tilde \xi^\Omega
    &= \Omega^{-1}\xi-\lambda\,,
&\qquad
(\pi^{\red}_{\algg_\Lambda})^{*}\tilde\xi^A
    &= \Omega^{-1}\xi^A\,,
\\
(\pi^{\red}_{\algg_\Lambda})^{*}\tilde\rho_{A}{}^{B}
    &= \rho_{A}{}^{B}\,,
&\qquad
(\pi^{\red}_{\algg_\Lambda})^{*}\tilde\rho_{\Omega}{}^{A}
    &= \epsilon\lambda^A\Omega-\Omega^{-1}\xi^A\, .
\end{alignedat}
\end{align}

The $Q$-boundary $(\d^Q F,Q)$ of $F$ is singled out by $\Omega=0$ and $Q\Omega=0$. We have the following:  
\begin{prop}\label{prop-ads-reduction}
$(\d^Q F,Q)$ is equivalent to $(\algg[1],\mathrm{d}_{\algg})$, where $\algg$ is a conformal algebra in $d$ dimensions and $\mathrm{d}_{\algg}$ is its CE differential.
\end{prop}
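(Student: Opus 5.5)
The plan is to work with the equivalent reduction $(F_\red,Q)$ rather than $F$. Since $n_A,Qn_A$ and $\tau,Q\tau$ form contractible pairs on all of $F$, including near $\Omega=0$, the reduction should be compatible with the boundary: $\d^Q F_\red=\d^Q F\cap F_\red$ ought to be an equivalent reduction of $\d^Q F$. So it suffices to show that $\d^Q F_\red$ is isomorphic, as a $Q$-manifold, to $(\algg[1],\dg)$ with $\algg=\mathfrak{so}(d+1,1)$ (or the appropriate real form for the signature of $g_{AB}$), where $d=\dim X-1$.

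\textbf{Step 1: boundary coordinates and $Q$.} Set $\Omega=0$ and $\xi=0$ in the coordinates \eqref{Fred-coord}. The remaining coordinates on $\d^Q F_\red$ are $\xi^A,\rho^{AB},\lambda,\lambda^A$, all of degree one. Since $Q$ is tangent to the boundary, the induced action is read off from \eqref{Q-M-red} by dropping every term that contains $\Omega$ or $\xi$:
\begin{equation*}
Q\xi^A=\xi^B\rho_B{}^A-\xi^A\lambda\,,\qquad Q\rho^{AB}=\rho^{AC}\rho_C{}^B+\delta^{AB}_{CD}\lambda^C\xi^D\,,
\end{equation*}
\begin{equation*}
Q\lambda=\xi^A\lambda_A\,,\qquad Q\lambda^A=\lambda^B\rho_B{}^A-\lambda\lambda^A\,.
\end{equation*}
The body of $\d^Q F_\red$ is a point, so $\d^Q F_\red$ is a linear graded manifold $V[1]$ and $Q$ is quadratic in the coordinates.

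\textbf{Step 2: identify the Lie algebra.} A quadratic homological vector field on $V[1]$ is exactly the CE differential of a Lie algebra structure on $V$, since $Q^2=0$ is equivalent to the Jacobi identity. It therefore remains to read off the structure constants. Take generators $P_A$ (dual to $\xi^A$), $M_{AB}$ (dual to $\rho^{AB}$), $D$ (dual to $\lambda$) and $K_A$ (dual to $\lambda^A$). The terms $\xi\rho$, $\lambda\rho$ and $\rho\rho$ say that $P$ and $K$ are Lorentz vectors. The terms $-\xi\lambda$ and $-\lambda\lambda^A$ give $D$ weights $\pm1$ on $P$ and $K$. The terms $\xi^A\lambda_A$ and $\delta\lambda\xi$ encode $[K,P]\sim g\,D+M$. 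These are precisely the conformal algebra relations in $d$ dimensions for the metric $g_{AB}$; the signature depends on $\epsilon$, giving Euclidean for AdS and Lorentzian for dS.

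\textbf{Main obstacle.} The delicate point is justifying that restricting an equivalent reduction to the $Q$-boundary is again an equivalent reduction. This requires checking that the contractible pairs $(n_A,Qn_A)$ and $(\tau,Q\tau)$, together with the constraints $\rho^{\Omega A}-\lambda^A\Omega$ and $\lambda^\Omega$, stay independent and contractible after imposing $\Omega=0$ and $Q\Omega=0$. In particular one must check that at $\Omega=0$ the constraint $\eI_{Sc}$ still fixes $n_\Omega$ (to $\sqrt{\epsilon}$, up to normalization) using $n_\Omega>0$. One must also check that the pair $\rho^{\Omega A},Q\rho^{\Omega A}$ becomes contractible with $\xi^\Omega$ traded for $\xi$. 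To do this I would explicitly write $Q$ on $\d^Q F$ in the coordinates $\Omega,\xi,n_A,\tau,\ldots$, show that $Q n_A=-\xi_A\tau+\rho_A{}^\Omega n_\Omega+\ldots$ can be solved for $\rho_A{}^{\Omega}$ since $n_\Omega\neq0$, and similarly that $Q\tau$ can be solved for $\lambda^\Omega$. This exhibits the pairs as coordinates of a $T[1]V$ factor. The signs of the structure constants then need a routine comparison with a standard basis of the conformal algebra.
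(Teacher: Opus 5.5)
Your proposal follows the paper's route: reduce to $F_\red$, set $\Omega=\xi=0$, read off $Q$ from \eqref{Q-M-red}, and recognise the Chevalley--Eilenberg differential of the conformal algebra. Your ``main obstacle'' paragraph, which solves $Qn_A$ for $\rho_A{}^\Omega$ and $Q\tau$ for $\lambda^\Omega$ using $n_\Omega>0$, spells out what the paper only asserts is ``easily seen'', namely that reducing commutes with taking the $Q$-boundary.

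One side remark needs correcting. Since $g_{AB}=\mathrm{diag}(-\epsilon,-1,\dots,-1)$, the AdS case $\epsilon=-1$ gives a Lorentzian boundary metric and $\algg\simeq\mathfrak{so}(2,d)$, while dS gives the Euclidean one; you have these the wrong way round. Also, since $g^{\Omega\Omega}=\epsilon$, the constraint fixes $n_\Omega=1$ for either sign of $\epsilon$, not $\sqrt{\epsilon}$.
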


\begin{proof}
Instead of performing the equivalent reduction of $(\d^Q F,Q)$ one can first equivalently reduce $F$ to $F_\red$ and then take the $Q$-boundary. The result is easily seen to be the same. The $Q$-boundary $\partial^Q F_{\red}$ of $(F_\red,Q)$ is singled out by $\Omega=\xi=0$ and we denote by $b:\d^QF_{\red}\hookrightarrow F_{\red}$ the corresponding embedding map. Introducing the notation $\hat f\equiv b^{*}f\,, \forall f \in  C^\infty(F_{\red})$, we choose as coordinates on $\partial^QF_{\red}$ the following functions:
\begin{align}
    \hat \xi^A,\hat \rho^{AB}, \hat \lambda, \hat \lambda^A\,,
\end{align} 
where $\xi^A,\rho^{AB},  \lambda,  \lambda^A$ are part of the coordinate system \eqref{Fred-coord} on $F_\red$. In these coordinates the action of $Q$ reads as (cf. \eqref{Q-M-red}):
\begin{align}
    \begin{split}
        Q\hat \xi^A&=\hat \xi^B\hat \rho_{B}{}^A-\hat \xi^A\hat \lambda\,,\quad \\
        Q\hat \rho^{AB}&=\hat \rho^{A}{}_{C}\hat \rho^{CB}+\delta^{AB}_{CD}\hat \lambda^{C}\hat \xi^D\,,\\
        Q\hat \lambda&=\hat\xi^A\hat\lambda_A,\quad Q\hat \lambda^A=\hat \lambda^B\hat \rho_{B}{}^{A}-\hat \lambda\hat \lambda^A\,.
    \end{split}
\end{align}
One can readily see that $\d^Q F_{\red}$ is the Chevalley-Eilenberg  complex for the conformal algebra in $d\equiv D-1$ dimensions.
\end{proof}
The following technical proposition will be useful for us in what follows:
\begin{prop}\label{prop-calculus-ads}
Let  
$\nabla \equiv [Q,\md ]$, where $\md=\frac{\partial}{\partial \xi}$ is defined using the coordinate system \eqref{Fred-coord}, be a vector field on $F_{\red}$. Then for $N>0$ one has:
   \begin{align}
    \begin{split}
                &b^{*}\nabla^{N}\lambda^A=0,\quad b^{*}\nabla^N\lambda=0\,,\\
        &b^{*}\nabla^N\xi^A=\delta_{N,2}\epsilon\hat \lambda^A, \quad b^*\nabla^N\rho^{AB}=0\,,\\
        &b^{*}\nabla^N\xi^\Omega=-\delta_{N,1}\hat\lambda\,,
            \end{split}
    \end{align}
    where $b: \d^Q F_\red \hookrightarrow F_\red$ and $\xi^\Omega\equiv  \xi-\lambda\Omega$.
\end{prop}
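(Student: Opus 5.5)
The plan is a direct computation in the coordinate system \eqref{Fred-coord}, using the explicit form \eqref{Q-M-red} of $Q$ on $F_\red$. Since $\md=\dl{\xi}$ has degree $-1$ and $Q$ has degree $1$, the graded commutator is $\nabla = Q\md+\md Q$. This is an even vector field of degree $0$, so it acts on products by the ordinary Leibniz rule with no signs.

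On any coordinate $f\in\{\xi^A,\rho^{AB},\lambda,\lambda^A,\Omega\}$ we have $\md f=0$, hence $\nabla f=\md(Qf)$. Moreover $\nabla\Omega=1$ and $\nabla\xi=0$, as already noted in general. The proof therefore reduces to extracting the $\xi$-dependence of the right-hand sides in \eqref{Q-M-red}.

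\textbf{Step 1.} I show that $\nabla$ kills $\lambda$, $\lambda^A$ and $\rho^{AB}$.
\begin{itemize}
\item $Q\lambda=\xi^A\lambda_A$ and $Q\lambda^A=\lambda^B\rho_B{}^A-\lambda\lambda^A$ contain no $\xi$.
\item $Q\rho^{AB}$ involves only $\rho$, $\lambda^C\xi^D$ and $\lambda^A\lambda^B\Omega^2$, none of which contains $\xi$. (Note that $\xi^D$ with an index is a different coordinate from $\xi$.)
\end{itemize}
Hence $\nabla\lambda=\nabla\lambda^A=\nabla\rho^{AB}=0$ identically on $F_\red$. Consequently all iterates $\nabla^N$, $N>0$, vanish on these coordinates, before pulling back by $b^*$.

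\textbf{Step 2.} I compute the iterates on $\xi^A$. The only $\xi$-dependent term in $Q\xi^A$ is $\epsilon\,\xi\lambda^A\Omega$. Since $\xi$ stands leftmost, no sign arises and
\begin{equation*}
\nabla\xi^A=\epsilon\lambda^A\Omega .
\end{equation*}
Applying the Leibniz rule together with Step~1 and $\nabla\Omega=1$ gives
\begin{equation*}
\nabla^2\xi^A=\epsilon\lambda^A,\qquad \nabla^N\xi^A=0\quad (N\geq 3).
\end{equation*}
Pulling back by $b^*$ sets $\Omega=0$, so the $N=1$ term drops and $b^*\nabla^N\xi^A=\delta_{N,2}\,\epsilon\hat\lambda^A$.

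\textbf{Step 3.} I compute the iterates on $\xi^\Omega=\xi-\lambda\Omega$. Using $\nabla\xi=0$, $\nabla\lambda=0$ and $\nabla\Omega=1$,
\begin{equation*}
\nabla\xi^\Omega=-\lambda,\qquad \nabla^N\xi^\Omega=0\quad (N\geq 2).
\end{equation*}
This yields $b^*\nabla^N\xi^\Omega=-\delta_{N,1}\hat\lambda$.

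The only point requiring care is the sign and ordering convention in $\md(Qf)$, in particular for the term $\epsilon(\xi-\lambda\Omega)\lambda^A\Omega$. A wrong sign there would flip the $\epsilon\hat\lambda^A$ term, so I would double-check it by verifying $\nabla\Omega=\md(\xi)=1$ with the same convention. Otherwise the argument is routine bookkeeping.
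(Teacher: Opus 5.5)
Your proposal is correct and is exactly the direct computation the paper has in mind (the paper only says the proof is straightforward from \eqref{Q-M-red}): $\nabla$ annihilates $\lambda$, $\lambda^A$ and $\rho^{AB}$ because their $Q$-images contain no $\xi$, while $\nabla\xi^A=\epsilon\lambda^A\Omega$ and $\nabla\xi^\Omega=-\lambda$, together with $\nabla\Omega=1$, give the remaining claims once $b^*$ sets $\Omega=0$. Your sign concern is harmless: $\md$ acts as a left derivation with $\md\xi=1$, and $\xi$ stands leftmost in $\epsilon\,\xi\lambda^A\Omega$.
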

The proof is straightforward and based on using \eqref{Q-M-red}. In practice, we will also frequently need to compute expressions of the form $b^{*}\nabla^k \xi{}\ms{N}_{a_1\dots a_{D-N}}$. The corresponding combinatorics is collected in Appendix~\bref{app: Faa}.

Using this statement, we can illustrate the map of cocycles from Theorem~\bref{th-renorm} using the following simple example:
\begin{example}\label{ex-cocycleads}
    Consider the following cocycle on $\mathfrak{g}_\Lambda[1]$
    \begin{align}
        \tilde f\equiv \frac{1}{D!}\epsilon_{a_1\dots a_D}\tilde\xi^{a_1}\dots \tilde\xi^{a_{D}}\,.
    \end{align}
    By Proposition~\bref{prop-Ads-equiv}, this cocycle can be uplifted to $Int(F)$ and subsequently pulled back to  $Int(F_{\red})$, yielding
    \begin{align}
        \tilde f=\Omega^{-D}\frac{1}{D!}\epsilon_{a_1\dots a_D}\xi^{a_1}\dots \xi^{a_{D}}\,.
    \end{align}
    Observing that $f\equiv \Omega^D\tilde f$ admits a smooth extension to the boundary, we can apply Theorem~\bref{th-renorm} and Proposition~\bref{prop-calculus-ads} to obtain cocycles $f^\cA, f^\cR$ on $\partial^QF_{\red}$. For example, for $D=5$, we have
    \begin{align}
        f^\cA=-\frac{1}{16}\hat\lambda^A\hat\lambda^B\epsilon_{ABCD}\hat\xi^C\hat\xi^D,\quad f^\cR=\hat\lambda f^\cA\,.
    \end{align}
    It is straightforward to verify that these are indeed $Q$-cocycles.
    \end{example}

\subsubsection{Field-theoretical interpretation}\label{Ads-sections}
Just like the system $(\algg_\Lambda[1],\mathrm{d}_{\algg})\times (T[1]X,\dx)$, the boundary gPDE $(\partial^Q F_{\red},Q)\times (T[1]\d X,d_{\d X})$ describes flat connections valued in the orthogonal algebra. However, their field-theoretic interpretation differs. The most evident difference is that the dimension of the base differs by one. Moreover, from the viewpoint of Klein geometry, understood as flat Cartan geometry (see, for example, \cite{sharpe2000differential}), an important additional piece of data is which part of the connection is interpreted as the frame (soldering form) and, accordingly, required to be nondegenerate. In the first case, the natural candidate is $\sigma^{*}\tilde\xi^a$, see  \eqref{eq:Q-AdS-in-tilde-coord}, which corresponded to the realization of (A)dS space as the coset  $SO(D-1,2)/SO(D-1,1)$. In the boundary system, interpreting $\sigma^*\hat\xi^A \equiv e^A{}_{\mu}\theta^\mu$ as the soldering form, we arrive at the realization of conformal space as the quotient of the orthogonal group by its parabolic subgroup.

Using the natural gauge transformations encoded in $Q$ and the zero-curvature equations satisfied by $\sigma$ one can set to zero all the components of the connection save for the frame $e^A\equiv\sigma^{*}\hat\xi^A$ which can be set to  
\begin{align}
\label{flat-frame}
    \sigma^{*}\hat\xi^A=\delta^A_\mu\theta^\mu\,.
\end{align}
We refer to this gauge as the coordinate-like one. Note that if $\sigma$ is interpreted as a Cartan connection its natural gauge transformations are not sufficient to achieve~\eqref{flat-frame}. More precisely, to bring the local frame to the above form one also needs to employ the freedom in choosing local coordinates on $\d X$.

Although the coordinate-like gauge simplifies the expressions, it is at the same time very  restrictive and is not useful for keeping track of the background conformal transformations. It is often more convenient to work with a less restrictive gauge (strictly speaking, partial gauge), namely $\sigma^{*}\hat\lambda = 0$, in which all connection components except $e^A{}_{\mu}$ can be expressed in terms of
$\gamma_{\mu\nu} \equiv e^A{}_{\mu} g_{AB} e^B{}_{\nu}$ using the zero-curvature condition. In particular, in this gauge one finds
\begin{align}
    e^A{}_{\mu}\sigma^{*}\lambda_A=-\frac{1}{d-2}(R_{\mu\nu}[\gamma]-\gamma_{\mu\nu}\frac{R[\gamma]}{2(d-1)})\theta^\nu\equiv -P_{\mu\nu}[\gamma]\theta^\nu\,,
\end{align}
where $P_{\mu\nu}[\gamma]$ is the Schouten tensor constructed from the metric $\gamma_{\mu\nu}$. We refer to this gauge as the metric-like gauge, see, for example, \cite{Grigoriev:2025gvk} for details, including the curved case.

\begin{rem}
    Let $\sigma: T[1]\partial X \to \partial^Q F_{\red}$ be a solution in the metric-like gauge, i.e. $\sigma^*(\hat\lambda)=0$. The conserved charge determined by $f^\cA$ from Example \bref{ex-cocycleads},can be  expressed in terms of the metric $\gamma_{\mu\nu}$ and takes the following form:
        \begin{align}
        \int_{T[1]\d X}\sigma^*(f^\cA)=
        \int_{{\d X}} d^4x\sqrt{\gamma}(P_{\mu\nu}P^{\mu\nu}-P^2)\,.
    \end{align}
    This expression formally coincides with the action of conformal gravity up to a topological term. Of course, this should not be interpreted literally as a holographic construction of the conformal gravity action, 
    since the $Q$-map condition on $\sigma$ implies that the metric $\gamma$ is conformally flat. In particular, the above integrand can be set to zero, at least locally, by a gauge transformation.
\end{rem}

\subsection{Fields on compactified (A)dS background}

For the remainder of this section we are interested in Lagrangian theories defined on the background of compactified flat (A)dS space. 

Our starting point is the gPDE describing flat AdS gravity: $ (\mathfrak{g}_\Lambda \times T[1]X, \mathrm{d}_{\algg} + \mathrm{d}_{X})$, where $\algg_\Lambda$ is the AdS algebra.
This system is diffeomorphism-invariant by construction. In the gPDE formulation such systems correspond to gPDEs that are locally trivial as $Q$-bundles. This means that in a suitable trivialisation the total space $Q$-structure is the sum of the base space and the fibre $Q$ structures, as is manifestly so in the case at hand,  where $\mathrm{d}_{\algg} + \mathrm{d}_{X}$. Because the base space $Q$-structure is just the spacetime de Rham differential all the information about the system is encoded in the fiber $Q$-manifold so that one can disregard the base space, at least locally.~\footnote{If gPDE in question is a nontrivial bundle, the additional piece of data on top of the typical fiber and the spacetime manifold is the global structure of the bundle described by e.g. its gluing cocycles. To simplify the exposition, in this work we restrict ourselves to gPDEs which are globally trivial.}

What we are really interested in, is not the flat AdS gravity itself but rather a gauge theory defined on its background. Such a system is naturally described as a gPDE over background, where the base gPDE is precisely the  flat AdS gravity: $(\tilde{E}^\phi, \tilde{Q}) \xrightarrow{} (\mathfrak{g}_\Lambda \times T[1]X, \mathrm{d}_{\algg} + \mathrm{d}_{X})$, where $\phi$ schematically denotes the gauge field of interest. This system is again diffeomorphism-invariant and hence $(\tilde E^\phi,\tilde Q)$ is a product of $(T[1]X,\dx)$ and a $Q$-bundle over $(\mathfrak{g}_\Lambda, \mathrm{d}_{\algg}$). It is clear that all the information about the system is encoded in this bundle and we do not explicitly keep track of the spacetime manifold. We denote this bundle by $(\tilde{F}^\phi, \tilde{Q}) \to (\mathfrak{g}_\Lambda, \mathrm{d}_{\algg})$.

Recall that $(\mathfrak{g}_\Lambda, \mathrm{d}_{\algg} )$ is equivalent to the interior of $(F,Q)$ introduced in Section~\bref{sec:AdS-gravity}. In particular, the projection $\pi_{\mathfrak{g}_\Lambda}: Int(F) \xrightarrow{} \mathfrak{g}_\Lambda$  introduced in Proposition~\bref{prop-Ads-equiv} implements the equivalent reduction. This projection determines the pullback-bundle $\pi_{\mathfrak{g}_\Lambda}^*\tilde{F}^\phi$ over $(Int(F),Q)$, which describes an equivalent reformulation of $(\tilde{F}^\phi, \tilde{Q}) \to (\mathfrak{g}_\Lambda, \mathrm{d}_{\algg} )$ as a gPDE over the interior of the compactified flat AdS gravity.  Then we look for the compactification $(F^\phi,Q)$ of $(\tilde{F}^\phi, \tilde{Q})$, which in turn defines the desired  boundary structure of the initial gauge theory. In the case at hand, compactification means a $Q$-bundle $F^\phi\to F$ whose restriction to $Int(F)$ coincides with $\pi_{\mathfrak{g}_\Lambda}^*\tilde{F}^\phi \to Int(F)$. In practice, this amounts to defining  a near boundary coordinate system on $\tilde F^\phi \to Int(F)$, such that it extends to $E$. In more traditional language this corresponds to choosing the boundary behavior of fields.

In this work our focus is on the extra structure defined on a gPDE, which  encodes its (partially) Lagrangian formulation. More specifically, we are interested in compatible presymplectic structures. 
Suppose $(\tilde{F}^\phi, \tilde{Q}) \xrightarrow{} (\mathfrak{g}_\Lambda, \mathrm{d}_{\algg})$ be equipped with a compatible presymplectic structure:
\begin{equation} \label{eq:523}
    L_{\tilde{Q}} \tilde{\omega} \in \tilde{\cI}_{\mathfrak{g}_\Lambda[1]}\,.
\end{equation}
Then it defines a structure on $Int(F^\phi)$ obeying 
\begin{equation}
    L_Q \tilde \omega \in \cI_{Int(F)}\,,
\end{equation}
where by some abuse of notations we use the same symbol to denote $\pi_{\mathfrak{g}_\Lambda}^* \tilde \omega$. By construction the ideal $\cI_{Int(F)}$ is extendable, therefore, if $\tilde{\omega} \in \bigwedge_{\fop}^2(Int(F^\phi))$ one can apply $\cR, \cA$ maps to obtain cocycles on $\partial^Q F^\phi$. 

In general the starting point gPDE  $(\tilde{F}^\phi, \tilde{Q}) \xrightarrow{} (\mathfrak{g}_\Lambda, \mathrm{d}_{\algg})$ might be implicit, with some ideal $\tilde{\eI}^\phi$ and equation \eqref{eq:523} might only be satisfied up to $\tilde{\eI}^\phi$. The whole construction goes through if the corresponding ideal on $Int(F^\phi)$ is extendable which will be the case in our examples.

Let us collect all the bundles in question into one diagram, each arrow is a $Q$-map and arrows without labels are canonical:
\begin{equation}\nonumber
\begin{tikzcd}
\tilde F^\phi \arrow[d, "\tilde\pi^\phi"] &  \arrow[l,""'](\pi_{\mathfrak{g}_{\Lambda}})^{*}\tilde F^\phi \arrow[d,""'] \arrow[r, hook, "compactification"] & [5em] F^\phi \arrow[d,"\pi^\phi"] & \arrow[l,hook] F_{\red}^\phi \arrow[d] &  \arrow[l, hook] \partial^Q F_{\red}^\phi\arrow[d,""']\\
\mathfrak{g}_\Lambda[1] & \arrow[l,"\pi_\mathfrak{g}"'] Int(F) \arrow[r, hook, "compactification"] & F & \arrow[l, hook, "r"] F_{\red} & \arrow[l, hook, "b"] \partial^Q(F_{\red}) \\
\end{tikzcd}
\end{equation}
Here $F^\phi_{\red}\equiv r^*F^\phi$ and $\partial^QF^\phi_\red=b^{*}F^\phi_\red$.

To conclude, under the assumption that the ideals are extendable and the compactification of $\tilde F^\phi$ is fixed, the initial gPDE $(\tilde F^\phi,\tilde Q)$ equipped with the compatible presymplectic structure induces a boundary gPDE equipped with a pair of compatible presymplectic structures $\omega^\cA$ and $\omega^\cR$. The structure $\omega^\cR$ has degree equal to the boundary dimension. Hence, the associated presymplectic BV-AKSZ action has degree one and is to be interpreted as a BFV-BRST-like charge. The structure $\omega^\cA$, on the other hand, carries precisely the right degree to define a compatible presymplectic structure on the induced boundary gPDE. The associated intrinsic action~\eqref{AKSZ-action} is therefore of degree zero and is interpreted as an action functional on the space of asymptotic boundary data.

\subsection{Example: Scalar} \label{sec:scalar-ads}
In this section, we illustrate the general construction with the example of Klein-Gordon (KG) field on the background of (compactified) flat AdS gravity. Instead of constructing the compactification of this system step by step, we present the result and then comment on possible ways to derive it. Namely, consider an  implicit gPDE
\begin{align}
    (F^\varphi,Q,\eI^\varphi)\to(F,Q)
\end{align}
with fiber coordinates $\nabla_{(a)}\varphi\equiv\{\nabla_{(a_1}\dots\nabla_{a_n)}\varphi,n\geq0\}$.
Here, following \cite{Grigoriev:2025gvk} and by analogy with the example of Section~\bref{sec:example-M}, we use convenient coordinates generated by the vector fields $\nabla_a=\left[Q,\frac{\partial}{\partial \xi^a}\right]$.
Let us note that, for the chosen background $F$, the nilpotency of $Q$ implies, in particular, $[\nabla_a,\nabla_b]=0$.
The action of $Q$ on the fiber coordinates is given by:
\begin{align}\label{Qphi-comp-ads}
Q\varphi=\xi^a\nabla_a\varphi+w\lambda\varphi
\end{align}
and prolonged to other coordinates using  $[\nabla_a,Q]=0$. Note that we are actually dealing 
with the family of systems parameterized by $w\in \fR$. The ideal $\eI^\varphi$ is generated by $\nabla_b$-prolongations of
\begin{multline}\label{Scalar-ideal}
    K~~=~~\Omega^{2}\nabla_{a}\nabla^{a}\varphi~+~ \Omega((1-2w-d)n^a\nabla_a \varphi ~+~(d+1)w\tau \varphi) ~+\\ (w(d+w)n_a n^{a} +m^2)\varphi\,.
\end{multline}
Recall that by $\nabla_a$ prolongations of $K$ we mean $\nabla_{(b_1} \ldots \ldots \nabla_{b_k)}K, \, k>0$. It is easy to check that on $Int(F)$, i.e. where $\Omega>0$, this system is just a reformulation  of the usual gPDE description of  KG field. Indeed, the equivalent reduction to $\Omega=1,  Q\Omega=0, n^a=n_0^a, Qn^a=0$ brings it back.

This system can be obtained in two ways. The first is to view it as the zero-curvature limit of the Klein-Gordon equation on a general asymptotically AdS background, as described in \cite{Grigoriev:2025gvk}. The second approach is to consider the standard implicit gPDE for the Klein--Gordon equation on the background $\algg_\Lambda[1]$, with fiber coordinates $\tilde\nabla_{(a)}\tilde\varphi$, and then lift it to $Int(F)$ using the projection $\pi_{\algg_\Lambda}$ introduced in Proposition \bref{prop-Ads-equiv}, thereby introducing new near-boundary coordinates
\begin{align}\label{scalar-near-b}
    \nabla_{(a)}\varphi \equiv \nabla_{(a)}(\Omega^w \pi_{\algg_\Lambda}^{*}\tilde\varphi)
\end{align}
and rewriting accordingly the ideal generated by $\tilde K\equiv\tilde\nabla_a\tilde\nabla^a\tilde\varphi + m^2 \tilde\varphi$ so that $\eI^\varphi$ is generated by the $\nabla_a$-prolongations of $\Omega^w \pi_{\algg_\Lambda}^{*}\tilde K$.

Let us constrain the parameters $m,w$ via 
\begin{align}
    w=\ell-\frac{d}{2},\quad m^2=-\epsilon w(d+w),\quad \ell\in\mathbb{N}/2,
\end{align}
where the choice of mass is standard in the AdS/CFT context; the restriction of $\ell$ to half-integer  numbers, as we anticipate, will be related to the divergence behaviour of the presymplectic potential. Note that there can be nontrivial boundary systems even if the above relations do not hold but they are not of direct interest in the present context. See however, the discussion in~\cite{Grigoriev:2025gvk}.

 We now apply the reduction described in Proposition \bref{prop-ads-reduction}. We denote by $F^\varphi_{\red}\equiv r^*F^\varphi$, where $r$ denotes the embedding of $F_{\red}$ into $F$, the corresponding pullback bundle. Splitting the index $\{ a\} = \{\Omega, A \}$ and introducing notation $\varphi^{(N)}\equiv (\nabla)^N\varphi$ coordinates  of the fiber can be chosen as follows:
\begin{align}\label{scalar-near-bound}
    \nabla_{(A)}\varphi^{(N)}, \quad \varphi^{(N)}\equiv\nabla^N\varphi,\quad N\geq0\,.
\end{align}
Pulling back to the $Q$-boundary of the base, we obtain
\begin{align}
    (\partial^QF_{\red}^{\varphi},Q, \hat \eI^\varphi)\to(\partial^QF_{\red},Q)
\end{align}
with fiber coordinates $\nabla_{(A)}\hat\varphi$.  Here we use the notation $b^{*}f\equiv \hat f$ for all $ f\in\cC^\infty(F^\varphi_{\red})$ and also $\hat\eI^\varphi\equiv b^{*}\eI^\varphi.$ Let us note that the vector fields $\nabla_A$ on $F^\varphi$ are tangent to both $F^\varphi_{\mathrm{red}}\subset F^\varphi$ and $\partial^Q F^\varphi_{\mathrm{red}}\subset \partial^Q F^\varphi$. Therefore, we denote their restrictions by the same symbol $\nabla_A$.

The ideal $\hat\eI^\varphi$ is generated by $\nabla_A$-prolongations of
\begin{align}\label{phi-gran-eq}
(N-2\ell)\hat\varphi^{(N)}+(N-1)\varepsilon\nabla_A\nabla^A\hat\varphi^{(N-2)}\,,\quad N\geq1\,.
\end{align}
It follows that 
\begin{align}
    \begin{split}
        \hat\varphi^{(i)}\propto\Box\hat\varphi^{(i-2)}+\hat\eI^\varphi\,,\quad i\neq0,1,2\ell\,,
    \end{split}
\end{align}
where $\Box\equiv\nabla_A\nabla^A.$ By iterating this procedure one concludes that all $\hat \varphi^{(i)}$ can be expressed in terms of the remaining ones modulo $ \eI^\varphi$, except for $\hat \varphi^{(0)}$ and $\hat \varphi^{(2\ell)}$, which are interpreted as the leading and subleading values of the scalar field, respectively. Moreover, when $\ell$ is an integer, we have 
\begin{align}
    \Box^\ell \hat\varphi\in\hat \eI^\varphi
\end{align}
which can be interpreted as the flat limit of the GJMS equation; see \cite{Grigoriev:2025gvk} for an analogous analysis in the curved case.

The action of $Q$ on $\partial^QF_{\red}^{\varphi}$ can be easily obtained using Proposition \bref{prop-calculus-ads} and the action of $Q$ in the bulk \eqref{Qphi-comp-ads}:
\begin{align}
    \begin{split}
    &Q\hat\varphi^{(i)}=\hat\xi^A\nabla_A\hat\varphi^{(i)}
+\frac{i(i-1)}{2}\epsilon\hat\lambda^A\nabla_A\hat\varphi^{(i-2)}+(\ell-i-d/2)\hat \lambda\hat \varphi^{(i)}\,.\\
    \end{split}
\end{align}

Now let us turn our attention to presymplectic structures. The presymplectic potential for KG on the background of flat AdS gravity is given by, see e.g.~\cite{Dneprov:2025eoi}
\begin{equation}\label{chi-stand}
\tilde \chi'_\varphi=\tilde\xi\ms{d}_a\tilde\nabla^a\tilde\varphi d\tilde\varphi\,,
\end{equation}
where we assumed that the background flat AdS gravity is described by a minimal gPDE $(\algg_\Lambda[1],\dg)\times (T[1]X,\dx)$ with fiber coordinates $\tilde \xi^a,\tilde \rho^{ab}$. Note that the same expression holds true for generic gravitational background~\cite{Alkalaev:2013hta}. This presymplectic potential can be pulled back to $(Int(F^\varphi_\red),Q)$ by the canonical  projection $\pi^\red_{\algg_\Lambda}:Int(F_\red) \to \algg_\Lambda[1]$ given by~\eqref{pi-red}, making 
$(Int(F^\varphi_\red),Q)$ into an equivalent gPDE with compatible presymplectic structure. Indeed,
representing $(Int(F_\red),Q)$ as a product $Q$-manifold $(\algg_\Lambda[1],\dg)$ and a contractible piece one finds that $(Int(F^\varphi_\red),Q)$ factorizes into the starting point gPDE with presymplectic structure over $(\algg[1],\dg)$ and the contractible piece. The latter is clearly in the kernel of the presymplectic structure and hence does not contribute to the corresponding  BV formulation and the intrinsic action in particular.

Details of the derivation of the explicit form of $\chi'_\varphi$ pulled back to $Int(F^\varphi_\red)$
are relegated to Appendix~\bref{app:scalar-derivation}. Here we immediately present the equivalent (differing by the exact 1-form and terms from the ideal) presymplectic potential and state its properties:
\begin{prop}\label{ads-scalar-presymp} 1-form 
    \begin{align}
        \tilde\chi_\varphi=\Omega^{-d/2-\ell+1}(\xi\ms{d}_\Omega\epsilon \varphi^{(1)}+\xi\ms{d}_A\nabla^A\varphi)d(\Omega^{d/2-\ell}\varphi)
    \end{align}
 on $(Int (F_{\red}^\varphi),Q)$ is a compatible presymplectic potential., i.e. it satisfies 
    \begin{align}
    L_Q\tilde\chi_\varphi=d(-\tilde H_\varphi-i_Q\tilde\chi_\varphi)+\cI_{Int(F)}+\eI^\varphi\,.
    \end{align}
Covariant Hamiltonian $H_\varphi$ is given explicitly by:
\begin{multline}
\tilde H_\varphi=\frac{(-1)^{d+1}}{2}\Omega^{-2\ell}[\xi\ms{d+1}(\Omega(\epsilon(\varphi^{(1)})^2+\nabla_A\varphi\nabla^A\varphi)\,\,+\\
(d-2\ell)\epsilon\varphi \varphi^{(1)})+(d/2-\ell)\Omega \varphi^2\lambda^A\xi_A\ms{d}]\,.
\end{multline}
Moreover, this system is equivalent to the starting point minimal gPDE over $\algg_\Lambda[1]$ with presymplectic structure~\eqref{chi-stand}
in the sense that it can be represented as the product of this gPDE and the contractible gPDE with trivial presymplectic structure. Strictly speaking this applies to $\tilde\chi_\varphi$ modified by the $d$-exact term and the term from the ideal $\cE^\varphi$.
\end{prop}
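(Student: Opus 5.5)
The plan is to obtain $\tilde\chi_\varphi$ as the pullback of the standard potential \eqref{chi-stand} along $\pi^\red_{\algg_\Lambda}$, corrected by a $d$-exact term and a term from the ideal, and to read off $\tilde H_\varphi$ from the known Hamiltonian on $\algg_\Lambda[1]$. On $\algg_\Lambda[1]$ the compatibility $L_{\tilde Q}\tilde\chi'_\varphi+d(\tilde H'+i_{\tilde Q}\tilde\chi'_\varphi)\in\cI_{\algg}+\tilde\eI^\varphi$ holds with $\tilde H'=\pm\tfrac12\tilde\xi\ms{D}(\tilde\nabla_a\tilde\varphi\tilde\nabla^a\tilde\varphi-m^2\tilde\varphi^2)$. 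Because $\pi^\red_{\algg_\Lambda}$ and the bundle lift are $Q$-maps, and because $\cI_{\algg}$ pulls back into $\cI_{Int(F)}$ (its generators $d\tilde\xi^a,d\tilde\rho^{ab}$ pull back to $d$ of functions on the base $Int(F_\red)$), the pulled-back pair automatically satisfies the compatibility relation on $Int(F^\varphi_\red)$. It therefore suffices to rewrite the pullbacks in the coordinates \eqref{Fred-coord} and \eqref{scalar-near-bound}, and to show that the difference between the pullback and the stated $\tilde\chi_\varphi$ is $d$ of a function plus an element of $\cI_{Int(F)}+\eI^\varphi$. The Hamiltonian then shifts accordingly: an exact shift $df$ changes $\tilde H$ by $-Qf$, and the ideal terms are absorbed.

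The key computation proceeds in three steps. First, use \eqref{pi-red} to express $\tilde\xi\ms{d}_a$ through $\Omega^{-d}\xi\ms{d}_a$, using $\tilde\xi^\Omega=\Omega^{-1}(\xi-\lambda\Omega)$ and $\tilde\xi^A=\Omega^{-1}\xi^A$. Second, express $\tilde\varphi=\Omega^{-w}\varphi$ with $w=\ell-d/2$, which gives $d\tilde\varphi=d(\Omega^{d/2-\ell}\varphi)$. Third, compute $\tilde\nabla_a\tilde\varphi$ in terms of $\nabla$-coordinates. Here $\tilde\nabla_a=[Q,\partial/\partial\tilde\xi^a]$, and $\partial/\partial\tilde\xi^a$ is a combination of $\Omega\,\partial/\partial\xi^a$ plus $\rho$-derivatives coming from the shift in $\tilde\rho$. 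This should yield $\tilde\nabla_A\tilde\varphi\sim\Omega^{1-w}\nabla_A\varphi$ and $\tilde\nabla_\Omega\tilde\varphi\sim\Omega^{1-w}(\epsilon\varphi^{(1)}+\dots)-w\Omega^{-w}\varphi$. Collecting the powers of $\Omega$ then gives the prefactor $\Omega^{-d/2-\ell+1}$.

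The terms not of the stated form fall into two kinds. Pieces proportional to $\varphi\,d(\Omega^{d/2-\ell}\varphi)$ times a function of the base coordinates are $d$-exact modulo $\cI_{Int(F)}$. Pieces proportional to $\lambda$ or $d$ of base coordinates lie in the ideal. For the Hamiltonian I will pull back $\tilde H'$, add $-Qf$ for the exact correction $f$, and use the generator $K$ from \eqref{Scalar-ideal}, together with $m^2=-\epsilon w(d+w)$ and the constraint $\eI_{Sc}$ restricted to $F_\red$ (where $n_\Omega=1$, $\tau=0$), to eliminate second-order terms. The term $\lambda^A\xi\ms{d}_A\varphi^2$ should come from $Q$ acting on $\xi\ms{d+1}$ via $Q\xi^A\ni\epsilon(\xi-\lambda\Omega)\lambda^A\Omega$. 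As a cross-check I will verify $L_Q\tilde\chi_\varphi+d(\tilde H_\varphi+i_Q\tilde\chi_\varphi)$ directly, using \eqref{Q-M-red} and $Q\varphi=\xi^a\nabla_a\varphi+w\lambda\varphi$.

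For the equivalence statement, Proposition~\bref{prop-Ads-equiv} and the reduction to $F_\red$ show that $Int(F_\red)$ is $\algg_\Lambda[1]$ times a contractible $Q$-manifold. The adapted coordinates are $\tilde\xi,\tilde\rho$, the contractible pair $(\Omega,\xi)$, and the pair $(\lambda,\lambda^A)$ suitably redefined. The lifted bundle is then the pullback of the minimal gPDE, with fiber coordinates $\tilde\nabla_{(a)}\tilde\varphi$. The pulled-back potential involves only these coordinates, so the contractible directions lie in its kernel. I expect the main obstacle to be the bookkeeping in the third step, in particular getting the exact coefficients of the $\varphi\varphi^{(1)}$ and $\lambda^A\xi_A\ms{d}$ terms in $\tilde H_\varphi$ and their signs. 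I would handle this by organizing the computation around $\Omega L_Q$ identities like \eqref{eqOmegalq}.
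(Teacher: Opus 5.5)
Your overall strategy is the paper's (Appendix~E). You pull back $\tilde\chi'_\varphi=\tilde\xi\ms{d}_a\tilde\nabla^a\tilde\varphi\,d\tilde\varphi$ using $\xi^a=\Omega\tilde\xi^a$ and $\tilde\varphi=\Omega^{-w}\varphi$, which gives $\Omega^{-d-w}\xi\ms{d}_a(\Omega\nabla^a\varphi-wn^a\varphi)\,d(\Omega^{-w}\varphi)$. The $n^a\varphi$ piece, $-\tfrac{w}{2}\Omega^{-d}\xi\ms{d}_an^a\,d(\Omega^{-2w}\varphi^2)$, is then split as $d\alpha+\beta$, with $\alpha\propto\Omega^{-d-2w}\xi\ms{d}_an^a\varphi^2$ and $\beta=(-1)^d\tfrac{w}{2}\varphi^2\Omega^{-2w}d(\Omega^{-d}\xi\ms{d}_an^a)\in\cI_{Int(F)}$.

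\textbf{The gap.} Your rule for the Hamiltonian, ``an exact shift $df$ changes $\tilde H$ by $-Qf$, and the ideal terms are absorbed'', is backwards. The relation $L_Q\chi=d(-H-i_Q\chi)+\cI$ is equivalent to $i_Q\,d\chi+dH\in\cI$, so $H$ depends on $\chi$ only through $\omega=d\chi$. Hence an exact shift of $\chi$ leaves $\tilde H$ unchanged. An ideal shift does not: under $\chi\to\chi-\beta$ one has $\omega\to\omega-d\beta$, and $i_Q d\beta=L_Q\beta+d\,i_Q\beta$ with $L_Q\beta\in\cI$. So one must take $\tilde H\to\tilde H+i_Q\beta$, which is exactly the paper's $\tilde H_\varphi=\tilde H'_\varphi+i_Q\beta$. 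The term $i_Q\beta$ is neither zero nor in the ideal: it is $\varphi^2$ times $Q(\Omega^{-d}\xi\ms{d}_an^a)$.

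\textbf{Why it matters.} This term produces the structure of the stated $\tilde H_\varphi$ in two ways.
\begin{itemize}
\item It turns the coefficient $w^2n_an^a$ of $\xi\ms{d+1}\varphi^2$ in the pulled-back $\tilde H'$ into $w(d+w)n_an^a$. On $F_{\red}$, where $n_an^a=\epsilon$ and $\tau=0$, this cancels against $m^2=-\epsilon w(d+w)$, so no use of $K$ is needed.
\item It generates the $(d/2-\ell)\Omega\varphi^2\lambda^A\xi\ms{d}_A$ term, via the piece $\epsilon(\xi-\lambda\Omega)\lambda^A\Omega$ of $Q\xi^A$ acting on $\xi\ms{d}_\Omega$. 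Acting on $\xi\ms{d+1}$, as you suggest, it drops out because $(\xi^\Omega)^2=0$.
\end{itemize}
If you instead add $\pm Q\alpha$ and discard $i_Q\beta$, the result differs from the correct Hamiltonian by $\varphi$-dependent terms whose differential is not in $\cI_{Int(F)}+\eI^\varphi$. For instance, $Q\alpha$ contains a multiple of $\Omega^{-d}\xi\ms{d}_an^a\,Q(\Omega^{-2w}\varphi^2)$, which shifts the $\varphi\varphi^{(1)}$ coefficient away from $(d-2\ell)\epsilon$. So exactly the coefficients you flagged as the main obstacle would come out wrong, and your direct cross-check would fail rather than confirm.

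\textbf{Two smaller slips.}
\begin{itemize}
\item Terms merely proportional to $\lambda$ are not in $\cI_{Int(F)}$, since only differentials of base functions generate it. Nothing actually needs discarding, because $\tilde\xi^\Omega=\Omega^{-1}(\xi-\lambda\Omega)$ is just the $\xi^\Omega$ that already enters $\xi\ms{d}_a$.
\item In the equivalence argument, $(\Omega,\xi)$ is the only contractible pair. The coordinates $\lambda$ and $\lambda^A$ are absorbed into $\tilde\xi^\Omega$ and $\tilde\rho_\Omega{}^A$, not retained as a separate pair.
\end{itemize}
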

It is clear that $\chi_\varphi\equiv \Omega^{2\ell}\tilde\chi_\varphi$, $H_\varphi\equiv \Omega^{2\ell}\tilde H_{\varphi}$ extend to the boundary and hence $\chi^\cR$ and $\chi^\cA$ are well-defined and can be explicitly obtained by the procedure of Section~\bref{sec: renorm-struc} with $p=2\ell$.  This explains why we assumed  $\ell \in \mathbb{N}/2$.

Now we calculate the induced structures on the boundary.

\begin{prop}\label{ads-scalar-odd}
    Let $p=2\ell$ be odd. Then
    \begin{align}
        \begin{split}
                        &\chi^\cR_\varphi=\frac{\epsilon}{(2\ell-1)!}\hat\xi\ms{d}\hat\varphi^{(2\ell)} d\hat\varphi+\hat\eI^\varphi\,,
                \qquad\chi^\cA_\varphi\in\hat \eI^\varphi\,,
        \end{split}
    \end{align}
\end{prop}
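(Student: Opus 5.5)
The plan is to compute $\chi^\cR_\varphi$ and $\chi^\cA_\varphi$ directly from the definitions \eqref{R-map}, \eqref{A-map} with $p=2\ell$. We take $\chi_\varphi\equiv\Omega^{2\ell}\tilde\chi_\varphi$ in the form
\begin{equation*}
\chi_\varphi=\Omega^{\ell-d/2+1}\,\big(\epsilon\,\xi\ms{d}_\Omega\varphi^{(1)}+\xi\ms{d}_A\nabla^A\varphi\big)\, d(\Omega^{d/2-\ell}\varphi),
\end{equation*}
and expand $d(\Omega^{d/2-\ell}\varphi)=\Omega^{d/2-\ell}d\varphi+(d/2-\ell)\Omega^{d/2-\ell-1}\varphi\, d\Omega$. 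This gives
\begin{equation*}
\chi_\varphi=\Omega\, X\, d\varphi+(d/2-\ell)X\varphi\, d\Omega,\qquad X\equiv\epsilon\,\xi\ms{d}_\Omega\varphi^{(1)}+\xi\ms{d}_A\nabla^A\varphi.
\end{equation*}
We then need $b^*L_\nabla^{2\ell}\chi_\varphi$ and $b^*L_\nabla^{2\ell-1}L_\md\chi_\varphi$. The key facts are the following.
\begin{itemize}
\item $L_\nabla\Omega=1$, $L_\nabla d\Omega=d(\nabla\Omega)=0$, and $L_\nabla$ commutes with $d$. Hence $L_\nabla^N(\Omega Y)=\Omega L_\nabla^N Y+N L_\nabla^{N-1}Y$, and after $b^*$ only the second term survives.
\item Proposition~\bref{prop-calculus-ads}, together with the Faà di Bruno combinatorics of Appendix~\bref{app: Faa}, controls $b^*\nabla^k\xi\ms{d}_a$. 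Only $k=0$, and a few small $k$ involving $\hat\lambda,\hat\lambda^A$, give nonzero contributions.
\item On the fiber side, $\nabla^k\varphi=\varphi^{(k)}$ and $\nabla$ commutes with $\nabla_A$ up to terms that vanish at the boundary. The boundary ideal relation \eqref{phi-gran-eq} then lets us replace $\hat\varphi^{(N)}$, for $N\neq 0,2\ell$, by expressions in $\Box\hat\varphi^{(N-2)}$ modulo $\hat\eI^\varphi$.
\end{itemize}

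For $\chi^\cR$ we first handle the $d\Omega$ term. Since $b^*d\Omega=0$ ($\Omega$ vanishes on $\d^QF$) and $L_\nabla$ preserves $d\Omega$, this term drops out. Only $\frac{1}{(2\ell)!}\cdot 2\ell\, b^*L_\nabla^{2\ell-1}(X\,d\varphi)$ remains.

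Distributing $L_\nabla^{2\ell-1}$ over $X$ and $d\varphi$ gives terms of the form $b^*\nabla^j X\, d\hat\varphi^{(2\ell-1-j)}$. The claim is then that, modulo $\hat\eI^\varphi$ and modulo $d$ of the ideal, only the term $\epsilon\,\hat\xi\ms{d}_\Omega\hat\varphi^{(2\ell)}\,d\hat\varphi$ survives, with $\hat\xi\ms{d}_\Omega=\hat\xi\ms{d}$ the boundary volume form. The mechanism should be that, for odd $2\ell$, the factors $\hat\varphi^{(N)}$ with $N\neq 0,2\ell$ obtained from \eqref{phi-gran-eq} pair up so that, of $\hat\varphi^{(j)}$ and $\hat\varphi^{(2\ell-1-j)}$, one is always expressible through the recursion. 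That resulting piece either lies in the ideal or combines with the $\nabla_A$-terms coming from $\xi\ms{d}_A\nabla^A\varphi$ into something in the ideal.

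For $\chi^\cA$ we apply $L_\md$ first. We have $\md\Omega=0$ and $\md\xi=1$, while $\md$ acts on $\xi\ms{d}_a$ through $\xi^\Omega=\xi-\lambda\Omega$. Every resulting term then carries at least one more $\hat\xi$-deficit, i.e.\ it is a $(d-1)$-fold product in $\hat\xi^A$. We will show that after $L_\nabla^{2\ell-1}$ and $b^*$ all such terms are proportional to generators of $\hat\eI^\varphi$. The expected structure is that an anomaly requires a pairing of $\hat\varphi$ with $\Box^{\ell}\hat\varphi$, which only makes sense for integer $\ell$. For half-integer $\ell$, the parity of the number of $\nabla$ derivatives forces every surviving $\hat\varphi^{(N)}$ to have $N$ odd with $N\neq 2\ell$, or to involve odd powers of $\Box$. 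By the recursion these collapse to $\hat\varphi^{(1)}$-chains, and $\hat\varphi^{(1)}\in\hat\eI^\varphi$ from \eqref{phi-gran-eq} with $N=1$, since $(1-2\ell)\neq 0$.

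The main obstacle is the combinatorial bookkeeping of the $\hat\lambda,\hat\lambda^A$ contributions arising from $b^*\nabla^k\xi\ms{d}_a$. We expect these terms to assemble into ideal elements, using that $\hat\varphi^{(\text{odd})}\in\hat\eI^\varphi$ when $2\ell$ is odd. This follows inductively from \eqref{phi-gran-eq}: $\hat\varphi^{(1)}\in\hat\eI^\varphi$, and $\hat\varphi^{(N)}\propto\Box\hat\varphi^{(N-2)}$ for odd $N<2\ell$. Establishing this parity lemma first will do most of the work; what remains is to track the single nonvanishing coefficient $\frac{2\ell}{(2\ell)!}\epsilon=\frac{\epsilon}{(2\ell-1)!}$.
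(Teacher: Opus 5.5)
Your skeleton is the paper's: split $\chi_\varphi=\Omega X\,d\varphi+(d/2-\ell)\varphi X\,d\Omega$, drop the $d\Omega$ piece, reduce to $2\ell\,b^*L_\nabla^{2\ell-1}(X\,d\varphi)$, and use $\hat\varphi^{(i)}\in\hat\eI^\varphi$ for odd $i<2\ell$. The gap is in the step you defer as ``the main obstacle''. That step is where the proof actually lives, and you leave it as an expectation with a mechanism that is not the right one.

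\textbf{The missing ingredient.} You need a second parity rule, on the background side. By Proposition~\bref{prop-calculus-ads}, $b^*\nabla^N\xi^B\neq0$ only for $N\in\{0,2\}$, and $b^*\nabla^N\xi^\Omega\neq0$ only for $N=1$ (recall $\xi^\Omega=\xi-\lambda\Omega$). Now $\xi\ms{d}_\Omega$ and $\xi\ms{d-1}_{\Omega A}$ contain only $\xi^B$'s, while every term of $\xi\ms{d}_A$ contains exactly one $\xi^\Omega$. The Leibniz rule therefore gives
\[
b^*\nabla^k\xi\ms{d}_\Omega=b^*\nabla^k\xi\ms{d-1}_{\Omega A}=0\quad (k\ \text{odd}),\qquad b^*\nabla^k\xi\ms{d}_A=0\quad (k\ \text{even, including } k=0).
\]
Without this, nothing in your argument rules out surviving terms in which both $\hat\varphi$-factors have even index. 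Examples are $(b^*\nabla\xi\ms{d}_\Omega)\,\hat\varphi^{(2\ell-1)}d\hat\varphi$ and $(b^*\xi\ms{d}_A)\,\nabla^A\hat\varphi\,d\hat\varphi^{(2\ell-1)}$. In particular, the $\xi\ms{d}_A\nabla^A\varphi$ terms do not ``combine'' with anything into ideal elements, as you suggest. At $k=0$ they are simply absent because $b^*\xi\ms{d}_A=0$. For odd $k$, each one is individually in the ideal.

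\textbf{How the proof closes.} With this rule the proof is a short index count, which is how the paper does it.

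For $\chi^\cR_\varphi$, the terms of $b^*L_\nabla^{2\ell-1}(X\,d\varphi)$ are of two kinds:
\begin{itemize}
\item $(b^*\nabla^k\xi\ms{d}_\Omega)\,\hat\varphi^{(j-k+1)}d\hat\varphi^{(2\ell-1-j)}$ with $k$ even, whose index sum is $2\ell-k$;
\item $(b^*\nabla^k\xi\ms{d}_A)\,\nabla^A\hat\varphi^{(j-k)}d\hat\varphi^{(2\ell-1-j)}$ with $k$ odd, whose index sum is $2\ell-1-k$.
\end{itemize}
Both sums are odd and at most $2\ell$. So every term carries an odd-index factor below $2\ell$ and lies in $\hat\eI^\varphi$. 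The single exception is the term $k=0$, $j=2\ell-1$, which gives $\frac{2\ell}{(2\ell)!}\epsilon\,\hat\xi\ms{d}\hat\varphi^{(2\ell)}d\hat\varphi$.

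For $\chi^\cA_\varphi$, $L_{\mathfrak{d}}$ only hits the $\xi^\Omega$ inside $\xi\ms{d}_A$, so
\[
L_{\mathfrak{d}}\chi_\varphi=-\Omega\,\xi\ms{d-1}_{\Omega A}\nabla^A\varphi\,d\varphi+(\dots)\,d\Omega.
\]
After $b^*L_\nabla^{2\ell-1}$, the surviving terms involve $b^*\nabla^k\xi\ms{d-1}_{\Omega A}$ with $k$ even and index sum $2\ell-2-k$. This sum is again odd and below $2\ell$, so $\chi^\cA_\varphi\in\hat\eI^\varphi$.

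No tracking of $\hat\lambda,\hat\lambda^A$ coefficients is needed. Note that $k$ is not confined to a few small values (it runs up to $2d$); only its parity matters. Your heuristics about pairing with $\Box^\ell\hat\varphi$ or collapsing to $\hat\varphi^{(1)}$-chains can be dropped.
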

where $\hat\xi\ms{N}_{A_1\dots A_{d-N}}\equiv b^{*}\xi\ms{N}_{\Omega A_1\dots A_{d-N}}$ and $d=D-1$.
\begin{proof}
On $F^\varphi_{\red}$ we calculate:
\begin{align}\label{formulas-phi}
    \begin{split}
        &\chi_\varphi=\Omega(\xi_\Omega\ms{d}\epsilon\varphi^{(1)}+\xi_A\ms{d}\nabla^A\varphi)d\varphi+(d/2-\ell)\varphi(\xi_\Omega\ms{d}\epsilon\varphi^{(1)}+\xi_A\ms{d}\nabla^A\varphi)d\Omega\,,\\
        &\bar\chi_\varphi=-\Omega\xi\ms{d-1}_{\Omega A}\nabla^A\varphi d\varphi +d\Omega(reg)\,,\\
    \end{split}
\end{align}
To prove the result, note that \eqref{phi-gran-eq} implies that for odd $i<2\ell$:
\begin{align}
    \hat\varphi^{(i)}\in\hat\eI^\varphi\,.
\end{align}
We also note that Proposition \bref{prop-calculus-ads} implies that, for odd $i$,
    \begin{align}
             b^{*}(\nabla)^i\xi_\Omega\ms{d}=b^{*}(\nabla)^i\xi\ms{d-1}_{\Omega A}=b^{*}(\nabla)^{i-1}\xi\ms{d}_A=0\,.
    \end{align}
\end{proof}
Using the coordinate-like gauge on the boundary it is easy to see that the presymplectic structure induced by $d\chi_\varphi^\cR$ on the space of supersections is proportional to
\begin{align}
\int d^dx \delta \varphi^{(0)}(x)\wedge \delta \varphi^{(2l)}(x) \,,
\end{align} 
where by some abuse of conventions $\varphi(x) \equiv \ev^*(\hat \varphi)$, $\varphi^{(r)}(x)\equiv \ev^*(\hat \varphi^{(r)})$. The respective symplectic quotient does not have coordinates of nonvanishing degree and can be identified with the phase-space of the scalar field in $D=d+1$ dimensions. Note that the respective BV-AKSZ action (which is to be identified with the BFV-BRST charge) clearly vanishes by the degree reasoning.

Further examples follow from direct calculations, by applying $L_\nabla$ to \eqref{formulas-phi}.
\begin{example}
Let $\ell=1$. Then, up to $\hat\eI^\varphi$,
\begin{align}
\begin{split}
    &\chi_\varphi^\cR=(\hat\xi\ms{d}\epsilon\hat \varphi^{(2)}+\hat\lambda\hat\xi_{A}\ms{d-1}\nabla^A\hat\varphi)d\hat\varphi
    \,,\qquad\chi_\varphi^\cA=(-1)^{d}\hat\xi_A\ms{d-1}\nabla^A\hat\varphi d\hat\varphi\,,\\
    &H_{l}^\cA=\hat\xi\ms{d}\frac{1}{2}\nabla^A\hat\varphi\nabla_A\hat\varphi+\frac{d-2}{4}\hat\varphi^2\hat\lambda^A\hat\xi\ms{d-1}_{A}-\frac{d-2}{2}\hat \varphi\hat\lambda\hat\xi\ms{d-1}_{A}\nabla^A\hat\varphi\,,
\end{split}
\end{align}
where $H^{\cA}_l$ is the covariant Hamiltonian associated with the presymplectic structure $\omega^\cA=d\chi^\cA$, obtained via \eqref{anomaly-hamiltonian}.

By some abuse of notation we define coordinates on the space of sections $\sigma: T[1]\partial X \xrightarrow{} \partial^Q F^{\varphi}_{\red}$ by 
\begin{equation} 
\begin{gathered}
    \sigma^*\hat\varphi = \varphi\,,    \quad \sigma^*\nabla_A\hat\varphi = \varphi_A \,,\quad
    \sigma^{*}\hat\xi^A\equiv e^{A}{}_{\mu}\theta^\mu\,, \quad
    \sigma^*\hat\lambda_A = \lambda_{A|\mu}\theta^\mu\,, \qquad \sigma^* \hat\lambda = \lambda_{|\mu}\theta^\mu
\end{gathered}
\end{equation}
and denoting $\partial_A\equiv e^{\mu}{}_{A}\partial_\mu$, $\lambda_{A|B} = \lambda_{A|\mu}e_B{}^{\mu}$, $\lambda_{|B} = \lambda_{|\mu}e_B{}^{\mu}$, the intrinsic action associated with $\chi_\varphi^\cA$ takes the form:
    \begin{align}
        S[\varphi,\varphi^A]=-\int e\ms{d}(\varphi^A\partial_A\varphi-\frac{1}{2}\varphi^A\varphi_A+\frac{2-d}{4}\varphi^2\lambda^A{}_{|A}+\frac{d-2}{2}\varphi\lambda_{|A}\varphi^A)\,.
    \end{align}
    Picking a background configuration subject to the metric-like gauge
    introduced in Section \bref{Ads-sections}, $\gamma_{\mu\nu}\equiv e^{A}{}_{\mu}g_{AB}e^{B}{}_{\nu},$ and eliminating the auxiliary field $\varphi^{A}$, the action can be rewritten as
\begin{align} \label{conf-scalar-action}
    S[\varphi]=-\frac{1}{2}\int d^dx\sqrt \gamma(\partial_\mu\varphi\partial^\mu\varphi+\frac{d-2}{4(d-1)}\varphi^2R[\gamma])\,,
\end{align}
which is the standard form of the conformally coupled scalar action. By construction this action is invariant under the background gauge transformations preserving the metric-like gauge. Note that we restricted ourselves to flat-backgrounds and hence the invariance is guaranteed only for them. However, in this particular case the above action is conformally invariant even for curved $\gamma$ but this is not by construction. In fact, in this case it is straightforward to replace flat AdS gravity background with the full gravity and obtain the boundary system as in~\cite{Grigoriev:2025gvk}, giving a conformally-coupled scalar on general background.   
\end{example}

\begin{example}

Let $\ell=2$. Then, up to $\hat\eI^\varphi,$ 
    \begin{align}
\chi_\varphi^\cA=\epsilon\frac{(-1)^d}{4}(\hat \xi_A\ms{d-1}\nabla^A\hat \varphi d\Box\hat \varphi+\hat \xi_A\ms{d-1}\nabla^A \Box \hat \varphi d\hat \varphi+2\hat \lambda^B\hat\xi\ms{d-2}_{AB}\nabla^A\hat \varphi d\hat \varphi)\,,
   \end{align}
   and
\begin{multline}
            H^{\cA}_l=-\frac{\epsilon}{4}(\hat\xi\ms{d}(\frac{1}{2}(\Box\hat \varphi)^2-\nabla_A\Box\hat \varphi\nabla^A\hat \varphi)-\hat\lambda^B\hat\xi_B\ms{d-1}\nabla_A\varphi\nabla^A\varphi+ 
            \\
            \frac{4-d}{2}\hat\lambda^A\hat\xi_A\ms{d-1}\hat\varphi\Box\hat \varphi+\frac{4-d}{2}\hat\lambda^A\hat \lambda^B\hat\xi\ms{d-2}_{AB}\varphi^2+\frac{d}{2}\hat\lambda\hat\xi\ms{d-1}_A\nabla^A\varphi\Box\varphi+
            \\
            (4-d)\hat \lambda^B\hat \lambda\hat \xi_{AB}\ms{d-2}\nabla^A\hat \varphi\hat\varphi+\frac{d-4}{2}\hat \lambda\hat \xi_A\ms{d-1}\varphi\nabla^A\Box\varphi)\,.
\end{multline}
The corresponding intrinsic action can be immediately read-off from the above data in terms of the conformal Cartan geometry of the background. For brevity, we illustrate its form in the metric-like gauge of the background:
    \begin{multline}
S[\varphi,\varphi_\alpha,\Box\varphi,\Box_\alpha\varphi]=-\frac{\epsilon}{4}\int d^dx\sqrt{\gamma}((\Box^\mu\varphi+2P^{\sigma}{}_{\nu}\delta^{\nu\mu}_{\rho\sigma}\varphi^\rho)\partial_\mu\varphi+\varphi^\mu\partial_\mu\Box\varphi+\\\frac{1}{2}(\Box\varphi)^2-\Box_\mu\varphi\varphi^\mu+P\varphi_\mu\varphi^\mu+\frac{d-4}{2}P\varphi\Box\varphi-\frac{d-4}{2}P^\mu{}_{\sigma}P^\nu{}_{\rho}\delta^{\sigma\rho}_{\mu\nu}\varphi^2)\,,
    \end{multline}
where $P^{\sigma}{}_{\nu}$ is the Schouten tensor of $\gamma$. All fields except for $\varphi$ are auxiliary, and eliminating them yields, up to a boundary term:
\begin{multline}
    S[\varphi]\propto\int d^{d}x\sqrt{\gamma}((\nabla_\alpha\nabla^\alpha\varphi)^2-(4P^{\mu\nu}-(d-2)Pg^{\mu\nu})\partial_\mu\varphi\partial_\nu\varphi+\\\frac{d-4}{2}(-\nabla_\mu\nabla^\mu P+\frac{d}{2}P^2-2P^{\mu\nu}P_{\mu\nu})\varphi^2)\,.
\end{multline}
Just like in the previous example this action happens to be conformally invariant on generic background and its equation of motion is the Paneitz equation~\cite{Paneitz:1983}. For $d=4$ this action was proposed even earlier in~\cite{Fradkin:1985am}. 
\end{example}

\begin{example}\label{ads-scalar-even}
Finally, let us consider the general case with $\ell\in\mathbb{N}$ and to simplify the considerations restrict ourselves to the coordinate-like gauge where $\sigma^*(\xi^A)= \theta^A$ and all other background fields are set to zero. On this background
    \begin{align}
    \begin{split}
                \chi_\varphi^\cR&=\frac{\epsilon}{(2\ell-1)!}\theta\ms{d}\sum_{i=0}^{\ell-1}C^{2i}_{2\ell-1}\hat\varphi^{(2\ell-2i)}d\hat \varphi^{(2i)}\,,\\
        \chi_\varphi^\cA&=\frac{(-1)^d}{(2\ell-2)!}\theta\ms{d-1}_{A}\sum_{i=0}^{\ell-1}C^{2i}_{2\ell-2}\nabla^{A}\hat \varphi^{(2\ell-2-2i)}d\hat \varphi^{(2i)}\,,\\
        H^{\cA}_l&=-\frac{1}{2(2\ell-2)!}\theta\ms{d}\sum_{i=0}^{\ell-1}(\epsilon C^{2i-1}_{2\ell-2}\hat\varphi^{(2\ell-2i)}\hat\varphi^{(2i)}- C^{2i}_{2\ell-2}\nabla_A\hat\varphi^{(2\ell-2-2i)}\nabla^A\hat\varphi^{(2i)})\,,
    \end{split}
\end{align}
up to terms from $\hat\eI^\varphi$. The intrinsic action associated with $\chi^\cA_\varphi$ takes the form 
\begin{multline} \label{1st-order-scalar}
    S[\varphi^{(2i)},\varphi^{(2i)|A}]\propto \sum_{i=0}^{\ell-1}\int d^{d}x[C^{2i}_{2\ell-2}\varphi^{(2\ell-2-2i)|A}\partial_A \varphi^{(2i)}+\\\frac{1}{2}(\epsilon C^{2i-1}_{2\ell-2}\varphi^{(2\ell-2i)}\varphi^{(2i)}- C^{2i}_{2\ell-2}\varphi_A^{(2\ell-2-2i)}\varphi^{(2i)|A})]\,.
\end{multline}
One can easily see that all the fields save for $\varphi^{(0)}$ are auxiliary and their elimination gives
\begin{align}
   S[\varphi^{(0)}] \propto \int d^d x \varphi^{(0)} (\partial_A\partial^A)^\ell\varphi^{(0)}\,.
\end{align}
In other words,  \eqref{1st-order-scalar} is the first-order action of the higher-order singleton field. Although this action can easily be obtained by e.g. applying the intrinsic action construction of~\cite{Grigoriev:2016wmk},
the present derivation is based on holographic considerations and guarantees the conformal invariance. 
 It can also be generalised to a non-flat case in a systematic way.
\end{example}

\subsection{Example: Maxwell} \label{sec:maxwell-ads}
Consider an implicit gPDE over background
\begin{align}
    (F^\maxw,Q,\eI^\maxw)\to(F,Q)
\end{align}
whose fiber is coordinatized by the following overcomplete coordinates
\begin{align}
    C,\quad \nabla_{(a)}F_{bc}, \qquad gh(F_{bc})=0\,,\quad \,gh(C)=1\,,
\end{align}
subject to the $\nabla_a$ prolongations of the relations
\begin{align}\label{Max-Bianchi}
    \nabla_{{[a}}F_{bc]}=0,\qquad F_{ab}=-F_{ba}\,.
\end{align}
The action of $Q$ is determined by 
\begin{align}\label{Q-ads-Max}
            &QC = \half \xi^a \xi^b F_{ab}\,,\qquad
    Q F_{ab} = \xi^c \nabla_c F_{ab} +  \rho_a{}^cF_{cb} + \rho_b{}^cF_{ac}- 2 \lambda F_{ab}
\end{align}
and the ideal $\eI^\maxw$ is generated by the $\nabla_a$-prolongations of
\begin{align}\label{max-eq}
    Y_a=(4-D)n^bF_{ba}+\Omega\nabla^bF_{ba}\,.
\end{align}

Just like in the case of  scalar field, this system can be obtained in two ways. The first is to take the flat-background and abelian limit of the conformal-like Yang--Mills system described in~\cite{Grigoriev:2025gvk}. The second is to start from the usual minimal gPDE for the Maxwell field on $\algg_\Lambda[1]$-background, see Section~\bref{sec:example-M},
and pull it back to $Int(F)$ using the projection $\pi_{\algg_\Lambda}$. A subtlety is that if $\tilde C, \nabla_{(c)}\tilde F_{ab}$ denote the usual overcomplete coordinates on the fiber of the minimal gPDE for Maxwell, the action of $Q$ in these coordinates pulled back by $\pi^*_{\algg_\Lambda}$ becomes singular at $\Omega=0$. This can be resolved by introducing new coordinates on $Int(F^\cM)$:
\begin{align}
\label{ads-maxwell-uplift}
\nabla_{(a)}F_{bc}\equiv \nabla_{(a)}(\Omega^{-2}\pi_{\algg_\Lambda}^*\tilde F_{bc})\,,
\end{align}
and these are precisely the coordinates in which the action of $Q$ is given by \eqref{Q-ads-Max}. Note that, in contrast to the scalar field case, the conformal weight of the Maxwell field is fixed by the requirement of regularity of $Q$, which, in turn, reflects the compatibility between the Maxwell gauge transformations and the asymptotic behavior of fields.

Now we can use the background reduction described in Section \bref{sec:AdS-gravity} and define $F^\maxw_{\red}\equiv r^{*}F^\maxw$ which is by construction a gPDE over background $F_\red$. As usual, we also split the index as $\{a\}=\{\Omega,A\}$. Then  a part of the relations \eqref{Max-Bianchi} can be solved explicitly in terms of $J_{B}^{(N)}\equiv\nabla_{(A)}\nabla^N F_{\Omega B}$:
\begin{align}\label{ads-solved-bianchi}
    \nabla^N F_{BC}=\nabla_B J^{(N)}_C-\nabla_{C} J^{(N)}_B\,,
\end{align}
giving the following overcomplete coordinate system on the fibers of $F^\maxw_{\red}$:
\begin{align}\label{max-coord}
    C\,,\qquad \nabla_{(A)}F_{BC}\,,\qquad \nabla_{(A)}J_B^{(N)}\,.
\end{align}
The boundary gPDE is then given by a pullback of $F^\maxw_{\red}$ to $\partial^QF_{\red}$ and can be represented as:
\begin{align}
    (\partial^QF^\maxw_{\red},Q,\hat \eI^\maxw)\to(\partial^QF_{\red},Q)\,.
\end{align}
We denote by $b$ the embedding $\partial^QF^\maxw_{\red} \hookrightarrow F^\maxw_{\red}$.
Using as fiber coordinates the restriction of \eqref{max-coord} to $\partial^QF^\maxw_{\red}$, the action of $Q$ can be explicitly computed from \eqref{Q-ads-Max} using Proposition \bref{prop-calculus-ads}.

For further computations, it is useful to formulate the following analogue of Proposition \bref{prop-calculus-ads}:
\begin{prop}\label{calculus-ads-maxwell}
    \begin{align}
        \begin{split}
            b^{*}\nabla^NF_{AB}&=\nabla_A\hat J_B^{(N-1)}-\nabla_B\hat J_{A}^{(N-1)}\,,\\
            b^{*}\nabla^NC&=\hat \xi^B\hat J_B^{(N-1)}+\epsilon\frac{(N-1)(N-2)}{2}\hat \lambda^B\hat J_B^{(N-3)}\,,
        \end{split}
    \end{align}
    where $N\geq1$.
\end{prop}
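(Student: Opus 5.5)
The plan is to compute the action of $\nabla=[Q,\md]$, with $\md=\dl{\xi}$, on the fiber generators $C$ and $F_{ab}$ explicitly. Then iterate with the Leibniz rule, pull back by $b$, and use Proposition~\bref{prop-calculus-ads} to kill most background contributions.

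First I fix the extension of $\md$ to the total space $F^\maxw_{\red}$. I take $\md$ to annihilate the fiber coordinates $C,\nabla_{(a)}F_{bc}$ and to act as $\dl{\xi}$ on the base coordinates~\eqref{Fred-coord}. This is consistent because the defining relations~\eqref{Max-Bianchi} and their prolongations involve no base coordinates. Note that $\xi^\Omega=\xi-\lambda\Omega$, $\rho^{\Omega A}=\lambda^A\Omega$, and that $\xi^A,\rho^{AB},\lambda,\lambda^A$ do not depend on $\xi$. Hence in~\eqref{Q-ads-Max} the only $\xi$-dependence comes through $\xi^\Omega$, and one finds
\begin{equation}
\nabla F_{ab}=\md(QF_{ab})=\nabla_\Omega F_{ab}\,,\qquad \nabla C=\md\big(\half\xi^a\xi^bF_{ab}\big)=\xi^B F_{\Omega B}\,,
\end{equation}
up to the sign fixed by the ordering conventions, which I will check against the $N=1$ case of the statement. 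Here $\nabla_\Omega=[Q,\dl{\xi^\Omega}]$ acts on fiber coordinates as in~\eqref{Q-ads-Max}.

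For $F_{AB}$, the Bianchi identity gives $\nabla F_{AB}=\nabla_AF_{\Omega B}-\nabla_BF_{\Omega A}$. Iterating requires $[\nabla,\nabla_A]=0$ on the fiber coordinates. For the flat background we have $[\nabla_a,\nabla_b]=0$, and on fiber coordinates $\nabla$ coincides with $\nabla_\Omega$, so this commutation holds. This yields $\nabla^NF_{AB}=\nabla_A\nabla^{N-1}F_{\Omega B}-\nabla_B\nabla^{N-1}F_{\Omega A}$. Applying $b^*$, and using that $\nabla_A$ is tangent to $\d^QF^\maxw_\red$, gives the first formula with $\hat J^{(N-1)}_B=b^*\nabla^{N-1}F_{\Omega B}$.

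For $C$, I apply the Leibniz rule to $\nabla^{N-1}(\xi^BF_{\Omega B})$:
\begin{equation}
\nabla^NC=\sum_{k=0}^{N-1}\binom{N-1}{k}(\nabla^k\xi^B)\,\nabla^{N-1-k}F_{\Omega B}\,.
\end{equation}
Pulling back with $b^*$ and using Proposition~\bref{prop-calculus-ads}, only the terms $k=0$ (giving $\hat\xi^B$) and $k=2$ (giving $\epsilon\hat\lambda^B$) survive. The binomial coefficient $\binom{N-1}{2}=\frac{(N-1)(N-2)}{2}$ then reproduces the second formula, with the convention that the term is absent for $N<3$.

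The main obstacle is the justification that, on fiber coordinates, $\nabla$ acts exactly as $\nabla_\Omega$ and commutes with $\nabla_A$. The reduction to $F_\red$ mixes $\rho^{\Omega A}$ with $\lambda^A\Omega$, so stray terms proportional to $\Omega$ or $\lambda$ could arise. I would check by direct computation of $[Q,\md]$ on $\nabla_{(A)}F_{BC}$ that any such terms either cancel or vanish under $b^*$ after the relevant number of $\nabla$'s. I would first test this in the cases $N=1$ and $N=2$.
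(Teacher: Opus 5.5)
Your proposal is correct and follows the paper's route. The paper reads off the first line from the solved Bianchi identity \eqref{ads-solved-bianchi}, and the second from $QC=(\xi-\lambda\Omega)\xi^BJ_B+\half\xi^A\xi^BF_{AB}$ on $F^\maxw_\red$ combined with Proposition~\bref{prop-calculus-ads}; this is your Leibniz expansion, where only $k=0$ and $k=2$ survive and the coefficient $\binom{N-1}{2}$ appears. The caveats you flag resolve immediately: $\nabla C=\xi^BJ_B$ comes with a plus sign, and on $F_\red$ the function $Q(\nabla_{(a)}F_{bc})$ depends on $\xi$ only through $\xi^\Omega=\xi-\lambda\Omega$ in the term $\xi^c\nabla_c(\cdot)$, because the substitution $\rho^{\Omega A}=\lambda^A\Omega$ introduces no $\xi$, so $\nabla$ acts on fiber coordinates exactly as $\nabla_\Omega$ with no stray terms.
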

\begin{proof}
    The first line follows directly from \eqref{ads-solved-bianchi}; the second follows from
    \begin{align}
        QC=\frac{1}{2}\xi^a\xi^bF_{ab}=(\xi-\lambda\Omega)\xi^BJ_B+\frac{1}{2}\xi^A\xi^BF_{AB}\
    \end{align}
    which holds on $F^\cM_\red$, together with Proposition \bref{prop-calculus-ads}.
\end{proof}

Acting with $b^{*}\nabla^N$ on the generators of the ideal $\eI^{\maxw}_{\red}$, we obtain that $\hat\eI^\maxw$ is  generated by $\nabla_A$-prolongations of 
\begin{align}
    \begin{split}
        &\hat Y_\Omega^{(N)}=-N\nabla^A\hat J^{(N-1)}_A,\qquad N\geq1\,,\\
        
        &\hat Y_{A}^{(N)}=(4-D+N)\epsilon \hat J_A^{(N)}+N\nabla^B b^{*}\nabla^{N-1} F_{BA}\,,\qquad N\geq0\,,
    \end{split}
\end{align}
where $ b^{*}\nabla^{N-1} F_{BA}$ can be calculated using Proposition \bref{calculus-ads-maxwell}.

 Note that, since $\nabla^A\nabla^B F_{AB}=0$, we have
 \begin{align}\label{maxwell-noether}
(4-D+N)\hat Y_\Omega^{(N)}\propto \nabla^B\hat Y_B^{(N-1)}     
 \end{align} 
 and hence the generators $\hat Y_\Omega^{(N)}$ for $N\neq D-4$ can be excluded from the generating set. This can be interpreted as a consequence of the Noether identities for the Maxwell equations, see \cite{Grigoriev:2025gvk}, where a nonlinear version of these equations is also given.

It is easy to see that, among all $\hat J_B^{(N)}$, the only independent coordinates on the zero locus of $\hat \eI^\maxw$ are $\hat J_B^{(D-4)}$, which corresponds to the subleading value of the Maxwell field. Moreover, in the case where $D-1\geq4$ and is even, we have
\begin{align}
    \Box^{\frac{D-5}{2}}\nabla^A\hat F_{AB}\in\hat\eI^\maxw\,.
\end{align}
This is interpreted as the obstruction equation for the leading value.

Now let us turn our attention to the structures induced on the boundary gPDE by the compatible  presymplectic potential in the bulk. First of all, we define a suitable version of the Maxwell presymplectic potential in terms of $Int(F^\maxw_{\red})$:
\begin{prop}\label{ads-max-presymp} 1 form 
    \begin{align}
    \label{chi-Fred-Maxw}
\tilde \chi_\maxw=\Omega^{4-D}(2\epsilon\xi\ms{D-2}_{\Omega B}J^B+\xi\ms{D-2}_{AB}F^{AB})dC
    \end{align}
on $(Int(F^\maxw_{\red}),Q,F_\red)$ is a compatible presymplectic potential. More precisely,
\begin{align}
\begin{aligned}
L_Q\tilde\chi_\maxw
    &= d(-\tilde H_{\maxw}-i_Q\tilde\chi_{\maxw})+\cI_{Int(M)}+\eI_{\red}^\maxw\,,\\
\tilde H_\maxw
    &= -\frac{1}{2}\Omega^{4-D}\xi\ms{D}
       (2\epsilon J_BJ^B+F^{AB}F_{AB})\,.
\end{aligned}
\end{align}
Moreover, this system is equivalent to the gPDE  with presymplectic structure described in Section~\bref{sec:example-M}
in the sense that it can be represented as the product of the gPDE with presymplectic structure from Section~\bref{sec:example-M} and the contractible gPDE with trivial presymplectic structure.
\end{prop}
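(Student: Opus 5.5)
The natural route is to reduce everything to the known Maxwell data on $\algg_\Lambda[1]$ from Section~\bref{sec:example-M}, pulled back along the projection $\pi^\red_{\algg_\Lambda}$ of \eqref{pi-red}. There the presymplectic potential is $\tilde\xi\ms{D-2}_{ab}\tilde F^{ab}d\tilde C$ and the Hamiltonian is $-\half\tilde\xi\ms{D}\tilde F_{ab}\tilde F^{ab}$. Since $\pi^\red_{\algg_\Lambda}$ is a $Q$-map, the pulled-back pair automatically satisfies the compatibility relation modulo the pulled-back ideals. So the proof is mostly a matter of rewriting it in the coordinates of $Int(F^\maxw_\red)$.

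\textbf{Step 1: substitution.} Substitute $\tilde\xi^\Omega=\Omega^{-1}\xi-\lambda$ and $\tilde\xi^A=\Omega^{-1}\xi^A$, together with $\tilde F_{ab}=\Omega^{2}F_{ab}$ from \eqref{ads-maxwell-uplift}. Index raising with $g_{ab}$ and $g^{\Omega\Omega}=\epsilon$ is what produces the $\epsilon$ in front of $J^B=F_\Omega{}^B$. Also set $\tilde C=C$, which is consistent with $QC=\half\xi^a\xi^bF_{ab}=\half\tilde\xi^a\tilde\xi^b\tilde F_{ab}$.

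\textbf{Step 2: scaling.} Collect powers of $\Omega$. The $(D-2)$-fold product of $\tilde\xi$ contributes $\Omega^{-(D-2)}$ and $\tilde F$ contributes $\Omega^{2}$, giving the overall $\Omega^{4-D}$. The $\Omega A$ components give $2\,\xi\ms{D-2}_{\Omega B}F^{\Omega B}$. The term $-\lambda$ in $\tilde\xi^\Omega$ appears only through $\xi\ms{D-2}_{AB}$ or $\xi\ms{D-2}_{\Omega B}$ with an extra $\lambda\Omega$ factor. I expect those pieces to be regular and removable, or absorbable into $\xi^\Omega\equiv\xi-\lambda\Omega$ in the definition of $\xi\ms{N}$. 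The same computation for $\xi\ms{D}F_{ab}F^{ab}$ gives $\tilde H_\maxw$: $F_{ab}F^{ab}$ splits into $2\epsilon J_BJ^B+F^{AB}F_{AB}$.

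\textbf{Step 3: compatibility.} Check $L_Q\tilde\chi_\maxw=d(-\tilde H_\maxw-i_Q\tilde\chi_\maxw)$ modulo $\cI_{Int(F)}+\eI^\maxw_\red$. This follows by pullback, provided the pullback of the bulk Maxwell ideal is generated by $\Omega^{?}Y_a$ with $Y_a$ as in \eqref{max-eq}. To verify this, expand $\tilde\nabla^b\tilde F_{ba}$ in terms of $\nabla_a$ and $\Omega$. Here $\tilde\nabla_a=[Q,\partial/\partial\tilde\xi^a]$ relates to $\Omega\nabla_a$ plus $n$-dependent terms, and acting on $\Omega^{-2}F$ this yields $(4-D)n^bF_{ba}+\Omega\nabla^bF_{ba}$ after reduction to $F_\red$, where $n_\Omega=1$ and $n_A=0$.

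\textbf{Step 4: equivalence, and the main obstacle.} The equivalence statement comes from Proposition~\bref{prop-Ads-equiv}. $Int(F_\red)$ is $\algg_\Lambda[1]$ times a contractible piece, and the fiber coordinates are an invertible rescaling of the standard ones, so the bundle factorizes. The contractible directions $\Omega,\xi,\lambda,\lambda^A$ enter $\tilde\chi_\maxw$ only through $d$-exact or ideal terms, so they lie in the kernel. The main obstacle is the bookkeeping of the $\lambda$-dependent and $d\Omega$-type discrepancies between the pulled-back potential and \eqref{chi-Fred-Maxw}. These must be shown to be $d$-exact plus elements of the ideal, which is where the careful computation lies.
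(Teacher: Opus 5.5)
Your proposal follows the paper's argument: pull back $\tilde\xi\ms{D-2}_{ab}\tilde F^{ab}d\tilde C$ and $-\half\tilde\xi\ms{D}\tilde F_{ab}\tilde F^{ab}$ along the $Q$-projection $\pi^\red_{\algg_\Lambda}$, rewrite them in the coordinates \eqref{ads-maxwell-uplift}, and inherit compatibility because the projection is a $Q$-map. The ``main obstacle'' you anticipate does not arise. The relation $(\pi^\red_{\algg_\Lambda})^*\tilde\xi^a=\Omega^{-1}\xi^a$ holds exactly, with $\xi^\Omega\equiv\xi-\lambda\Omega$ being precisely the coordinate that enters $\xi\ms{N}$, and $C$ is not rescaled, so no $d\Omega$ terms appear. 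Hence the pullback equals \eqref{chi-Fred-Maxw} and $\tilde H_\maxw$ exactly, and in the adapted coordinates $(\tilde\xi^a,\tilde\rho^{ab},\Omega,\xi)$ the contractible factor is just the pair $(\Omega,\xi)$, on which nothing depends. One slip in Step 3: the rescaling is $\pi^*\tilde F=\Omega^{2}F$, not $\Omega^{-2}F$, and it gives $\pi^*(\tilde\nabla^b\tilde F_{bc})=\Omega^{2}Y_c$.
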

\begin{proof}
The proof is analogous to that of Proposition \bref{ads-scalar-presymp}. Namely, the presymplectic structure of the gPDE over $\algg_\Lambda[1]$ is given by \eqref{maxwell-presymp}
\begin{align}\label{ads-maxwell-initial}
\tilde\xi\ms{D-2}_{ab}\tilde F^{ab}d C\,.
\end{align}
It can be pulled back to $Int(F_\red^\maxw)$ by the $Q$ projection $\pi^\red_{\algg_\Lambda}: Int(F_\red) \to \algg_\Lambda[1]$ given by \eqref{pi-red}, giving the 
\eqref{chi-Fred-Maxw} in the coordinate system~\eqref{ads-maxwell-uplift}.
\end{proof}
We see that $\tilde\chi_\maxw$ satisfies the assumptions of Theorem ~\bref{th-renorm} for $p=D-4$, and we define
\begin{align}
    \chi_\maxw\equiv \Omega^{D-4}\tilde\chi_\maxw\,,\qquad H_\maxw\equiv\Omega^{D-4}\tilde H_{\maxw}\,.
\end{align}

\begin{prop} Let the dimension of the boundary  $d \equiv D-1$ be odd. Then
    \begin{align}\label{ads-maxwell-odd}
        \begin{split}
        &\chi_{\maxw}^{\cR}=\frac{2\epsilon}{(d-3)!}\hat\xi_B\ms{d-1}\hat J^{(d-3)|B}dC+\hat\cI^\maxw\,,\qquad \chi^\cA_\maxw\in\hat\cI^\maxw\,.
        \end{split}
    \end{align}
\end{prop}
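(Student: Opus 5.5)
We follow the pattern of the proof of Proposition~\bref{ads-scalar-odd}, with $p=D-4=d-3$, which is even when $d$ is odd. By Theorem~\bref{th-renorm},
\[
\chi^\cR_\maxw=\frac{1}{(d-3)!}\,b^*\nabla^{d-3}\chi_\maxw\,,\qquad
\chi^\cA_\maxw=\pm\frac{1}{(d-4)!}\,b^*\nabla^{d-4}\bar\chi_\maxw\,,
\]
with $\chi_\maxw=(2\epsilon\xi\ms{D-2}_{\Omega B}J^B+\xi\ms{D-2}_{AB}F^{AB})dC$ and $\bar\chi_\maxw=L_{\md}\chi_\maxw$. Since $\chi_\maxw$ contains no explicit power of $\Omega$, both quantities reduce to Leibniz-rule expansions of $\nabla^N$ acting on the products $\xi\ms{D-2}\cdot(\text{field})\cdot dC$. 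Here $L_\nabla$ commutes with $d$, so $L_\nabla dC=d(\nabla C)$.

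\textbf{Step 1: parity of the building blocks at the boundary.} From Proposition~\bref{prop-calculus-ads} (and the combinatorics of Appendix~\bref{app: Faa}), $b^*\nabla^k$ of a monomial in $\xi^a$ is nonzero only when the parity of $k$ matches the number of $\xi^\Omega$ factors being hit. This holds because $\nabla\xi^\Omega$ contributes only at first order (giving $-\hat\lambda$), and $\nabla\xi^A$ contributes only at second order. Concretely, $b^*\nabla^k\xi\ms{D-2}_{\Omega B}$ survives for even $k$, while $b^*\nabla^k\xi\ms{D-2}_{AB}$ and $b^*\nabla^k\xi\ms{D-2}$ pieces with one $\xi^\Omega$ survive for odd $k$. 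For the fields, we evaluate $b^*\nabla^k C$ using Proposition~\bref{calculus-ads-maxwell}, $b^*\nabla^k F_{AB}$ in terms of $\hat J^{(k-1)}$, and $b^*\nabla^k J_B=\hat J^{(k)}_B$.

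\textbf{Step 2: use the boundary equations.} The key input, analogous to ``$\hat\varphi^{(i)}\in\hat\eI^\varphi$ for odd $i<2\ell$'', is the generator
\[
\hat Y^{(N)}_A=(N-p)\epsilon\hat J^{(N)}_A+N\nabla^B b^*\nabla^{N-1}F_{BA}\,.
\]
By induction on $N$ we want to show that $\hat J^{(N)}_A\in\hat\eI^\maxw$ for all odd $N<p$. Indeed, $\nabla^{N-1}F_{BA}$ is expressed through $\hat J^{(N-2)}$, which has odd $N-2$ and so lies in the ideal by the inductive hypothesis. The base case $N=1$ involves $b^*F_{BA}$ itself; we need to check that it enters only through $\hat J^{(-1)}$, i.e. vanishes. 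Similarly, $b^*\nabla^{N}C$ for even $N\ge2$ is built from $\hat J^{(N-1)}$ and $\hat J^{(N-3)}$, both of odd order, and hence lies in the ideal whenever $N-1<p$.

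\textbf{Step 3: collect terms.} Write every term of $b^*\nabla^{p}\chi_\maxw$ (with $p$ even) as $\binom{p}{k,l,m}$ times $b^*\nabla^k(\xi\ms{D-2})$, times $b^*\nabla^l(\text{field})$, times $d\,b^*\nabla^m C$. The parity constraints from Step 1 combined with Step 2 force every term into $\hat\cI^\maxw$, except the one with $k=m=0$ and $l=p$ in the $J$-sector. That term is $2\epsilon\hat\xi\ms{d-1}_B\hat J^{(p)|B}dC$, which after the $1/p!$ normalization is exactly the claimed $\chi^\cR_\maxw$. Note that $\hat J^{(p)}$ is not constrained, consistent with it being the subleading datum. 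For $\chi^\cA_\maxw$ we carry out the same bookkeeping with the odd order $p-1$ applied to $\bar\chi_\maxw$, which is $\xi\ms{D-3}$-type (one fewer $\xi$, obtained via $\md=\partial_\xi$) and also contains terms with $d\Omega$ that are killed by $b^*$. Here the total parity mismatch leaves every term containing either a vanishing $b^*\nabla^k\xi$ factor or an odd $\hat J^{(l)}$ with $l<p$. Hence $\chi^\cA_\maxw\in\hat\cI^\maxw$.

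The main obstacle is Step 1 combined with the mixed terms in Step 3. The $\hat\lambda$ and $\hat\lambda^A$ contributions from $\nabla\xi^\Omega$ and $\nabla^2\xi^A$ shift parities, and we must verify that each shift is compensated, i.e. that the total ``$\xi^\Omega$-parity plus field-order parity'' is an invariant of each term. We will first compute this invariant on the single monomials $\xi^\Omega$, $\xi^A$, $J$, $F$, $C$, then argue that it is multiplicative under Leibniz expansion. This is the same mechanism used in Proposition~\bref{ads-scalar-odd}.
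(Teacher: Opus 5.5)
\paragraph{Verdict.} Your setup matches the paper's: Theorem~\ref{th-renorm} with $p=d-3$ even, Leibniz expansion, and the parity rule of Step~1 from Proposition~\ref{prop-calculus-ads} together with Proposition~\ref{calculus-ads-maxwell}. But Step~2, the key input, has the parity reversed, and the argument fails there.

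\paragraph{The failing step.} Your base case is false. $b^*F_{BA}=\hat F_{BA}$ does not vanish: it is an independent boundary coordinate, the leading datum, and it survives e.g.\ in $\chi^\cA_\maxw$ for $d=4$. The generator $\hat Y^{(1)}_A=(1-p)\epsilon\hat J^{(1)}_A+\nabla^B\hat F_{BA}$ therefore only expresses $\hat J^{(1)}_A$ through $\nabla^B\hat F_{BA}$. So odd-order $\hat J^{(N)}$ are generically \emph{not} in $\hat\eI^\maxw$. For the same reason, your claim that $b^*\nabla^N C\in\hat\cI^\maxw$ for even $N$ is wrong.

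The true statement is the opposite one, and it is exactly what the paper uses. The generator $\hat Y^{(0)}_A=-p\epsilon\hat J^{(0)}_A$ puts $\hat J^{(0)}_A$ in the ideal. For $N\ge2$, $b^*\nabla^{N-1}F_{BA}$ is built from $\hat J^{(N-2)}$, and $N-p\neq0$ for even $N<p$. Induction in steps of two then gives $\hat J^{(2i)}_B\in\hat\eI^\maxw$ for $0\le 2i<p$.

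\paragraph{Why Step~3 cannot close as written.} Take the $J$-sector term with $k=2$, $m=0$, $l=p-2$. It is proportional to $\hat\lambda^C\hat\xi\ms{d-2}_{BC}\hat J^{(p-2)|B}d\hat C$ and contains an even-order $\hat J$. Under your parity it would survive, contradicting the formula you want to prove. Likewise the $F$-sector terms with $m=0$ and $l$ odd contain $\hat J^{(l-1)}$ of even order.

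\paragraph{Repair for $\chi^\cR_\maxw$.} With the parity corrected, direct counting suffices and no ``invariant'' argument is needed.
\begin{itemize}
\item In the $J$-sector $k$ is even, so $l+m$ is even. If $l$ is even and the term is not the leading one ($k=m=0$, $l=p$), then $\hat J^{(l)}$ has even order below $p$. If $l$ is odd, then $m$ is odd, and $b^*\nabla^mC=\hat\xi^B\hat J^{(m-1)}_B+\epsilon\frac{(m-1)(m-2)}{2}\hat\lambda^B\hat J^{(m-3)}_B$ contains only even orders below $p$.
\item In the $F$-sector $k$ is odd, so exactly one of $l,m$ is odd. If $l\ge1$ is odd, $b^*\nabla^l F_{AB}$ brings $\hat J^{(l-1)}$. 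If $m$ is odd, $b^*\nabla^m C$ brings $\hat J^{(m-1)}$ and $\hat J^{(m-3)}$. Either way the order is even and at most $p-2$.
\end{itemize}

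\paragraph{Repair for $\chi^\cA_\maxw$.} Here $\bar\chi_\maxw\propto\xi\ms{D-3}_{\Omega AB}F^{AB}dC$ has no $\xi^\Omega$ factor, so $k$ is even, while the total order $p-1$ is odd. Hence $l+m$ is odd, and the same argument gives $\chi^\cA_\maxw\in\hat\cI^\maxw$. Note also that $\chi_\maxw$ contains no $d\Omega$, so unlike the scalar case $\bar\chi_\maxw$ has no $d\Omega$ terms.
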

\begin{proof}
    Similarly to Proposition~\bref{ads-scalar-odd}, one observes that 
    \begin{align}
            \hat J^{(2i)}_B\in \hat \eI^\maxw,\qquad 0\leq i< \frac{d-3}{2}
    \end{align}
    and then uses Proposition~\bref{calculus-ads-maxwell}.
\end{proof}
It is instructive to consider the presymplectic structure induced by $\chi^\cR_\maxw$ on the space of supersections. Introducing components according to $\ev^*(\hat C)=\cC(x)+\theta^B A_B(x)+\ldots$, and $\ev^*(\hat J^{(d-3)|A})=J_0^A(x)+\theta^B \cP^A_B(x)+\ldots$, where we have assumed the background configuration in the coordinate-like gauge, the presymplectic structure induced on the space of supersections is given by (up to a coefficient)
\begin{equation}
    \int d^dx \, \left(\delta J^B_0 \wedge \delta A_B+ \delta \cP \wedge \delta \cC  \right)\,, \qquad \cP\equiv \cP^A_A\,.
\end{equation}
This symplectic structure has degree $0$ and is obviously regular. It determines a symplectic quotient with coordinates $A_B(x), J^B_0,\cC(x),\cP(x)$, which is easily seen to be identical with the BFV phase space of the Maxwell system. 
Moreover, the presymplectic BV-AKSZ action determined by $Q$ and $\chi^\cR$ is just a usual BFV-BRST charge $\int d^dx J^A \d_A \cC$ (up to an overall factor and boundary term). This is not a coincident. Indeed, had we started with Maxwell theory on $\d X \times \fR^1_{\geq 0}$ and formulated it as a presymplectic gauge PDE, we would have obtained the above presymplectic gPDE formulation of the BFV system for Maxwell as its pullback to the boundary, see \cite{Grigoriev:2022zlq} for more details. This is of course just a presymplectic gPDE derivations of the BFV Hamiltonian formulation of the system. However, in the present setup the boundary is asymptotic and the above derivation involved renormalization of the presymplectic structure. Apparently, it was not guaranteed that the result is isomorphic to the BFV phase space.  
\begin{example}
    For $d=4$
    we have
    \begin{align}
        \begin{split}
            &\chi^\cR=(\epsilon\hat\xi\ms{3}_{B}\hat J^{(1)|B}-\frac{1}{2}\hat \lambda\hat\xi_{AB}\ms{2}\hat F^{AB})d\hat C\,,\qquad \chi^\cA=-\hat\xi\ms{2}_{AB}\hat F^{AB}d\hat C\,,
        \end{split}
    \end{align}
modulo terms in $\hat \eI^\maxw$.
  Note that the above presymplectic potential $\chi^\cA$ coincides with \eqref{ads-maxwell-initial} for $d=4$. We also see that $\chi^\cR$, in contrast to the odd-dimensional boundary case \eqref{ads-maxwell-odd}, acquires an additional correction necessary for its $Q$-invariance. 

  The presymplectic BFV-AKSZ sigma model determined by $\chi^\cR,Q$ again gives the BFV formulation for Maxwell but in contrast to the odd-$d$ case it is somewhat modified to maintain the coupling to background conformal geometry.  Nevertheless, in the metric-like and coordinate-like background gauges the expressions for the presymplectic structure and the BRST charge are formally the same. At the same time, it is clear from the explicit form of $Q$ and $\chi^\cA$ on $\d^Q F_\red^\maxw$ that the associated presymplectic BV-AKSZ system describes Maxwell in 4d. The bonus of the present formulation is that the conformal invariance is manifestly captured by the background gPDE.

\end{example}
\begin{example}
    For  $d=6$ 
    we have
    \begin{align}
    \chi^\cA_\maxw=-\frac{\epsilon}{2}[(\hat \xi_{AB}\ms{4}\nabla^A \nabla_C\hat F^{CB}+\hat \lambda^C\hat \xi_{ABC}\ms{3}\hat F^{AB})d\hat C+\frac{1}{2}\hat \xi_{AB}\ms{4}\hat F^{AB}d(\hat \xi^C\nabla^D\hat F_{DC})]\,,
\end{align} 
and
\begin{multline}
            H^{\cA}_l=-\frac{\epsilon}{2}[\hat \xi\ms{6}(\frac{1}{2}\nabla_A\hat F^{AB}\nabla^C\hat F_{CB}-\nabla^A\nabla_C\hat F^{CB}\hat F_{AB})-
            \\
            \frac{1}{2}\hat \lambda^C\hat \xi_{C}\ms{5}\hat F^{AB}\hat F_{AB}+2\hat \lambda\hat \xi_A\ms{5}\hat F^{AC}\nabla^D\hat F_{DC}]\,,
\end{multline}         
modulo terms in $\hat \eI^\maxw$.
    
Introducing coordinates on the space of sections by 
\begin{align}
\begin{alignedat}{4}
\sigma^*\hat{C}
    &= A_\mu\theta^\mu\,,
&\qquad
F_{\mu\nu}
    &= e^{A}{}_{\mu}e^B{}_{\nu}\sigma^*\hat F_{AB}\,,
\\
j_\mu
    &= e^{A}{}_{\mu}\sigma^{*}\nabla^C\hat F_{CA}\,,
&\qquad
k_{\mu\nu}
    &= e^A{}_{\mu}e^{B}{}_{\nu}\sigma^{*}\nabla_A\nabla^C\hat F_{CB}\,.
\end{alignedat}
\end{align}

The intrinsic action in the metric-like gauge of the background is given by
\begin{multline}
    S[A,F,j,k]=-\frac{\epsilon}{2}\int d^6x\sqrt{\gamma}(2k^{\mu\nu}\partial_{[\mu} A_{\nu]}-\delta^{\delta \sigma\rho}_{\alpha\beta \gamma}P^{\gamma}{}_{\delta}F^{\alpha\beta}\partial_\sigma A_\rho+F^{\mu\nu}\partial_\mu j_\nu+\\+\frac{1}{2}j^\mu j_\mu-k^{\mu\nu}F_{\mu\nu}-\frac{1}{2}PF^{\mu\nu}F_{\mu\nu})\,.
\end{multline}
After elimination of auxiliary fields
\begin{align}
    \begin{split}
        &F_{\mu\nu}=2\partial_{[\mu}A_{\nu]}\,,\quad j^\nu=\frac{1}{\sqrt \gamma}\partial_\mu(\sqrt\gamma F^{\mu\nu})\,,\quad k_{[\mu\nu]}=\partial_{[\mu}j_{\nu]}-\delta^{\delta\sigma\rho}_{\mu\nu\gamma}P^\gamma{}_\delta\partial_\sigma A_\rho-P F_{\mu\nu}
    \end{split}
\end{align}
we get
\begin{align}\label{6d-maxwell}
    S[A]=-\frac{\epsilon}{4}\int d^6x \sqrt\gamma(j_\mu j^\mu-PF^{\mu\nu}F_{\mu\nu}-4P^{\gamma}{}_{\alpha}F^{\alpha\beta}F_{\beta\gamma}+2F^{\mu\nu}\partial_{\mu}j_{\nu})\,.
\end{align}
 As in the scalar case, this functional turns out, somewhat accidentally, to be well defined also on a general conformal-geometric background, and not only on a flat one. The extrema of this action are described by a gauge field $A_\mu$ satisfying the conformally invariant generalization of the Maxwell equation in six dimensions, of the form $\Box\nabla^\mu F_{\mu\nu}+\dots=0$. The corresponding conformally invariant differential operator  was already constructed by Branson in~\cite{branson1985differential}, where it was denoted by $D_{4,1}$. We also note that the action \eqref{6d-maxwell}, up to a boundary term, coincides with the abelian limit of the functional referred to in~\cite{Gover:2023rch} as the $d=6$ Yang-Mills energy and obtained there via holographic renormalization.

 \end{example}

\begin{rem} Just like in 
the scalar field Example~\bref{ads-scalar-even}, it is straightforward to compute the general form of $\chi^\cA$ in the coordinate-like gauge for an arbitrary even-dimensional boundary. The corresponding AKSZ action is a first-order action for the equation 
\begin{align}
    (\partial_\rho\partial^\rho)^{\frac{d-4}{2}}\partial^\mu F_{\mu\nu}[A]=0\,,
\end{align} but we omit the details here. Of course, Lagrangian for generic conformal gauge fields on conformally-flat spaces are known in the literature~\cite{Vasiliev:2009ck}, see also~\cite{Chekmenev:2020lkb} where such Lagrangians are also derived from bulk considerations. 
\end{rem}

\section{Applications and examples: fields on Minkowski space}
\label{sec:flat}

\subsection{Flat gravity and its compactification}
As in the AdS case, we consider a graded $Q$-manifold with coordinates
\begin{align}
\label{coord-lin}
    \Omega,\,n_a,\,\tau\,\quad \Omega\geq0,\, n_\Omega>0,\qquad \xi^a,\,\rho^{ab},\,\lambda,\,\lambda^a\,,
\end{align}
where the index $a$ is decomposed as $\{a\}=\{\Omega,u,A\}$, $|A|=D-2$, and the action of $Q$ is given by \eqref{Q-int-F}. The $Q$-manifold $(F,Q)$ is then defined as the zero locus of
\begin{align}\label{poincare-scale}
    \Omega\tau+\frac{1}{2}n_an^a
\end{align}
and indices are raised and lowered by the following  constant metric\footnote{Note that the constraint \eqref{poincare-scale} coincides with \eqref{scale-ideal} for $\epsilon=0$, i.e. in the case of vanishing cosmological constant, while the non-diagonal form of the constant metric is chosen so that this constraint is well-defined at $\Omega=0,\,n_\Omega>0.$ }
\begin{equation} \label{flat-metric}
g_{ab}=
\begin{pmatrix}
0 & 1 & 0  & \cdots & 0\\
1 & 0 & 0 & \cdots & 0\\
0 & 0 & -1  & \cdots & 0\\
\vdots & \vdots &  \vdots & \ddots & \vdots\\
0 & 0 & 0 &  \cdots & -1
\end{pmatrix}\,.
\end{equation}
$(F,Q)$ is a manifold with $Q$ boundary which can be identified with the submanifold  singled out by $\Omega=0, \,Q\Omega=0$.

\begin{prop}\label{poincare-equiv}
    $(Int(F),Q)$ is equivalent, as a $Q$-manifold, to $(\mathfrak{iso}[1],\mathrm{d}_{\algg})$,  where $\mathfrak{iso}$ is the Poincaré algebra in $D$ dimensions and $\mathrm{d}_{\algg}$ is its Chevalley-Eilenberg differential. 
\end{prop}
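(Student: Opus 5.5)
The proof will follow that of Proposition~\bref{prop-Ads-equiv} verbatim, with $\epsilon=0$ and the metric~\eqref{flat-metric} in place of the diagonal one. On $Int(F)$, where $\Omega>0$, I introduce the same rescaled coordinates as in~\eqref{eq:trans-functions-ads-gr}:
\begin{equation}
\tilde\xi^a\equiv\Omega^{-1}\xi^a,\qquad \tilde\rho_a{}^b\equiv\rho_a{}^b-\Omega^{-1}n_a\xi^b+\Omega^{-1}n^b\xi_a\,.
\end{equation}
Together with $\Omega$, $n_a$, $\lambda$, $\lambda^a$ they form a coordinate system on $Int(F)$. Here $\tau$ is eliminated by solving~\eqref{poincare-scale} as $\tau=-\frac{1}{2\Omega}n_an^a$.

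The first step is to compute $Q$ on these new coordinates using~\eqref{Q-int-F}. This is the same algebra as in the AdS case, since~\eqref{Q-int-F} is unchanged and only the constraint differs. For $\tilde\xi^a$ the $\lambda$-terms cancel, and the terms $\Omega^{-2}\xi^a\xi^bn_b$ recombine into $\tilde\xi^c\tilde\rho_c{}^a$, giving $Q\tilde\xi^a=\tilde\xi^c\tilde\rho_c{}^a$. For $\tilde\rho^{bc}$ one again finds $Q\tilde\rho^{bc}=\tilde\rho^{bd}\tilde\rho_d{}^c+\tilde\xi^b\tilde\xi^c(2\Omega\tau+n^an_a)$. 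Modulo the ideal generated by~\eqref{poincare-scale}, the last term vanishes. What remains is exactly the Chevalley--Eilenberg differential of $\mathfrak{iso}$: the translations $\tilde\xi^a$ transform as a Lorentz vector and the Lorentz part is flat. Invariance of the metric~\eqref{flat-metric} guarantees that $\tilde\rho_{ab}$ stays antisymmetric, so the Lorentz algebra is $\mathfrak{so}(g)$ of signature $(1,D-1)$.

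The subalgebra generated by $\tilde\xi^a$ and $\tilde\rho^{ab}$ therefore defines a $Q$-projection $\pi_{\mathfrak{iso}}:(Int(F),Q)\to(\mathfrak{iso}[1],\mathrm{d}_{\algg})$. It remains to show that the complementary coordinates $\Omega,n_a,\lambda,\lambda^a$ form a contractible piece. I plan to take the pairs $(\Omega,Q\Omega)$ and $(n_a,Qn_a)$, with $n_a$ subject to the constraint; on $Int(F)$ the constraint merely determines $\tau$, so $n_a$ is free there. Since $Q\Omega=\xi^an_a+\lambda\Omega$ and $\Omega\neq0$, one may use $Q\Omega$ instead of $\lambda$. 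Likewise $Qn_a=-\xi_a\tau+\rho_a{}^bn_b+\lambda_a\Omega$ can replace $\lambda_a$, because its dependence on $\lambda_a$ is invertible for $\Omega\neq0$. Hence
\begin{equation}
\{\Omega,\,Q\Omega,\,n_a,\,Qn_a,\,\tilde\xi^a,\,\tilde\rho^{ab}\}
\end{equation}
is a coordinate system on $Int(F)$ that splits into $\mathfrak{iso}[1]$ and a contractible $T[1]$-type factor. Equivalently, one reduces to the submanifold $\Omega=1$, $Q\Omega=0$, $n_a=n_a^0$, $Qn_a=0$ for a fixed constant vector $n^0$ with $n^0_\Omega>0$.

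The main obstacle is checking that this coordinate change is a genuine change of coordinates compatible with the degenerate, non-diagonal metric. I expect it to be harmless: all index manipulations use only $g_{ab}$, which is nondegenerate, and $\epsilon$ enters only through the constraint, so the computations of Proposition~\bref{prop-Ads-equiv} carry over with $\epsilon=0$.
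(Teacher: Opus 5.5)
Your proposal is correct and takes essentially the same route as the paper. On $\Omega>0$ you solve \eqref{poincare-scale} for $\tau$ and pass to the rescaled coordinates $\tilde\xi^a,\tilde\rho_a{}^b$, which are exactly the pullbacks along the paper's projection $\pi_{\mathfrak{iso}}$; there the extra term $\tilde\xi^b\tilde\xi^c(2\Omega\tau+n^an_a)$ vanishes by the $\epsilon=0$ constraint, and you split off $(\Omega,Q\Omega)$ and $(n_a,Qn_a)$ as contractible pairs. This is just a more detailed version of the paper's short argument, and your choice of a reference vector with $n^0_\Omega>0$ is appropriate (note that \eqref{flat-metric} is off-diagonal but nondegenerate).
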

Notice that this proposition implies that  $(F,Q)$ is a compactification of $(\mathfrak{iso}[1],d_{\algg})$.
\begin{proof}
This proposition is a direct analogue of Proposition~\bref{prop-Ads-equiv}. It is enough to check that for $\Omega>0$ the constraint \eqref{poincare-scale} can be solved for $\tau$ so that the remaining coordinates can be taken as independent coordinates on its zero locus. Then  $\Omega-1$, $Q\Omega$ as well as $n_a-g_{a\Omega}$ and $Qn_a$ are contractible pairs provided $\Omega>0$ and the reduced $Q$-manifold is diffeomorphic to $(\mathfrak{iso}[1],\dg)$.
\end{proof}
It is also useful to have the explicit  form of the $Q$-projection  map $\pi_\mathfrak{iso}:(Int(F),Q)\to (\mathfrak{iso}[1],\mathrm{d}_{\algg})$. If 
$\tilde\xi^a$, $\tilde \rho_{a}{}^{b}$ are linear coordinates on $\mathfrak{iso}[1]$ such that the differential reads as
\begin{align}
\label{pi-F-iso}
    \mathrm{d}_\algg \tilde\xi^a=\tilde\xi^b\tilde\rho_{b}{}^{a},\qquad \mathrm{d}_\algg \tilde\rho_{a}{}^{b}=\tilde{\rho}_{a}{}^{c}\tilde\rho_{c}{}^{b}\,,
\end{align}
the projection  $\pi_\mathfrak{iso}:(Int(F),Q)\to (\mathfrak{iso}[1],\mathrm{d}_{\algg})$ is given by
    \begin{align}
        \pi_\mathfrak{\mathfrak{iso}}^*\tilde\xi^a=\Omega^{-1}\xi^a\,,\qquad \pi_\mathfrak{\mathfrak{iso}}^{*}\tilde{\rho}_{a}{}^{b}=\rho_{a}{}^{b}-\Omega^{-1}n_a\xi^b+\Omega^{-1}n^b\xi_a\,.
    \end{align}

As in the AdS case, it is more convenient to work with a $Q$-submanifold $F_{\red}\subset F$ 
which can be defined as a surface in the coordinate space \eqref{coord-lin}, singled out by 
\begin{align}
    n_\Omega-1=n_\alpha=0\,, \quad \rho^{\Omega \alpha}=\lambda^\alpha\Omega\,,\qquad \tau=0,\quad\lambda^\Omega=0\,,
\end{align}
where $\{\alpha\}=\{u,A\}$ is a shorthand notation for the corresponding index set. It is clear that this surface  belongs to the zero locus of \eqref{poincare-scale} so that indeed $F_{\red}\subset F$.
\begin{prop}
    $(F_\red,Q)$ is an equivalent reduction of $(F,Q)$.
\end{prop}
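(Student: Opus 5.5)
We show that $F_\red$ is obtained from $F$ by eliminating contractible pairs, i.e. that locally $(F,Q)\cong(F_\red,Q)\times(T[1]V,\mathrm{d}_V)$, in parallel with the AdS case treated after Proposition~\bref{prop-Ads-equiv}. We work in the global coordinates \eqref{coord-lin} modulo the ideal generated by \eqref{poincare-scale}.

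\emph{Step 1 (solve the constraint).} Since $n_\Omega>0$ and the metric \eqref{flat-metric} gives $n_an^a=2n_\Omega n_u-n_An_A$, the constraint \eqref{poincare-scale} reads $\Omega\tau+n_\Omega n_u-\tfrac12 n_An_A=0$. It can be solved for $n_u$ everywhere on $F$, including at $\Omega=0$. Hence the remaining variables are global coordinates on $F$; this is exactly the reason for choosing the off-diagonal metric.

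\emph{Step 2 (construct the contractible pairs).} Using \eqref{Q-int-F} we compute
\begin{equation*}
Qn_a=-\xi_a\tau+\rho_a{}^bn_b+\lambda_a\Omega,\qquad Q\tau=-\lambda\tau-\lambda^an_a .
\end{equation*}
\begin{itemize}
\item For $a=A$: the term $\rho_A{}^bn_b$ contains $\rho_A{}^{\Omega}n_\Omega$, so $(n_A,Qn_A)$ is a pair in which $Qn_A$ can replace $\rho_A{}^\Omega$, or equivalently the combination $\rho^{\Omega A}-\lambda^A\Omega$, since $n_\Omega>0$.
\item For $a=\Omega$: $Qn_\Omega$ contains $\rho_\Omega{}^b n_b$. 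Because $\rho_{\Omega\Omega}=0$ and $g^{\Omega u}=1$, we get $\rho_\Omega{}^\Omega=\rho_{\Omega u}$, and its coefficient is $n_\Omega$. So $(n_\Omega,Qn_\Omega)$ is a pair that eliminates $\rho^{\Omega u}$ together with the gauge $n_\Omega=1$, which restricts to $\rho^{\Omega u}=\lambda^u\Omega$.
\item The pair $(\tau,Q\tau)$: $Q\tau$ contains $-\lambda^an_a$, whose $n_\Omega$-part is $-\lambda^u n_\Omega$ (since $n_\Omega$ pairs with index $u$). This seems to single out $\lambda^u$ rather than $\lambda^\Omega$, so the indices need care. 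With $g_{\Omega u}=1$, the lower index $\Omega$ corresponds to the upper index $u$. The constraint list contains $\lambda^\Omega=0$ with an upper index, and $\lambda^an_a=\lambda^\Omega n_\Omega+\dots$ with lower $n$, so the pair is indeed $\tau\leftrightarrow\lambda^\Omega$. Likewise $\rho^{\Omega\alpha}$ with upper indices pairs with $n_\alpha$, via $\rho_\alpha{}^b n_b\ni\rho_{\alpha}{}^{\Omega}n_\Omega$.
\end{itemize}
We then check the triangularity: after $n_\alpha=0$, $n_\Omega=1$, $\tau=0$, the functions $Qn_\alpha,\ Qn_\Omega,\ Q\tau$ are independent and linear, modulo the earlier variables, in $\rho^{\Omega\alpha}-\lambda^\alpha\Omega$ and $\lambda^\Omega$. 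Here $Qn_\Omega$ must vanish identically on the surface, consistently with $n_u$ being solved.

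\emph{Step 3 (straighten the pairs).} Invoking the standard lemma that if functions $w^i,Qw^i$ can be completed to a coordinate system then the zero locus $w^i=Qw^i=0$ is an equivalent reduction, we conclude that $(F_\red,Q)$ is an equivalent reduction of $(F,Q)$. We also confirm that $Q$ is tangent to $F_\red$ by checking that the $Q$-images of the defining functions lie in the ideal they generate.

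The main obstacle is Step 2 near $\Omega=0$. Some pairings degenerate at $\Omega=0$ in the AdS-type analysis, and we must show that none of the chosen pairs relies on $\Omega^{-1}$. The key point is that every pairing coefficient is $n_\Omega$, which is invertible, so the reduction holds uniformly up to the boundary. This also makes $F_\red$ compatible with $\d^QF$.
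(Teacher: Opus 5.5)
This is the paper's own argument: solve \eqref{poincare-scale} for $n_u$ (possible since $n_\Omega>0$), then remove the contractible pairs $(n_A,Qn_A)$, $(n_\Omega,Qn_\Omega)$, $(\tau,Q\tau)$, which trade off $\rho^{\Omega A}$, $\rho^{\Omega u}$, $\lambda^\Omega$ with coefficients $\pm n_\Omega$, so nothing degenerates at $\Omega=0$. Do tidy the index bookkeeping in Step 2: $\rho^{\Omega u}$ pairs with $n_\Omega$, not with $n_u$ (since $\rho_u{}^{\Omega}=\rho_{uu}=0$), and $Qn_u$ must be left out of your independence check, because applying $Q$ to \eqref{poincare-scale} gives $n_\Omega\,Qn_u=-\tau\,Q\Omega-\Omega\,Q\tau-n_u\,Qn_\Omega+n_A\,Qn_A$, so only the $D$ functions $Qn_A,Qn_\Omega,Q\tau$ are independent in the $D$ variables $\rho^{\Omega A},\rho^{\Omega u},\lambda^\Omega$.
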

\begin{proof}
Indeed, since $n_\Omega>0$, the functions $n_A$, $Qn_A$, $n_\Omega$, $Qn_\Omega$, $\tau$ and $Q\tau$ form contractible pairs, while $n_u$ can be eliminated by solving the constraint \eqref{poincare-scale}. Introducing $\xi\equiv\xi^\Omega+\lambda\Omega$, one may take the restriction of
\begin{align}
\xi\,,\xi^\alpha\,,\rho^{\alpha\beta}\,,\lambda\,,\lambda^\alpha\,,\Omega\,.
\end{align}
to $F_{\red}$ as independent coordinates. 
In this coordinate system the action of $Q$ is given by:
\begin{align}\label{poincare-Q-Fred}
    \begin{split}
    &Q\Omega=\xi\,,\qquad Q\xi=0\,,
    \\  &Q\rho^{\alpha\beta}=\rho^{\alpha C}\rho_{C}{}^{\beta}+ \delta^{\alpha\beta}_{\gamma\delta}\lambda^\gamma\xi^\delta-\Omega\delta^{\alpha\beta}_{\gamma\delta}\lambda^\gamma\rho^{u\delta}\,,\\
    &Q\xi^\alpha=\xi^B\rho_{B}{}^{\alpha}-\xi^\alpha\lambda+(\xi-\lambda\Omega)\rho^{u\alpha}+\xi^u\lambda^\alpha\Omega\,,\\
    &Q\lambda=\xi^A\lambda_A+ (\xi-\lambda\Omega)\lambda^u\,,\\
    &Q\lambda^\alpha=\lambda^B\rho_{B}{}^{\alpha}-\lambda\lambda^\alpha+\Omega\lambda^u\lambda^\alpha\,.
    \end{split}
\end{align}
By combining \eqref{pi-F-iso} with the embedding $F_\red \subset F$ one finds the explicit form of the $Q$-projection  $\pi^\red_{\mathfrak{iso}}:(F_\red,Q)\to(\mathfrak{iso}[1],\mathrm{d}_\algg)$
\begin{align}\label{pi-red-null}
\begin{aligned}
(\pi^{\red}_{\mathfrak{iso}})^{*}\tilde \xi^\Omega
    &= \Omega^{-1}\xi-\lambda\,,&
(\pi^{\red}_{\mathfrak{iso}})^{*}\tilde\xi^\alpha&= \Omega^{-1}\xi^\alpha\,,\\
(\pi^{\red}_{\mathfrak{iso}})^{*}\tilde\rho^{AB}
    &= \rho^{AB}\,,&(\pi^{\red}_{\mathfrak{iso}})^{*}\tilde\rho^{uB}&= \rho^{uB}-\Omega^{-1}\xi^B\,,\\
(\pi^{\red}_{\mathfrak{iso}})^{*}\tilde\rho^{\Omega u}  &= \lambda^u\Omega+\Omega^{-1}(\xi-\lambda\Omega)\,,&
(\pi^{\red}_{\mathfrak{iso}})^{*}\tilde\rho^{\Omega B} &= \lambda^B\Omega\, ,
\end{aligned}
\end{align}
which we need in what follows.
\end{proof}
\begin{prop}
    $(\partial^QF_\red,Q)$ is equivalent to $(\mathfrak{iso}[1],\mathrm{d}_\algg)$.
\end{prop}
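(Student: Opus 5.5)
We follow the proof of Proposition~\bref{prop-ads-reduction}. Since $F_\red$ is an equivalent reduction of $F$ and the $Q$-boundary is singled out by $\Omega=0$, $Q\Omega=\xi=0$ in both, it suffices to study $\partial^Q F_\red$. We denote the embedding by $b$ and set $\hat f\equiv b^*f$. As coordinates on $\partial^Q F_\red$ we take
\begin{equation}
\hat\xi^\alpha,\quad \hat\rho^{\alpha\beta},\quad \hat\lambda,\quad \hat\lambda^\alpha,\qquad \{\alpha\}=\{u,A\}\,,
\end{equation}
that is, the restrictions of the coordinates of $F_\red$ other than $\Omega$ and $\xi$.

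Setting $\Omega=\xi=0$ in \eqref{poincare-Q-Fred}, the action of $Q$ becomes
\begin{equation}
\begin{gathered}
Q\hat\xi^\alpha=\hat\xi^B\hat\rho_B{}^\alpha-\hat\xi^\alpha\hat\lambda\,,\qquad
Q\hat\rho^{\alpha\beta}=\hat\rho^{\alpha C}\hat\rho_C{}^\beta+\delta^{\alpha\beta}_{\gamma\delta}\hat\lambda^\gamma\hat\xi^\delta\,,\\
Q\hat\lambda=\hat\xi^A\hat\lambda_A\,,\qquad
Q\hat\lambda^\alpha=\hat\lambda^B\hat\rho_B{}^\alpha-\hat\lambda\hat\lambda^\alpha\,.
\end{gathered}
\end{equation}
Indices are raised with the restriction of \eqref{flat-metric} to $\{u,A\}$, which is degenerate: $g_{uu}=0$ and $g_{uA}=0$. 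In particular, the contractions run over $B,C$ only, and $\hat\lambda_A$ involves only $\hat\lambda^A$. This is the Chevalley--Eilenberg differential of a Lie algebra. Its generators are the translations $P_\alpha$ (dual to $\hat\xi^\alpha$), the rotations $\hat\rho^{AB}$ together with the Carrollian boosts $\hat\rho^{uA}$, the dilatation (dual to $\hat\lambda$) and the special conformal generators (dual to $\hat\lambda^\alpha$). In other words, it is the conformal Carroll algebra of the null boundary, which is known to be isomorphic to $\mathfrak{iso}(D-1,1)$, the BMS-type finite subalgebra.

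The main step is to exhibit this isomorphism explicitly. The Poincar\'e algebra has a $D$-dimensional abelian ideal of translations and the quotient $\mathfrak{so}(D-1,1)$. First I identify the abelian ideal in the boundary algebra: it should be spanned by the duals of $\hat\xi^u$, $\hat\xi^A$ and $\hat\lambda^u$. Next I identify the Lorentz part: it should be generated by $\hat\rho^{AB}$, the combinations of $\hat\xi^A$, $\hat\lambda^A$ and $\hat\lambda$, and the dilatation dual to $\hat\lambda$. Together these give $\dim\mathfrak{so}(D-2)+2(D-2)+1=\dim\mathfrak{so}(D-1,1)$. Concretely, I would construct a linear change of degree-one coordinates on $\partial^Q F_\red$ that brings $Q$ to the form \eqref{pi-F-iso}. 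A more conceptual route, which I would try first, is to note that the boundary algebra is the stabilizer-type subalgebra of $\mathfrak{so}(D,2)$ that preserves the null ray $n$ with $\tau=0$, written in light-cone form. The boosts $\rho^{u\alpha}$ dropped out of the relevant brackets, and the basis is adjusted so that the brackets match $\mathfrak{iso}$.

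The main obstacle is verifying the brackets through the Jacobi-consistent dictionary, especially signs and which combinations close into $\mathfrak{so}(D-1,1)$. Checking $Q^2=0$ does not help here, since it is automatic because $\partial^Q F_\red$ is a $Q$-submanifold. There is also a subtlety in dimension counting: the boundary algebra has $D+\dim\mathfrak{so}(D-1,1)$ generators only if the $\hat\rho^{u A}$ are counted correctly. If $\hat\rho^{uA}$ appear genuinely as independent coordinates, one extra set of $D-2$ generators must instead form contractible pairs with something. I would check this by computing $Q\hat\rho^{uA}$ and, if necessary, identifying a contractible pair. Once the isomorphism of the degree-one generators is fixed, the equivalence follows, because equivalent reductions were already applied at the level of $F$.
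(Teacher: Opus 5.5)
The paper's own proof is exactly the route you set up: compute $Q$ on $\partial^QF_\red$, which you did correctly, and read off the Lie algebra. The step you call the main one, however, uses the wrong identification. The span of the duals of $\hat\xi^u,\hat\xi^A,\hat\lambda^u$ is not an ideal. For a subspace to be an ideal, the ghosts annihilating it (here $\hat\rho^{\alpha\beta},\hat\lambda,\hat\lambda^A$) must close under $Q$. They do not: $Q\hat\lambda=\hat\xi^A\hat\lambda_A$, and $Q\hat\rho^{AB}$ and $Q\hat\rho^{uA}$ contain the terms $\delta^{AB}_{CD}\hat\lambda^C\hat\xi^D$ and $\delta^{uA}_{\gamma\delta}\hat\lambda^\gamma\hat\xi^\delta$, all involving your putative translation ghosts.

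You also place $\hat\xi^A$ both in the ideal and in the Lorentz part and leave $\hat\rho^{uA}$ out. That double counting is what produces your ``dimension subtlety'', and the subtlety is not real. Every $Q$-image of a boundary coordinate is quadratic, so there are no contractible pairs at all. With $\hat\rho^{uA}$ included, the count is $2(D-1)+(D-2)+1+\tfrac{(D-2)(D-3)}{2}=\tfrac{D(D+1)}{2}=\dim\mathfrak{iso}(D-1,1)$ exactly. For the same reason the boosts $\hat\rho^{uA}$ do not ``drop out''; they belong to the translation ideal.

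The correct split, used in the paper, takes $\hat\xi^u,\hat\rho^{uA},\hat\lambda^u$ as translation ghosts, i.e.\ precisely the boundary coordinates carrying an upper $u$ index. Their $Q$-images are linear in these same ghosts with coefficients in the remaining ones (e.g.\ $Q\hat\lambda^u=\hat\lambda^B\hat\rho_B{}^u-\hat\lambda\hat\lambda^u$), so they dualize an abelian ideal of dimension $D$. The ghosts $\hat\xi^A,\hat\rho^{AB},\hat\lambda,\hat\lambda^A$ close under $Q$ into the CE differential of the conformal algebra of Euclidean $\mathbb{R}^{D-2}$, i.e.\ $\mathfrak{so}(D-1,1)$. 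This algebra acts on $(\hat\xi^u,\hat\rho^{uA},\hat\lambda^u)$ as the vector representation in light-cone form. In boundary language, the Poincar\'e translations are $\partial_u$, $x^A\partial_u$, $x^Bx_B\partial_u$, while $P_A$ (dual to $\hat\xi^A$) is a Lorentz, i.e.\ celestial conformal, generator.

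Your stabilizer idea is a valid alternative if carried out. Using $Q\Omega$, $Qn_a$, $Q\tau$ from \eqref{Q-int-F}, the conditions $\hat\xi^\Omega=\hat\rho^{\Omega\alpha}=\hat\lambda^\Omega=0$ cutting $\partial^QF_\red$ out of $\mathfrak{so}(D,2)[1]$ are exactly the annihilator of the stabilizer of the null vector $(\Omega,n_a,\tau)=(0,\delta^\Omega_a,0)$. That stabilizer is $\mathfrak{so}(D-1,1)\ltimes\mathbb{R}^D$. Its translation part is again the $u$-indexed generators, so this route reproduces the paper's identification, not yours.
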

\begin{proof}
    The $Q$-boundary  $\partial^QF_{\red}$ of $F_{\red}$ is defined by $\Omega=\xi=0$ and we denote by $b:\partial^QF_{\red} \hookrightarrow F_{\red}$ the corresponding embedding. It is straightforward to verify that $(\partial^QF_{\red},Q) \subset (\partial^Q F,Q)$ is an equivalent reduction. 
    
    Using the notation $\hat f=b^{*}f$, $\forall f\in C^\infty(F_{\red})$, we take the following functions
    \begin{align}
        \hat \xi^\alpha\,, \hat \rho^{\,\alpha\beta}\,, \hat \lambda\,, \hat \lambda^\alpha\,,
    \end{align}
     as coordinates on $\partial^QF_\red$.  The action of $Q$ on these coordinates is then obtained by applying $b^*$ to \eqref{poincare-Q-Fred}, giving:
     \begin{align}\label{poincare-Qbdry}
\begin{alignedat}{4}
Q\hat\xi^\alpha
    &= \hat \xi^B\hat \rho_{B}{}^{\alpha}-\hat \xi^\alpha\hat\lambda\,,
&\qquad\qquad
Q\hat \rho^{\alpha\beta}
    &= \hat \rho^{\alpha C}\hat \rho_{C}{}^{\beta}
       + \delta^{\alpha\beta}_{\gamma\delta}\hat \lambda^\gamma\hat \xi^\delta\,,
\\
Q\hat \lambda
    &= \hat \xi^A\hat \lambda_A\,,
&\qquad\qquad
Q\hat \lambda^\alpha
    &= \hat \lambda^B\hat \rho_{B}{}^{\alpha}
       -\hat \lambda\hat \lambda^\alpha\,.
\end{alignedat}
\end{align}
Identifying $\hat \xi^u,\hat \rho^{\,uA}, \hat \lambda^u$ as the translation ghosts, one readily recognizes \eqref{poincare-Qbdry} as the Chevalley--Eilenberg differential of the Poincar\'e algebra in $D$ dimensions. 

\end{proof}

\begin{rem} 
    On the boundary, this algebra is usually interpreted not as the Poincaré algebra in $D$ dimensions but as the global conformal Carrollian algebra in $D-1$ dimensions. We also note that the Chevalley--Eilenberg complex of the Carroll algebra in $D-1$ dimensions can be obtained by imposing the conditions $\hat\lambda=\hat\lambda^\alpha=0$.
\end{rem}

A complete analogue of the technical Proposition~\bref{prop-calculus-ads} can also be proven in this case:
\begin{prop}\label{utv-nabla-poincare}
For $N\geq1$ we have
\begin{align}
    \begin{split}
            b^{*}\nabla^N\xi^A&=\delta_{N,1}\hat C^A\,,\quad b^{*}\nabla^N\lambda=\delta_{N,1}\hat \lambda^u\,,\\
            b^{*}\nabla^N\xi^\Omega&=-\delta_{N,1}\hat \lambda-2\delta_{N,2}\hat \lambda^u\,,\\
            b^{*}\nabla^N\rho^{\alpha\beta}&=b^{*}\nabla^N\xi^u=b^{*}\nabla^N\lambda^\alpha=0\,,
    \end{split}
\end{align}
where we have introduced the following notation:
\begin{align}
    \hat C^A\equiv \hat\rho^{\,uA}\,.
\end{align}
\end{prop}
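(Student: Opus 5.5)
The plan is the same direct computation that gives Proposition~\bref{prop-calculus-ads}, now carried out with the action of $Q$ on $F_\red$ from \eqref{poincare-Q-Fred}. In the coordinate system $\xi,\xi^\alpha,\rho^{\alpha\beta},\lambda,\lambda^\alpha,\Omega$ take $\md=\dl{\xi}$. Then $\md\Omega=0$, $\md\xi=1$ and $\md^2=0$, and $\md$ annihilates every other coordinate. For any coordinate function $f\neq\Omega,\xi$ this gives $\nabla f=[Q,\md]f=\md(Qf)$, because $\md f=0$. So $\nabla f$ is just the coefficient of $\xi$ in $Qf$, read off directly from \eqref{poincare-Q-Fred}. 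In addition $\nabla\Omega=1$ and $\nabla\xi=0$.

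First step: compute $\nabla$ on the generators.
\begin{gather*}
\nabla\xi^\alpha=\rho^{u\alpha},\quad \nabla\lambda=\lambda^u,\quad \nabla\lambda^\alpha=0,\\
\nabla\rho^{\alpha\beta}=\md\bigl(\delta^{\alpha\beta}_{\gamma\delta}\lambda^\gamma\xi^\delta\bigr)=0 .
\end{gather*}
The last vanishes because $\xi^\delta$ with $\delta\in\{u,A\}$ is an independent coordinate, not $\xi$. Since $\nabla$ is an even derivation commuting with $Q$, higher powers follow by iterating on these results. For $\nabla^N\xi^A$ we get $\nabla\xi^A=\rho^{uA}$ and $\nabla^2\xi^A=\nabla\rho^{uA}=0$. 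Pulling back gives $b^*\nabla^N\xi^A=\delta_{N,1}\hat C^A$. For $\xi^u$ we get $\nabla\xi^u=\rho^{uu}=0$ by antisymmetry. For $\lambda$ we get $\nabla\lambda=\lambda^u$ and $\nabla^2\lambda=\nabla\lambda^u=0$. This settles all the statements except the one for $\xi^\Omega$.

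Second step: treat $\xi^\Omega\equiv\xi-\lambda\Omega$, which is the only place where $\Omega$ enters explicitly. Using $\nabla\Omega=1$ and $\nabla\xi=0$,
\begin{gather*}
\nabla\xi^\Omega=-\lambda-\lambda^u\Omega,\qquad
\nabla^2\xi^\Omega=-2\lambda^u,\qquad
\nabla^3\xi^\Omega=0 .
\end{gather*}
Applying $b^*$ and using $b^*\Omega=0$ gives $-\delta_{N,1}\hat\lambda-2\delta_{N,2}\hat\lambda^u$.

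The main obstacle is to make sure that $\md=\dl{\xi}$ is correctly defined in the coordinate system on $F_\red$. The vanishing of the $\xi$-coefficients must be read in the independent coordinates $\xi,\xi^\alpha,\dots$, and not in terms of the composite $\xi^\Omega$. In particular, the terms $(\xi-\lambda\Omega)\rho^{u\alpha}$ and $(\xi-\lambda\Omega)\lambda^u$ in $Q\xi^\alpha$ and $Q\lambda$ are exactly what produce the nonzero first derivatives. A secondary check is that $\rho^{u\alpha}$, $\lambda^u$ and $\lambda^\alpha$ have vanishing $\xi$-coefficient in their $Q$-images. This ensures that the iteration terminates as claimed.
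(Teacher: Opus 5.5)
Your proof is correct and is the direct computation the paper intends. The paper gives no separate proof; it calls this statement a complete analogue of Proposition~\ref{prop-calculus-ads}, whose proof is likewise a straightforward reading of the reduced $Q$. Taking $\mathfrak{d}=\partial/\partial\xi$ in the coordinates on $F_{\mathrm{red}}$, reading off $\nabla f=\mathfrak{d}(Qf)$ from \eqref{poincare-Q-Fred}, iterating, and treating $\xi^\Omega=\xi-\lambda\Omega$ via $\nabla\Omega=1$, $\nabla\xi=0$ gives exactly the stated formulas.
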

\begin{example}
    Consider the following cocycle on $\mathfrak{iso}[1]$:
    \begin{align}
        \tilde f\equiv\frac{1}{D!}\varepsilon_{a_1\dots a_D}\tilde\xi^{a_1}\dots\tilde\xi^{a_D}\,.
    \end{align}
    By Proposition~\bref{poincare-equiv} it can be uplifted to $Int(F)$. Applying Theorem~\bref{th-renorm} and Proposition~\bref{utv-nabla-poincare} one can easily see that
    \begin{align}
        f^\cR\propto \hat \xi^u\hat \lambda^u\hat C^{A_{1}}\dots \hat C^{A_{D-2}}\varepsilon_{\Omega u A_1\dots A_{D-2}},\qquad f^\cA=0\,.
    \end{align}
\end{example}
\subsubsection{Field-theoretical interpretation}\label{poincare-field}
Just like in the (A)dS case, the boundary gPDE  $\partial^QE\equiv(\partial^QF_{\red},Q)\times(T[1]\partial X,d_{\partial X})$ describes flat connections, except that they now take values in the Poincar\'e algebra. Indeed, the $Q$-map condition on sections $\sigma: T[1]\partial X\to \partial^Q E$ is the flatness condition, as can be seen from \eqref{poincare-Qbdry}.
However, the field-theoretic interpretation of this connection is not entirely standard, since the base is $d\equiv (D-1)$-dimensional. Introducing $\sigma^{*}\hat\xi^{\alpha}\equiv e^\alpha(x,\theta)\equiv e^\alpha{}_{\mu}(x)\theta^\mu$ we require $e^{\alpha}{}_{\mu}$ to be nondegenerate, which is again not entirely standard, since from the viewpoint of the Poincaré algebra they do not correspond to the translational part. In other words, from the viewpoint of Klein geometry, we consider a different coset, see \cite{Herfray:2021qmp} for a related discussion.

It is easy to see that 
\begin{equation}
\label{c-gauge}
    \sigma^*(\xi^\alpha)=\theta^\alpha\,, \qquad  \sigma^*(\text{the rest})=0
\end{equation} 
is clearly a solution satisfying the above nondegeneracy condition. Moreover,  as for any flat connection, any other solution is locally related to the above one by a gauge transformation. We will refer to the gauge \eqref{c-gauge} as the coordinate-like gauge.

As in the (A)dS case, we can introduce a more general gauge. To this end, we assume that $\partial X$ has the topology $\mathbb{R}\times \Sigma$ for some $\Sigma$, and we introduce adapted coordinates  $\{x^\mu\}\equiv \{u,y^i\}$, where $u$ is a coordinate on $\mathbb{R}$. We then fix the notation $\sigma^{*}\hat C^I \equiv C^I(x,\theta)\equiv C^I{}_{|\mu}(x)\theta^\mu$ for all coordinates except $\hat\xi^\alpha$, for which we keep the convention $\sigma^*\hat\xi^\alpha = e^{\alpha}{}_{\mu}(x)\theta^\mu$,  and require
\begin{align}
  e^A{}_{u}=0\,,\quad  e^u{}_{\mu}=\partial_\mu u\,, \quad e^{i}{}_{A}C^A{}_{|i}=0\,,\quad \lambda_{|\mu}=0\,,
\end{align}
where $e^{i}{}_{A}$ is the inverse of $e^A{}_{i}$. Note that the nondegeneracy of $e^\alpha{}_{\mu}$ together with $e^u{}_{\mu}=\partial_\mu u$ implies the nondegeneracy of $e^A{}_{i}$, so the inverse is well defined. Following \cite{Herfray:2021qmp}, we will refer to this gauge as the Bondi gauge, although in this gauge we do not require the scalar curvature of $\gamma_{ij}\equiv e^{A}{}_i g_{AB}e^B{}_{j}$ to be constant. In this gauge, the flatness condition implies in particular 
\begin{equation}\label{Bondi-formulas}
\begin{gathered}
\partial_u\gamma_{ij}
    = \lambda_{A|u}
    = \lambda_{[i|j]}
    = C_{B|u}
    = C_{[i|j]}
    = 0\,,
\qquad
\lambda_{i|j}\big|_{tf}
    = \partial_u C_{i|j}\,,
\\
2(d-2)\lambda^{i}{}_{|i}
    = -R[\gamma]\,,
\qquad
2(d-2)\lambda^{u}{}_{|u}
    = \frac{R[\gamma]}{d-1}\,,
\qquad
(d-2)\lambda^u{}_{|i}
    = -D^j C_{j|i}\, ,
\end{gathered}
\end{equation}
where we used $e^{A}{}_{i}$ to convert indices, e.g. $\lambda_{i|j}\equiv e^{A}{}_{i}\lambda_{A|j}$. Here $R[\gamma]$ is the Ricci scalar of $\gamma_{ij}$, $|_{tf}$ denotes the trace-free part, and $D_i$ is the Levi--Civita connection, both defined with respect to $\gamma_{ij}$. In the context of gravity at null infinity, $C_{i|j}$ is usually referred to as the asymptotic shift.
\subsection{Example: Scalar}

The examples of fields on a background with vanishing cosmological constant closely follow those on the $\Lambda\neq0$ background discussed in Sections~\bref{sec:scalar-ads} and \bref{sec:maxwell-ads}. For this reason, we will be rather brief in what follows, focusing primarily on the key formulas and the final results.

We start directly with the reduced model $(F^\varphi_{\red},Q,\eI_{\red}^{\varphi}) \xrightarrow{} (F_{\red},Q)$. As coordinates on the fiber of this bundle, we choose
\begin{align}\label{poincare-scalar-coord}
    \nabla_{(\alpha)}\varphi^{(N)}\equiv\nabla_{(\alpha)}\nabla^N\varphi\,,\qquad N\geq0\,,
\end{align}
with the action of $Q$ given by
\begin{align}\label{Qphi-poinc}
    Q\varphi=\xi\varphi^{(1)}+\xi^\alpha\nabla_\alpha\varphi+(\ell-d/2)\lambda\varphi-\lambda\Omega\varphi^{(1)}\,,
\end{align}
where $\ell\in\mathbb{N}/2$. The ideal $\eI^\varphi_{\red}$ is generated by $\nabla_a$ prolongations of
\begin{align} \label{poincare-scalar-ideal}
\begin{split}
\Omega(2\nabla_u\varphi^{(1)}+\nabla_A\nabla^A\varphi)+ (1-2\ell)\nabla_u \varphi
    \end{split}
\end{align}
which can be obtained as the pullback of \eqref{Scalar-ideal} to $F_{\red}$  upon setting $m=0$, $w=\ell-\frac{d}{2}$.

The induced boundary gPDE for $(F^\varphi_{\red},Q,\eI_{\red}^{\varphi}) \xrightarrow{} (F_{\red},Q)$
is given by $(\partial^{Q}F^{\varphi}_{\red},Q,\hat\eI^\varphi)\to(\partial^QF_{\red},Q)$. Using \eqref{poincare-scalar-coord} pulled back to $\partial^QF_{\red}$ as the fiber coordinates and employing \eqref{Qphi-poinc} and Proposition~\bref{utv-nabla-poincare} one finds:
\begin{multline}\label{scalar-poincare-Q-bdry}
Q\hat\varphi^{(N)}=\hat \xi^\alpha\nabla_\alpha\hat\varphi^{(N)}+N\hat C^A\nabla_A\hat\varphi^{(N-1)}+\\N\hat \lambda^u(\ell-\frac d2-N+1)\hat \varphi^{(N-1)} + \hat \lambda(\ell-\frac{d}{2}-N)\hat \varphi^{(N)}\,.
\end{multline}
The ideal $\hat \eI^\varphi=b^*\eI^\varphi$ determined by \eqref{poincare-scalar-ideal}, is generated by  $\nabla_\alpha$ prolongations of: 
\begin{align}\label{eq-null-scalar}
    (2N-2\ell +1)\nabla_u\hat\varphi^{(N)}+N\nabla^A\nabla_A\hat\varphi^{(N-1)}\,,\quad N\geq0\,.
\end{align}

Now we are ready to turn our attention to the  presymplectic potentials. In complete analogy with Proposition~\bref{ads-scalar-presymp}, we have
\begin{prop}
$1$-form \begin{align}
\tilde\chi_\varphi=\Omega^{-d/2-\ell+1}(\xi_{\Omega}\ms{d}\nabla_u\varphi+\xi\ms{d}_{u}\varphi^{(1)}+\xi\ms{d}_{A}\nabla^A\varphi)d(\Omega^{d/2-\ell}\varphi)\,,
\end{align}
on $Int(F^\varphi_\red)$ is a compatible presymplectic potential, i.e. it satisfies
    \begin{equation}
        L_Q\tilde\chi_\varphi=d(-\tilde H_{\varphi}-i_Q\tilde\chi_\varphi)+\cI_{Int(M)}+\eI^{\varphi}\,,
            \end{equation}
    where
    \begin{multline}
        \tilde H_\varphi=\frac{(-1)^{d+1}}{2}\Omega^{-2\ell}[\xi\ms{d+1}(\Omega(2\varphi^{(1)}\nabla_u\varphi+\nabla_A\varphi\nabla^A\varphi)+\\(d-2\ell)\varphi \nabla_u\varphi)+
        (\frac{d}{2}-\ell)\Omega\varphi^2\lambda^\alpha\xi_\alpha\ms{d}]\,.
    \end{multline}
Moreover, this system is equivalent to the scalar field minimal gPDE over $\mathfrak{iso}[1]$ with presymplectic structure~\eqref{chi-stand}, provided $\tilde\chi_\varphi$
is modified by a $d$-exact term and terms belonging to $\eI^{\varphi}$.
\end{prop}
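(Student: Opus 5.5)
The plan is to follow the AdS argument of Proposition~\bref{ads-scalar-presymp}, since the flat-space statement is its $\epsilon=0$ counterpart with the split $\{a\}=\{\Omega,u,A\}$. Start from the standard potential \eqref{chi-stand} on $\mathfrak{iso}[1]$, namely $\tilde\chi'_\varphi=\tilde\xi\ms{d}_a\tilde\nabla^a\tilde\varphi\,d\tilde\varphi$. Its compatibility $L_{\mathrm d_\algg}\tilde\chi'_\varphi=d(-\tilde H'-i_Q\tilde\chi'_\varphi)+\dots$ holds modulo the Klein--Gordon ideal, with $\tilde H'=-\half\tilde\xi\ms{D}\tilde\nabla_a\tilde\varphi\tilde\nabla^a\tilde\varphi$ up to sign conventions and $m=0$. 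Pull it back along $\pi^\red_{\mathfrak{iso}}$ given by \eqref{pi-red-null}. Because pullback by a $Q$-map preserves the relation $L_Q\chi+d(H+i_Q\chi)\in\cI$, compatibility is inherited automatically. The ideal $\cI_{\mathfrak{iso}[1]}$ pulls back into $\cI_{Int(M)}$, and $\Omega^{w}\pi^*\tilde K$ generates $\eI^\varphi$.

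The main work is rewriting the pulled-back objects in the near-boundary coordinates $\varphi=\Omega^{w}\pi^*\tilde\varphi$ with $w=\ell-d/2$. This needs three ingredients.

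\emph{First}, express $\pi^*\tilde\xi\ms{d}_a$ through $\xi,\xi^\alpha,\lambda,\Omega$. From \eqref{pi-red-null} one gets $\tilde\xi^a=\Omega^{-1}\xi^a$, where $\xi^\Omega=\xi-\lambda\Omega$. Hence $\pi^*\tilde\xi\ms{d}_a=\Omega^{-d}\xi\ms{d}_a$, with indices lowered by \eqref{flat-metric}. Since $g_{\Omega u}=1$, upper $\Omega$ pairs with lower $u$.

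\emph{Second}, express $\pi^*(\tilde\nabla_a\tilde\varphi)$ in terms of $\nabla$. The vector fields $\tilde\nabla_a=[\mathrm d_\algg,\d/\d\tilde\xi^a]$ are related to $\nabla_a=[Q,\d/\d\xi^a]$ by $\d/\d\tilde\xi^a\simeq\Omega\,\d/\d\xi^a$ plus corrections. The cleanest route is to compare the two expressions for $Q\varphi$: \eqref{Qphi-poinc} versus $\Omega^w\pi^*(\tilde\xi^a\tilde\nabla_a\tilde\varphi)+w\,\Omega^{-1}(Q\Omega)\varphi$. Reading off the coefficients of $\xi^a$ gives
\begin{equation}
\Omega^{w}\pi^*\tilde\nabla_a\tilde\varphi=\Omega\nabla_a\varphi-w\,n_a\varphi\,,
\end{equation}
with $n_a=g_{a\Omega}$ on $F_\red$. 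This yields $\tilde\nabla_u\leftrightarrow\Omega\varphi^{(1)}-w\varphi$, $\tilde\nabla_\Omega\leftrightarrow\Omega\nabla_u\varphi$ and $\tilde\nabla_A\leftrightarrow\Omega\nabla_A\varphi$.

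\emph{Third}, substitute. We have $d\tilde\varphi=\Omega^{-w}\,d\varphi-w\,\Omega^{-w-1}\varphi\,d\Omega$ and $\xi\ms{d}$ carries $\Omega^{-d}$. The result is $\Omega^{-d-2w+1}(\ldots)\,d(\Omega^{-w}\ldots)$, which must be brought to the stated form $\Omega^{-d/2-\ell+1}(\ldots)\,d(\Omega^{d/2-\ell}\varphi)$.

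The leftover term $-w\,\Omega^{-d-w}\xi\ms{d}_u\varphi\,d(\Omega^{-w}\varphi)$ is the delicate one. I would show it equals a $d$-exact piece plus an $\cI$-term, up to the factor $\xi\ms{d}_u$. To do this, write $\varphi\,d(\Omega^{-w}\varphi)$ as $\half\Omega^{w}d(\Omega^{-2w}\varphi^2)$ plus terms proportional to $d\Omega$. Then use that $d\xi^a$ lies in $\cI$ up to $d\Omega$ pieces. This is exactly the ambiguity allowed in the final sentence of the statement.

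$\tilde H_\varphi$ is obtained in the same way by pulling back $\tilde H'$, or alternatively by recomputing it from $L_Q\tilde\chi_\varphi$ directly. The terms $(d-2\ell)\varphi\nabla_u\varphi$ and $\lambda^\alpha\xi\ms{d}_\alpha$ arise from $Q\Omega=\xi$ and from $\xi^\Omega=\xi-\lambda\Omega$. As a safer check, I would verify the final identity directly using \eqref{poincare-Q-Fred}, \eqref{Qphi-poinc} and the ideal \eqref{poincare-scalar-ideal}.

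The equivalence claim follows as in Proposition~\bref{poincare-equiv}. $Int(F^\varphi_\red)$ is the pullback bundle of the minimal scalar gPDE along an equivalent reduction, so it splits as that gPDE times a contractible factor. The pulled-back potential has no legs along the contractible directions. The main obstacle I expect is keeping track of the $\lambda\Omega$ and $d\Omega$ terms in the exact and ideal corrections.
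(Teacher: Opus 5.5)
Your construction of $\tilde\chi_\varphi$ is the paper's own derivation (Appendix~\bref{app:scalar-derivation}, done there for AdS and transferred ``in complete analogy''). You pull back $\tilde\xi\ms{d}_a\tilde\nabla^a\tilde\varphi\,d\tilde\varphi$, use $\pi^*\tilde\xi\ms{d}_a=\Omega^{-d}\xi\ms{d}_a$ and $\Omega^{w}\pi^*\tilde\nabla_a\tilde\varphi=\Omega\nabla_a\varphi-wn_a\varphi$, and split off the $w$-term as $d\alpha+\beta$. That split is exact, because $\varphi\,d(\Omega^{-w}\varphi)=\half\Omega^{w}d(\Omega^{-2w}\varphi^2)$ holds identically. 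So the term is $-\frac{w}{2}\Omega^{-d}\xi\ms{d}_u\,d(\Omega^{-2w}\varphi^2)$, and $\beta\propto\varphi^2\Omega^{-2w}d(\Omega^{-d}\xi\ms{d}_u)$ lies in $\cI$ outright, being $d$ of a background function. Since $d\Omega\in\cI$ as well, no $d\Omega$ bookkeeping is needed.

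There are some index slips. On $F_\red$ one has $n_\Omega=1$, $n_\alpha=0$, so $n^a=g^{a\Omega}$, not $n_a=g_{a\Omega}$; it is $n^u=1$ that produces your $\xi\ms{d}_u$ leftover. Your correspondences hold only with upper indices: $\tilde\nabla^u=\tilde\nabla_\Omega\leftrightarrow\Omega\varphi^{(1)}-w\varphi$ and $\tilde\nabla^\Omega=\tilde\nabla_u\leftrightarrow\Omega\nabla_u\varphi$. Also, with $d(\Omega^{-w}\varphi)$ kept intact, the prefactor is $\Omega^{-d-w+1}=\Omega^{-d/2-\ell+1}$, not $\Omega^{-d-2w+1}$.

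The step that fails as written is the Hamiltonian. Pulling back $\tilde H'$ gives the Hamiltonian of $\pi^*\tilde\chi'$. That reproduces only the $\xi\ms{d+1}$ part of the stated $\tilde H_\varphi$ (using $m=0$ and $n_an^a=\tau=0$ on $F_\red$). Discarding $d\alpha$ leaves $\omega$, hence $H$, unchanged. Discarding $\beta$, however, changes $\omega$ by $d\beta$, and since $L_Q\beta\in\cI$ this shifts the Hamiltonian by $i_Q\beta$. The paper sets $\tilde H_\varphi=\tilde H'_\varphi+i_Q\beta$, cf.~\eqref{utv-adsscalar-2}.

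The term $(\frac d2-\ell)\Omega^{1-2\ell}\varphi^2\lambda^\alpha\xi\ms{d}_\alpha$ comes entirely from $i_Q\beta$. Indeed, $Q(\Omega^{-d}\xi\ms{d}_u)$ is the pullback of $\mathrm{d}_\algg\tilde\xi\ms{d}_u\propto\tilde\rho^{\Omega b}\tilde\xi\ms{d}_b$. By \eqref{pi-red-null}, $\tilde\rho^{\Omega\alpha}$ pulls back to $\lambda^\alpha\Omega$; the extra $\Omega^{-1}\xi^\Omega$ in $\tilde\rho^{\Omega u}$ drops because $\xi^\Omega\xi\ms{d}_u=0$. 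Hence $Q(\Omega^{-d}\xi\ms{d}_u)\propto\Omega^{1-d}\lambda^\alpha\xi\ms{d}_\alpha$.

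Your attribution of this term to $\xi^\Omega=\xi-\lambda\Omega$ is wrong: that substitution only produces the dilatation ghost $\lambda$, never $\lambda^\alpha$. Your fallback of checking $L_Q\tilde\chi_\varphi$ directly against \eqref{poincare-Q-Fred} would catch this. The primary route you state, however, gives an incorrect $\tilde H_\varphi$.
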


It is clear that $\chi_\varphi=\Omega^{2\ell}\tilde\chi_\varphi$ and $H_\varphi\equiv\Omega^{2\ell}\tilde H_\varphi$ admit smooth extensions to the $Q$-boundary, and therefore we choose $p=2\ell$.
Unlike in the (A)dS case, the distinction between integer and half-integer $\ell$ is less substantial. We therefore proceed directly to the examples.
\begin{example}
Let $\ell=1/2.$ Then
    \begin{align}
        \chi^\cR_\varphi=\hat\xi\ms{d}\nabla_u\hat \varphi d\hat \varphi,\qquad \chi_\varphi^\cA=0\,.
    \end{align}
   In general, the presymplectic structure $\omega^{\cR}\equiv d\chi^\cR$ induces a presymplectic structure on the space of configurations of the boundary values. In this particular case, this can be related to the Hermitian product introduced by R. Sachs in~\cite{Sachs:1962zza}. 
\end{example}
  
\begin{example}
Let $\ell=1$. Then
    \begin{align}
    \begin{split}
                &\chi^\cA_\varphi=(-1)^d(\hat\xi\ms{d-1}_u\hat\varphi^{(1)}+\hat\xi_A\ms{d-1}\nabla^A\hat\varphi)d\hat\varphi\,,\\
                &H^{\cA}_l=\frac{1}{2}\hat\xi\ms{d}\nabla_A\hat\varphi\nabla^A\hat\varphi+\frac{d-2}{2}\hat\varphi(\frac{\hat\varphi}{2}\hat\lambda^\alpha\hat\xi_\alpha\ms{d-1}-\hat\lambda\hat\xi_u\ms{d-1}\hat\varphi^{(1)}-\hat\lambda\hat\xi_A\ms{d-1}\nabla^A\hat\varphi)\,.
    \end{split}
    \end{align}
Introducing coordinates on the space of sections of $\sigma: T[1]\partial X \to\partial^QF^\varphi_{\red}$ according to  
$\sigma^{*}\nabla_{\alpha_1}\dots\nabla_{\alpha_n}\hat\varphi^{(N)}\equiv\varphi^{(N)}_{\alpha_1\dots \alpha_n}(x)$,
the intrinsic action takes the following form:
\begin{multline}    
        S[\varphi^{(1)},\varphi,\varphi^A]=-\int e\ms{d}[\varphi^{(1)}\partial_u\varphi+\varphi^A\partial_A\varphi-\frac{1}{2}\varphi_A\varphi^A+\frac{2-d}{4}\varphi^2\lambda^{\alpha}{}_{|\alpha}+ \\  \frac{d-2}{2}\varphi(\lambda_{|u}\varphi^{(1)}+\lambda_{|A}\varphi^{A})]\,,
\end{multline}
where $\partial_\alpha \equiv e^\mu{}_{\alpha}\partial_\mu$, see Section~\bref{poincare-field} for the notation in the background sector. For $d\geq3$, working in the Bondi gauge for the background and eliminating $\varphi^A$, the action can be rewritten using \eqref{Bondi-formulas} as
\begin{align}\label{magnetic-scalar}
             S[\varphi,\varphi^{(1)}]=-\int dud^{d-1}y\sqrt{\gamma}(\varphi^{(1)}\partial_u\varphi+\frac{1}{2}\partial^{i}\varphi\partial_{i}\varphi+\frac{d-2}{8(d-1)}\varphi^2R[\gamma])\,,
\end{align}
where $\{x^\mu\}=\{u,y^i\}$ and $\gamma_{ij}\equiv e^A{}_{i}g_{AB}e^{B}{}_{j}.$
The corresponding equations of motion are:
\begin{align}
    \begin{split}
        &\partial_u\varphi=0\,,\qquad \partial_u\varphi^{(1)}+\frac{1}{\sqrt \gamma}\partial_{i}(\sqrt \gamma\partial^{i}\varphi)-\frac{d-2}{4(d-1)}\varphi R=0\,.
    \end{split}
\end{align}
The theory defined by these equations is known in the literature as the magnetic Carrollian scalar. For a constant metric $\gamma$, the action \eqref{magnetic-scalar} was obtained in \cite{Henneaux:2021yzg} as the ultra-relativistic limit of the Klein--Gordon action. See also \cite{Bekaert:2024itn}, where the magnetic scalar was studied at the level of equations of motion from the holographic perspective.
\end{example}
\begin{example}
    Let $\ell=3/2.$ Then
    \begin{multline}
            \chi^\cA_\varphi=(-1)^d[(\hat\xi\ms{d-1}_u\hat \varphi^{(2)}+\hat C^A\hat \xi\ms{d-2}_{uA}\hat \varphi^{(1)}+\hat  \xi_A\ms{d-1}\nabla^A\hat \varphi^{(1)}+\hat C^{B}\hat \xi_{AB}\ms{d-2}\nabla^A\hat \varphi)d\hat\varphi+\\(\hat \xi_u\ms{d-1}\hat \varphi^{(1)}+\hat \xi\ms{d-1}_A\nabla^A\hat \varphi)d\hat \varphi^{(1)}]\,,
    \end{multline}
\begin{multline}
   H^{\cA}_l=\hat\xi\ms{d}\nabla_A\hat \varphi^{(1)}\nabla^A\hat \varphi+\frac{1}{2}\hat C^B\hat\xi_B\ms{d-1}\nabla_A\hat \varphi\nabla^A\hat \varphi  + \frac{d-3}{4}\hat \lambda^\alpha\hat \varphi(2\hat \xi_\alpha\ms{d-1}\hat \varphi^{(1)}+\hat C^B\hat \xi\ms{d-2}_{\alpha B}\hat \varphi)-\\\frac{d-1}{2}\hat \varphi^{(1)}\hat \lambda(\hat\xi_u\ms{d-1}\hat \varphi^{(1)}+\hat \xi_A\ms{d-1}\nabla^A\hat \varphi)-\frac{d-3}{2}\hat\varphi[(\hat \lambda^u\hat \xi\ms{d-1}_{u} + 2\hat \lambda\hat C^C\hat \xi_{uC}\ms{d-2})\hat \varphi^{(1)}+ \\\hat \lambda\hat \xi_u\ms{d-1}\hat \varphi^{(2)}+(\hat \lambda^{u}\hat \xi\ms{d-1}_{A}+2\hat \lambda\hat C^C\hat \xi\ms{d-2}_{AC})\nabla^A\hat \varphi+\hat \lambda\hat \xi\ms{d-1}_{A}\nabla^A\hat\varphi^{(1)}]\,.
\end{multline}
From these data, one can directly construct the intrinsic action with the background written in terms of a Poincaré-valued connection. For simplicity, we present it in the Bondi gauge introduced in Section~\bref{poincare-field}. Introducing coordinates on the space of sections analogously to the previous example and using \eqref{Bondi-formulas}, we have for $d\geq3$,
    \begin{multline}
        S=-\int dud^{d-1}y\sqrt{\gamma}[\varphi^{(2)}\partial_u\varphi+(\varphi^{(1)|i}-C^{i}{}_{|j}\varphi^{j})\partial_{i}\varphi + \varphi^{i}\partial_{i}\varphi^{(1)} +\varphi^{(1)}\partial_u\varphi^{(1)}-\varphi_{i}^{(1)}\varphi^{i}-\\\frac{d-3}{2(d-2)}\varphi\varphi^{i}D^jC_{j|i}+\frac{d-3}{8}\varphi^2\partial_u(C^{i|j}C_{i|j})+\frac{d-3}{4(d-2)}R\varphi\varphi^{(1)}]\,,
    \end{multline}
    where $D_i$ is the Levi-Civita covariant derivative for the $d-1$ dimensional metric $\gamma$. The Euler-Lagrange equations can be split into the algebraic 
\begin{align}
\varphi_{i}=\partial_{i}\varphi,\quad   \varphi^{(1)}_{i}=\partial_{i}\varphi^{(1)}-C_{i|j}\partial^{j}\varphi-\varphi\frac{d-3}{2(d-2)}D^jC_{j|i}\,,
\end{align}
and the remaining ones:
\begin{align}\label{l=3/2}
    \partial_u\varphi=0\,,
\qquad
D_iD^i\varphi-\frac{d-3}{4(d-2)}R\varphi=0\,,
\end{align}
\begin{multline}
\partial_u\varphi^{(2)}
+D_i\varphi^{(1)|i}
-C^{i|j}D_iD_j\varphi
-\frac{d-1}{2(d-2)}D^i\varphi\, D^jC_{j|i}-\\
\frac{d-3}{4}\varphi\,\partial_u\!\left(C^{i|j}C_{i|j}\right)
-\frac{d-3}{4(d-2)}R\varphi^{(1)}
=0 .
\end{multline}

Note that \eqref{l=3/2} simply states that $\varphi$ describes a solution of the conformal Laplace equation on a $(d-1)$-dimensional space. We also note that in the case $d=3$ the formulas simplify considerably.
\end{example}

Similarly to Example~\bref{ads-scalar-even}, by putting the background in the coordinate-like gauge, we can obtain $\chi^\cR$ and $\chi^\cA$ in the general case:
\begin{prop}
    Restricting to the solution of the background given by $\hat\xi^\alpha=\theta^\alpha$ and all other background fields equal to zero, for $\ell\geq1$ we have 
\begin{align}
\begin{split}
&    \chi^\cR_\varphi=\frac{1}{(2\ell-1)!}\theta\ms{d}\sum_{i=0}^{2\ell-1}C^{i}_{2\ell-1}\nabla_u\hat\varphi^{(2\ell-1-i)}d\hat \varphi^{(i)}\,,\\
            &\chi^\cA_\varphi=\frac{(-1)^d}{(2\ell-2)!}\sum_{i=0}^{2\ell-2}C^{i}_{2\ell-2}(\theta\ms{d-1}_u\hat \varphi^{(2\ell-1-i)}+\theta_A\ms{d-1}\nabla^A\hat \varphi^{(2\ell-2-i)})d\hat \varphi^{(i)}\,,\\
            &H^{\cA}_l=\frac{1}{2(2\ell-2)!}\theta\ms{d}\sum_{i=0}^{2\ell-2}C^{i}_{2\ell-2}\nabla_A\hat \varphi^{(2\ell-2-i)}\nabla^A\hat \varphi^{(i)}\,.
        \end{split}
    \end{align}
The intrinsic action associated with $\chi_\varphi^\cA$ is given by
    \begin{align}\label{null-scalar-action}       S[\varphi^{(i)},\varphi_A^{(i)}]\propto\sum_{i=0}^{2\ell-2}C^i_{2\ell-2}\int du d^{d-1}y(\varphi^{(2\ell-1-i)}\partial_u\varphi^{(i)}+\varphi_A^{(2\ell-2-i)}\partial^A\varphi^{(i)}-\frac{1}{2}\varphi_A^{(2\ell-2-i)}\varphi^{(i)|A})\,.
    \end{align}
Eliminating the auxiliary fields $\varphi^{(i)}_A=\partial_A\varphi^{(i)}$, the solutions of this system are described by a collection of scalar fields $\varphi^{(N)}$, $N=0,\dots,2\ell-1$, satisfying the equations
\begin{align}\label{null-scalar-sections}
    (2N-2\ell+1)\partial_u\varphi^{(N)}+N\partial_A\partial^A\varphi^{(N-1)}=0\,,\quad 0\leq N\leq 2\ell-1\,.
\end{align}
Their  left-hand-sides have the same form as the generators of the ideal \eqref{eq-null-scalar}, truncated at the level $2\ell-1$. The equations \eqref{null-scalar-sections} can, of course, also be derived from the analysis of the massless scalar field on the Minkowski background, see e.g. \cite{Satishchandran:2019pyc, Bekaert:2024tkv}. However, to the best of our knowledge, the action \eqref{null-scalar-action} has not appeared in the literature before.

\end{prop}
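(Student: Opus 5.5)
The plan is to compute $\cR$ and $\cA$ directly from the definitions \eqref{R-map}, \eqref{A-map} with $p=2\ell$, using the regular forms $\chi_\varphi=\Omega^{2\ell}\tilde\chi_\varphi$ and $H_\varphi=\Omega^{2\ell}\tilde H_\varphi$. The background is specialised to the coordinate-like gauge \eqref{c-gauge} only after pulling back by $b^*$. First I will expand $\chi_\varphi$ on $F^\varphi_\red$ exactly as in \eqref{formulas-phi}:
\begin{equation*}
\chi_\varphi=\Omega\,\Xi\, d\varphi+(\tfrac d2-\ell)\varphi\,\Xi\, d\Omega\,,\qquad \Xi\equiv\xi_\Omega\ms{d}\nabla_u\varphi+\xi\ms{d}_u\varphi^{(1)}+\xi\ms{d}_A\nabla^A\varphi\,.
\end{equation*}
Next I will compute $\bar\chi_\varphi=L_\md\chi_\varphi$. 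The $\xi$-dependence sits only in $\xi^\Omega=\xi-\lambda\Omega$, which enters the $\xi\ms{d}_\alpha$ factors, so $L_\md$ simply replaces $\xi\ms{d}_\alpha$ by $\xi\ms{d-1}_{\Omega\alpha}$ up to sign. I will then apply $L_\nabla^{N}$ and use the Leibniz rule together with $L_\nabla\Omega=1$, $L_\nabla d\Omega=0$, $[L_\nabla,d]=0$ and $\nabla\varphi^{(k)}=\varphi^{(k+1)}$.

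The key input is Proposition~\bref{utv-nabla-poincare}. In the coordinate-like gauge $\hat C^A$, $\hat\lambda$ and $\hat\lambda^u$ vanish, so $b^*\nabla^k$ of any $\xi$-monomial vanishes for $k\geq1$. The Faà di Bruno combinatorics of Appendix~\bref{app: Faa} therefore collapses: all $\nabla$'s must fall on the field factors. Concretely:
\begin{itemize}
\item In $b^*L_\nabla^{2\ell}(\Omega\,\Xi\,d\varphi)$ exactly one $\nabla$ hits $\Omega$ (factor $2\ell$). The remaining $2\ell-1$ distribute binomially between $\Xi$ and $d\varphi$, and $b^*\Xi=\theta\ms{d}\nabla_u\hat\varphi$, since $\hat\xi_\Omega\ms{d}$ is the only surviving component after setting $\hat\xi^\Omega=0$ and raising indices with \eqref{flat-metric}.
\item The $d\Omega$ term drops out because $b^*d\Omega=0$ and no $\nabla$ can remove it.
\item Dividing by $(2\ell)!$ gives the stated sum with $C^i_{2\ell-1}$.
\end{itemize}

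For $\chi^\cA$ the same mechanism applies to $b^*L_\nabla^{2\ell-1}\bar\chi_\varphi$: one $\nabla$ on $\Omega$ and $2\ell-2$ distributed binomially. Now the surviving $\xi\ms{d-1}_{\Omega\alpha}$ components yield $\theta\ms{d-1}_u$ contracted with $\varphi^{(\cdot+1)}$ (from $\nabla^u=\nabla_\Omega$) and $\theta\ms{d-1}_A\nabla^A$. The overall sign $(-1)^d$ comes from $(-1)^{tot}$.

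$H^\cA_l$ will be obtained from \eqref{anomaly-hamiltonian}, $b^*L_\nabla^{2\ell-2}(L_\md H+i_\nabla\chi)/(2\ell-2)!$, by the same bookkeeping. Terms proportional to $\lambda$ or to the $(d-2\ell)\varphi\nabla_u\varphi$ piece cancel against $i_\nabla\chi$ or vanish in this gauge, leaving only the $\nabla_A\varphi\nabla^A\varphi$ piece.

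The intrinsic action then follows from \eqref{AKSZ-action}. With $\sigma^*\hat\xi^\alpha=\theta^\alpha$, $\sigma^*\theta\ms{d-1}_\alpha(\dx)$ produces $\partial_\alpha$ acting on $\varphi^{(i)}$, while $\sigma^*H^\cA_l$ gives the quadratic term in the auxiliary fields, yielding \eqref{null-scalar-action}. Varying with respect to $\varphi^{(i)}_A$ gives $\varphi^{(i)}_A=\partial_A\varphi^{(i)}$. Varying with respect to the $\varphi^{(j)}$ and shifting indices in the binomial sums, using $C^i_{2\ell-2}$ identities, should reproduce \eqref{null-scalar-sections}. As a consistency check these must coincide with $\sigma^*$ of \eqref{eq-null-scalar} for $N\leq2\ell-1$.

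The main obstacle is the signs and index bookkeeping: the $\epsilon$-tensor components under the off-diagonal metric \eqref{flat-metric}, and the distribution producing the shifted binomials, such as $\varphi^{(2\ell-1-i)}$ against $\varphi^{(2\ell-2-i)}$ in $\chi^\cA$. These shifts arise because $\nabla^u=\nabla_\Omega$ adds a derivative. I would first verify the formulas for $\ell=1$ and $\ell=3/2$ against the preceding examples, restricted to zero background, before trusting the general combinatorics.
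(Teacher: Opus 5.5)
Your proposal is correct and follows the paper's own route: apply $L_\nabla^{2\ell}$ and $L_\nabla^{2\ell-1}L_{\mathfrak{d}}$ to $\chi_\varphi=\Omega^{2\ell}\tilde\chi_\varphi$, use Proposition~\bref{utv-nabla-poincare} in the coordinate-like gauge so that exactly one $\nabla$ lands on $\Omega$ and all others on field factors, then read off and vary the intrinsic action. Your bookkeeping for $\chi^\cR_\varphi$, $\chi^\cA_\varphi$ and the variational step is sound.

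There is one slip in how you quote \eqref{anomaly-hamiltonian}. With $p=2\ell$ it reads $H^\cA_l=\frac{(-1)^{d+1}}{(2\ell-1)!}\,b^*L_\nabla^{2\ell-1}\bigl(L_{\mathfrak{d}}H_\varphi+i_\nabla\chi_\varphi\bigr)$, not $b^*L_\nabla^{2\ell-2}(\dots)/(2\ell-2)!$. Two pieces of $L_{\mathfrak{d}}H_\varphi$ cancel against $i_\nabla\chi_\varphi$: the $(d-2\ell)\varphi\nabla_u\varphi$ piece, which you noted, and also the $2\Omega\varphi^{(1)}\nabla_u\varphi$ piece. After that, the only term surviving $b^*L_\nabla^{2\ell-1}$ in this gauge is $\frac{(-1)^{d+1}}{2}\Omega\,\xi\ms{d}_\Omega\nabla_A\varphi\nabla^A\varphi$. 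The single $\nabla$ hitting its $\Omega$ is what produces the prefactor $\frac{1}{2(2\ell-2)!}$. Your literal formula would instead give coefficients $C^i_{2\ell-3}$ and would vanish for $\ell=1$.
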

\begin{rem}\label{remark-scalar-bundle}
    Note that, as follows from \eqref{scalar-poincare-Q-bdry}, $\partial^{Q}F^\varphi_{\red}$ carries the structure of an infinite $Q$-bundle 
    \begin{align}
            \dots\to(\partial^QF^\varphi_{\red})^{(i)}\to (\partial^QF^\varphi_{\red})^{(i-1)}\to\dots \to (\partial^QF^\varphi_{\red})^{(0)}\to \partial^QF_{\red}\,,
\end{align}
 where the fiber of $(\partial^QF^\varphi_{\red})^{(i)}\to (\partial^QF^\varphi_{\red})^{(i-1)}$ is coordinatized by $\nabla_{(\alpha)}\varphi^{(i)}$. The ideal $\hat\eI^{\varphi}$ can also be split into a union of $Q$-invariant ideals $(\hat\eI^{\varphi})^{(i)}$ in the obvious way.    Moreover, the presymplectic potentials $\chi^\cR$ and $\chi^\cA$ depend only on a finite number of $\hat\varphi^{\,(N)}$. More specifically, both of them can be obtained as the pullback from $(\partial^QF^\varphi_{\red})^{(2\ell-1)}$.
\end{rem}

\subsection{Example: Maxwell}
The exposition here parallels that of Section~\bref{sec:maxwell-ads} until the introduction of the reduced model $(F^\maxw_{\red},Q,\eI^\maxw_{\red})\to(F_{\red},Q)$ with an overcomplete set of fibre coordinates $C$, $\nabla_{a_1}\dots\nabla_{a_n}F_{bc}$. Using the splitting of the index $\{ a\} = \{\Omega, \alpha \} = \{\Omega, u, A \}$ we introduce the following notation
\begin{align}
    J_A\equiv F_{\Omega A}\,,\quad \nu \equiv F_{\Omega u}\,.
\end{align}

Then we can use this decomposition of the index to explicitly solve part of the Bianchi identities \eqref{Max-Bianchi}: 
\begin{align}\label{poincare-Max-Bianchi}
    \begin{split}
        &\nabla F_{uA}=\nabla_uJ_A-\nabla_A\nu\,,\\
        &\nabla F_{AB}=\nabla_AJ_B-\nabla_BJ_A\,,\\
    \end{split}
\end{align}
passing to overcomplete coordinates
\begin{align}\label{poincare-max-coord}
   C\,,\quad  \nabla_{(\alpha)}F_{\beta\gamma}\,, \nabla_{(\alpha)}\nu^{(N)}\,,\nabla_{(\alpha)}J_B^{(N)}\,,\qquad N\geq0\,.
\end{align}
 The ideal $\eI^\maxw_{\red}$ is generated by the $\nabla_{a}$-prolongations of
\begin{align}\label{poincare-max-eq}
    \begin{split}
        &Y_{\Omega}=(D-4)\nu-\Omega\nu^{(1)}-\Omega\nabla^AJ_A\,
        ,\quad Y_{u}=\Omega(\nabla_u\nu-\nabla^AF_{uA})\,,\\
        &Y_A=(4-D)F_{uA}+\Omega(2\nabla_uJ_A-\nabla_A\nu+\nabla^BF_{BA})\,,
    \end{split}
\end{align}
which can be obtained by pulling back \eqref{max-eq} to $\eI^\maxw_{\red}$ and using the above decomposition.

The action of $Q$ can be deduced by substituting the corresponding indices into
\begin{align}\label{poincare-max-Q}
\begin{split}
    QF_{ab}&=\xi\nabla F_{ab}+\xi^\alpha\nabla_\alpha F_{ab}+\rho_{a}{}^{c}F_{cb}+\rho_{b}{}^{c}F_{ac}-2\lambda F_{ab}-\lambda\Omega\nabla F_{ab}\,,\\
    QC&=(\xi-\lambda\Omega)\xi^\alpha F_{\Omega\alpha}+\frac{1}{2}\xi^\alpha\xi^\beta F_{\alpha\beta}
    \end{split}
\end{align}
and using $[Q, \nabla] = 0 = [Q, \nabla_\alpha]$.

The asymptotic boundary gPDE for $(F^\maxw_{\red},Q,\hat\eI^\maxw)\to(F_{\red},Q)$ is given by 
\begin{align}
    (\partial^QF^\maxw_{\red},Q,\hat\eI^\maxw)\to(\partial^QF_{\red},Q)\,.
\end{align}
Using as fiber coordinates on $\partial^QF^\maxw_{\red}$ functions \eqref{poincare-max-coord}, the action of $Q$ on them is obtained from \eqref{poincare-max-Q} using Proposition~\bref{utv-nabla-poincare}.

Using \eqref{poincare-Max-Bianchi} together with \eqref{poincare-max-eq}, one can check that the ideal $\hat \eI^{\maxw}\equiv b^*\eI^\maxw_{\red}$ is generated by the  $\nabla_{\alpha}$-prolongations of
\begin{align}\label{maxw-null-eom}
\begin{split}
    &\hat Y_\Omega^{(N)}=(d-3-N)\hat \nu^{(N)}-N\nabla^A\hat J_A^{(N-1)}\,,\\
    &\hat Y_u^{(d-2)}=(d-2)( \nabla_u\hat \nu^{(d-3)}-\nabla^A\hat F^{(d-3)}_{uA})\,,\\
    &\hat Y_A^{(0)}=(3-d)\hat F_{uA}\,,\\
    &\hat Y_A^{(N+1)}=(5-d+2N)\nabla_u\hat J_A^{(N)}+(d-4-N)\nabla_A\hat\nu^{(N)}+(N+1)\nabla^B\hat F_{BA}^{(N)}\,,
    \end{split}
\end{align}
where $N\geq 0$. The expressions for $\hat F_{AB}^{(N)}$ and $\hat F_{uA}^{(N)}$ in terms of the boundary coordinates can be obtained simply by applying $b^{*}\nabla^{N-1}$ to the relations~\eqref{poincare-Max-Bianchi}. Here we also used that the generators $\hat Y_u^{(N)}$ with $N\neq d-2$ are not independent and can be removed from the generating set of the ideal.  An analogous feature already appeared in the AdS case, see \eqref{maxwell-noether}, and can be traced back to the fact that the Maxwell equations are not independent and satisfy Noether identities.

In complete analogy with Proposition~\bref{ads-max-presymp} one obtains the following:
\begin{prop}
    The presymplectic potential defined by 
    \begin{align}
                &\tilde \chi_\maxw= \Omega^{4-D}(-2\xi_{\Omega u}\ms{D-2}\nu+2\xi_{uB}\ms{D-2}J^B+2\xi_{\Omega A}\ms{D-2}F_{u}{}^{A}+\xi_{AB}\ms{D-2}F^{AB})dC
    \end{align}
    on $Int(F^\maxw_\red)$ satisfies
    \begin{align}
    \begin{split}
               & L_Q\tilde\chi_\maxw=d(-\tilde H_{\maxw}-i_Q\tilde\chi_{\maxw})+\cI_{Int(M)}+\eI_{\red}^\maxw\,,\\
        &\tilde H_\maxw=-\frac{1}{2}\Omega^{4-D}\xi\ms{D}(-2\nu^2+4F_{u}{}^{A}J_A+F^{AB}F_{AB})\,.
    \end{split}
\end{align}
Moreover, the present system is equivalent to the gPDE with presymplectic structure described in Section~\bref{sec:example-M}, in the sense that it splits as the product of that gPDE and a contractible gPDE with trivial presymplectic structure.
\end{prop}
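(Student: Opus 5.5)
The plan follows the proof of Proposition~\bref{ads-max-presymp}, with $\mathfrak{iso}$ in place of $\algg_\Lambda$. Start from the standard Maxwell data on the minimal gPDE over $\mathfrak{iso}[1]$ from Section~\bref{sec:example-M} at $\Lambda=0$: the potential $\tilde\xi\ms{D-2}_{ab}\tilde F^{ab}dC$, and the Hamiltonian $-\half\tilde\xi\ms{D}\tilde F_{ab}\tilde F^{ab}$, which already satisfy the compatibility relation modulo $\cI_{\mathfrak{iso}[1]}$ and the Maxwell ideal. Pull everything back along the $Q$-projection $\pi^\red_{\mathfrak{iso}}$ of~\eqref{pi-red-null}. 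Because it is a $Q$-map and an equivalent reduction, the compatibility relation is preserved automatically. The ideal generated by pullbacks of positive-degree forms from $\mathfrak{iso}[1]$ is contained in $\cI_{Int(M)}$, and the Maxwell ideal maps to $\eI^\maxw_\red$. The nontrivial work is therefore rewriting the pulled-back objects in the coordinates of $F_\red$.

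\textbf{Step 1: compute the pulled-back $\tilde\xi\ms{N}$.} By~\eqref{pi-red-null}, $\tilde\xi^\Omega=\Omega^{-1}(\xi-\lambda\Omega)=\Omega^{-1}\xi^\Omega$ and $\tilde\xi^\alpha=\Omega^{-1}\xi^\alpha$. Hence $\tilde\xi\ms{N}_{a_1\dots}=\Omega^{-N}\xi\ms{N}_{a_1\dots}$.

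\textbf{Step 2: rescale the field strength.} Take $\tilde F_{ab}=\Omega^2F_{ab}$, as in~\eqref{ads-maxwell-uplift}, and keep $C$ unchanged. Then
\begin{equation}
\tilde\xi\ms{D-2}_{ab}\tilde F^{ab}dC=\Omega^{4-D}\xi\ms{D-2}_{ab}F^{ab}dC .
\end{equation}

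\textbf{Step 3: expand the contraction.} Use the metric~\eqref{flat-metric}, with $F^{\Omega u}=F_{u\Omega}=-\nu$, $F^{\Omega A}=-F_{uA}$, $F^{uA}=-J_A$ (up to signs fixed by $g^{\Omega u}=1$, $g^{AB}=-\delta^{AB}$). Splitting $a,b\in\{\Omega,u,A\}$ yields the four terms $-2\xi_{\Omega u}\ms{D-2}\nu$, $2\xi_{uB}\ms{D-2}J^B$, $2\xi_{\Omega A}\ms{D-2}F_u{}^A$ and $\xi_{AB}\ms{D-2}F^{AB}$, which is the stated $\tilde\chi_\maxw$. Similarly, $\tilde F_{ab}\tilde F^{ab}=\Omega^4(-2\nu^2+4F_u{}^AJ_A+F^{AB}F_{AB})$ combines with $\tilde\xi\ms{D}=\Omega^{-D}\xi\ms{D}$ to give $\tilde H_\maxw$. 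The sign bookkeeping in this index expansion is the one place I expect errors, so I would check it against the $\Lambda=0$ limit of~\eqref{chi-Fred-Maxw}.

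\textbf{Step 4: check the change of fiber coordinates.} One has to verify that in the coordinates $F_{ab}=\Omega^{-2}\pi^*\tilde F_{ab}$ the action of $Q$ is indeed~\eqref{poincare-max-Q}, using $Q\Omega=\xi^\Omega+\lambda\Omega$ together with the pulled-back $\tilde\rho$. Likewise, the pulled-back Maxwell equation $\Omega^{-1}\times$(the $\tilde\nabla^b\tilde F_{ba}$ generators) has to reproduce~\eqref{poincare-max-eq}. This is the main obstacle. The $\Omega^{-1}$ terms in $\tilde\rho^{uB}$ and $\tilde\rho^{\Omega u}$ must cancel against $Q(\Omega^{-2})$ and the $\lambda$-term. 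I would carry this out exactly as in the AdS case, noting that $\epsilon=0$ only drops terms.

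\textbf{Step 5: equivalence.} Proposition~\bref{poincare-equiv} and the reduction to $F_\red$ exhibit $Int(F^\maxw_\red)$ as the pullback bundle over $Int(F_\red)$. Locally this is $(\mathfrak{iso}[1]\text{-gPDE})\times(\text{contractible pairs }\Omega,\xi,\dots)$, and the fiber coordinates change only by invertible rescalings. Since the presymplectic form is pulled back from the first factor, the contractible factor lies in its kernel. This gives the product decomposition with trivial presymplectic structure on the contractible piece.
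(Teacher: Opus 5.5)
Your proposal is correct and is essentially the paper's argument: pull back the standard Maxwell potential and Hamiltonian along $\pi^\red_{\mathfrak{iso}}$, pass to the rescaled coordinates $F_{ab}=\Omega^{-2}\pi^*\tilde F_{ab}$, expand the contraction with the off-diagonal metric~\eqref{flat-metric}, and read off the equivalence from the contractible pair $(\Omega,\xi)$ in $Int(F_\red)$. Your four terms and the combination $-2\nu^2+4F_u{}^AJ_A+F^{AB}F_{AB}$ come out exactly. There are two harmless slips. First, the Maxwell generators satisfy $Y_b=\Omega^{-2}\pi^*(\tilde\nabla^a\tilde F_{ab})$, not an $\Omega^{-1}$ multiple; this is irrelevant on the interior. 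Second, the $\Lambda\to 0$ limit of~\eqref{chi-Fred-Maxw} is not a usable cross-check for Step~3, because that formula is written in the reduction adapted to the diagonal metric $\mathrm{diag}(\epsilon,-\epsilon,-1,\dots)$, which degenerates at $\epsilon=0$. The direct expansion with~\eqref{flat-metric} is the right check, and it confirms your terms; the $Q$-check of Step~4 is likewise metric-independent if you do it on $Int(F)$ before reduction.
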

It follows that, upon introducing $\chi_\maxw\equiv\Omega^{D-4}\tilde\chi_\maxw$, we can apply the procedure described in Section~\bref{sec: renorm-struc} to obtain $\chi_\maxw^\cR$, $\chi_\maxw^\cA$ and the corresponding Hamiltonians. In the rest of this section we explicitly present them.
\begin{example}
    Let $d=3$. 
    \begin{align}
        \chi_\maxw^\cR=-2(\hat\xi_u\ms{2}\hat\nu-\hat\xi_A\ms{2}\hat F_{u}{}^{A})d\hat C, \qquad H^\cR=0\,,\quad \chi^\cA_\maxw=0\,.
    \end{align}
With the notation
\begin{align}
    \ev^*(\hat C)=\cC+\dots,\quad \ev^*\hat\nu=\nu+\dots,\quad \ev^{*}\hat F_{uA}=F_{uA}+\dots
\end{align}
 and assuming the coordinate-like gauge for the background, the BV--AKSZ action corresponding to $\chi_\maxw^\cR$ takes the form
\begin{align}
    \int dud^{2}y(\partial_u\nu-\partial^AF_{uA})\cC
\end{align}
up to a boundary term and an overall constant.
The coefficient of $\cC$ takes a form similar to that of $Y_u^{(d-2)}$ in \eqref{maxw-null-eom}.
\end{example}
\begin{example}
    Let $d=4$. Then, up to $\hat\eI^\maxw$,
    \begin{align}
        \begin{split}
    \chi_\maxw^\cA=-(2\hat\xi_{uB}\ms{2}\hat J^B+\hat\xi\ms{2}_{AB}\hat F^{AB})d\hat C,\quad H^{\cA}_l =\frac{1}{2}\hat\xi\ms{4}\hat F_{AB}\hat F^{AB}\,.
        \end{split}
    \end{align}
Choosing the Bondi gauge for the background, we introduce the following coordinates on the space of sections:
\begin{equation}
    \begin{gathered}      
     F_{ij}\equiv e^{A}{}_{i}e^B{}_{j}\sigma^{*}\hat F_{AB}\,, \qquad \sigma^*\hat C = A_{\mu} \theta^\mu\,, \qquad
       J_i=e^{A}{}_i \sigma^* \hat J_A\,.
    \end{gathered}
\end{equation}
The intrinsic action takes the following form:
    \begin{align}
        S[A,F,J]=-\int_{\partial X}dud^{D-2}y\sqrt\gamma(2J^i(\partial_u A_i-\partial_i A_u)+2F^{ij}\partial_{i}A_j-\frac{1}{2}F_{ij}F^{ij})\,.
    \end{align}
An equivalent action can be obtained via a ``magnetic'' contraction of the Maxwell action, see \cite{Henneaux:2021yzg}. The equations of motion are given by:
    \begin{align}
        \begin{split}
            &F_{ij}=\partial_i A_j-\partial_j A_i\,,\qquad
            \partial_uA_i-\partial_i A_u=0\,,\\
            &\partial_i(\sqrt\gamma J^i)=0\,,\qquad 
            \partial_uJ^j+\frac{1}{\sqrt\gamma }\partial_i(\sqrt\gamma F^{ij})=0\,.
        \end{split}
    \end{align}
One also has, up to $\hat\eI^\maxw$.:
    \begin{align}
        \chi^\cR_\maxw=-2(\hat\xi\ms{3}_{u}\hat\nu^{(1)}+\hat\xi_{A}\ms{3}\nabla^B\hat F^{BA}+\hat\lambda(\hat\xi_{uB}\ms{2}\hat J^B+\frac{1}{2}\hat\xi_{AB}\ms{2}\hat F^{AB}))d\hat C\,.
    \end{align}
\end{example}

\begin{prop}
For $d\geq 5$ we have 
\begin{multline}
    (-1)^{d+1}(d-4)!\chi^\cA_\maxw=2(\theta_{uB}\ms{d-2}\hat J^{(d-4)|B}+\theta_{AB}\ms{d-2}\nabla^A\hat J^{(d-5)|B})dC+\\(2\theta_{uB}\ms{d-2}\hat J^{(0)|B}+\theta_{AB}\ms{d-2}\hat F^{AB})d(\frac{d-5}{2}\theta^u\nabla^A\hat J_A^{(d-6)}+\theta^{C}\hat J_C^{(d-5)})+\\2\sum_{i=1}^{d-5}C^i_{d-4}(\theta\ms{d-2}_{uB}\hat J^{(d-4-i)|B}+\theta\ms{d-2}_{AB}\nabla^A\hat J^{(d-5-i)|B})d(\frac{i-1}{(d-2-i)}\theta^u \nabla^A\hat J^{(i-2)}_A+\theta^C\hat J_C^{(i-1)})
\end{multline}
and 
\begin{multline}
    (-1)^d(d-4)!H^{\cA}_l=\theta\ms{d}[2\hat F_{AB}\nabla^A\hat J^{(d-5)|B}+\\\sum_{i=1}^{d-5}C^i_{d-4}(\frac{i(d-4-i)}{(i+1)(d-3-i)}\nabla^A\hat J^{(i-1)}_A\nabla^B\hat J_B^{(d-5-i)}+2(\nabla_{[A}\hat J_{B]}^{(i-1)}\nabla^A\hat J^{(d-5-i)|B})]\,,
\end{multline}
where we employed the coordinate-like gauge and omitted terms from $\hat\eI^\maxw$.
Also, $\sum_{i=1}^{0}(\dots)$ is assumed to vanish.

The corresponding intrinsic action can be written down immediately. To present it in a more compact form
we parameterize the field configurations according to 
\begin{align}
\sigma^{*}\hat C=\theta^uA_u+\theta^{B}J_B^{(-1)},\qquad
\sigma^{*}\hat J_B^{(i)}=J_B^{(i)},\quad i\geq0,
\end{align}
eliminate auxiliary fields and drop  boundary terms, giving
\begin{multline}\label{null-maxwell-action}
S[A_u,J^{(i)}_B]\propto
\int_{\partial X}dud^{d-1}y[-J^{(d-4)|B}\partial_BA_u+\sum_{i=0}^{d-4}C^{i}_{d-4}(J^{(d-4-i)|B}\partial_uJ_B^{(i-1)}
+\\
\partial_{[A}J_{B]}^{(i-1)}
\partial^{[A}J^{(d-5-i)|B]}
+\frac{i(d-4-i)}{2(i+1)(d-3-i)}
\partial_AJ^{(i-1)|A}
\partial_BJ^{(d-5-i)|B}
)
] \,,
\end{multline}
where $J_B^{(-2)}\equiv 0$. The corresponding equations of motion are
\begin{align}
    \begin{split}
        &\partial_uJ^{(-1)}_B-\partial_BA_u=0\,,\quad \partial_BJ^{(d-4)|B}=0\,,\\
       &(d-5-2i)\partial_uJ^{(i)}_B-2(i+1)\partial^A\partial_{[A}J_{B]}^{(i-1)}-\frac{i(d-4-i)}{d-3-i}\partial_B\partial_AJ^{(i-1)|A}=0\,,
    \end{split}
\end{align}
where $i=0,\dots,d-4$.  After some algebra, it is straightforward to verify that these equations take the same form form as a subset of \eqref{maxw-null-eom}. Similar asymptotic hierarchies for Maxwell fields near null infinity were studied in \cite{Satishchandran:2019pyc,Bekaert:2024tkv}. However, to the best of our knowledge, the action \eqref{null-maxwell-action} has not appeared in the literature before. 
\end{prop}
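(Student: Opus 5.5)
The plan is to apply the general recipe of Section~\bref{sec: renorm-struc}, i.e. formula~\eqref{RA-map} together with~\eqref{anomaly-hamiltonian}, with $p=D-4=d-3$. We start from $\chi_\maxw=\Omega^{D-4}\tilde\chi_\maxw=(-2\xi_{\Omega u}\ms{D-2}\nu+2\xi_{uB}\ms{D-2}J^B+2\xi_{\Omega A}\ms{D-2}F_{u}{}^{A}+\xi_{AB}\ms{D-2}F^{AB})dC$ and $H_\maxw=\Omega^{D-4}\tilde H_\maxw$ on $F^\maxw_\red$. We then compute $\chi^\cA_\maxw=\frac{(-1)^{d}}{(d-4)!}b^*L_\nabla^{d-4}L_\md\chi_\maxw$ and $H^\cA_l=\frac{(-1)^{D}}{(d-4)!}b^*L_\nabla^{d-4}(L_\md H_\maxw+i_\nabla\chi_\maxw)$.

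The first step is the action of $L_\md=L_{\dl{\xi}}$. Since $\xi$ enters only through $\xi^\Omega=\xi-\lambda\Omega$, $L_\md$ acts as $\dl{\xi^\Omega}$ on the $\xi\ms{N}$-factors. It also hits the differential $dC$ through $d\xi$, which we discard modulo $\cI$ restricted to the boundary. This kills terms with an explicit $\Omega$-index in $\xi\ms{D-2}$ and lowers $\xi\ms{D-2}_{uB},\xi\ms{D-2}_{AB}$ to $\xi\ms{d-2}$-type forms with one fewer ghost. Next we expand $L_\nabla^{d-4}$ by the Leibniz rule, distributing the $\nabla$'s among three kinds of factor: the background forms $\xi\ms{N}$, the field coefficients $\nu,J,F$, and $dC$.

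All of this is evaluated in the coordinate-like gauge, where only $\hat\xi^\alpha=\theta^\alpha$ survives. By Proposition~\bref{utv-nabla-poincare}, $b^*\nabla^N$ of the background ghosts gives only $\hat C^A,\hat\lambda,\hat\lambda^u$, and these vanish in this gauge. So $\nabla$'s landing on the $\xi\ms{N}$ factors can be dropped, except where a $\nabla\xi^\Omega$ produces a $\theta$-free contribution; we check that these are also killed. On field coefficients we use $b^*\nabla^N J_B=\hat J^{(N)}_B$, and we express $\hat\nu^{(N)}$ through $\nabla^A\hat J^{(N-1)}_A$ using $\hat Y^{(N)}_\Omega$ from~\eqref{maxw-null-eom}. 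The coefficients $\hat F^{(N)}_{AB},\hat F^{(N)}_{uA}$ are obtained via~\eqref{poincare-Max-Bianchi}, and $\hat F_{uA}\in\hat\eI^\maxw$ by $\hat Y^{(0)}_A$. For $b^*\nabla^k C$ we need an analogue of Proposition~\bref{calculus-ads-maxwell}. We derive it from~\eqref{poincare-max-Q}: $\nabla C=[Q,\md]C=\xi^\alpha F_{\Omega\alpha}$ plus terms vanishing at the boundary. This gives $b^*\nabla^{k}C=\theta^u\hat\nu^{(k-1)}+\theta^C\hat J^{(k-1)}_C$ up to the $\lambda$-corrections, which vanish in the gauge. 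Substituting $\hat\nu$ produces the $\frac{i-1}{d-2-i}\theta^u\nabla^A\hat J^{(i-2)}_A$ terms. The binomial coefficients $C^i_{d-4}$ come out directly from the Leibniz expansion, with $i$ counting the $\nabla$'s acting on $dC$. The $i=0$ term gives the $dC$ piece, and the $i=d-4$ term (rewritten via $\hat F^{AB}$ and $\hat J^{(0)}$) gives the separated second line.

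The Hamiltonian is handled the same way. We use $L_\md H_\maxw$, which removes one $\xi$ from $\xi\ms{D}$, together with $i_\nabla\chi_\maxw$, where $\nabla$ acts on $C$ through $b^*\nabla^kC$. The quadratic expressions then arise from pairing $\nabla^{i}$ and $\nabla^{d-4-i}$ derivatives of $J$. Finally, for the action we pull back by $\sigma$, identify $J^{(-1)}_B$ with the $\theta^B$-component of $\sigma^*\hat C$, eliminate auxiliaries and integrate by parts. The equations of motion follow by direct variation, and we then compare them with $\hat Y^{(N+1)}_A$ and $\hat Y^{(N)}_\Omega$ of~\eqref{maxw-null-eom}.

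The main obstacle will be the bookkeeping of the $\hat\nu$ elimination. Because the coefficient $(d-3-N)$ in $\hat Y^{(N)}_\Omega$ vanishes at $N=d-3$, $\hat\nu^{(d-3)}$ cannot be eliminated, and we must check that it never appears in $\chi^\cA$, which requires at most $d-4$ $\nabla$'s. The same care is needed with signs from $(-1)^{tot}$. We will fix signs and normalizations by checking the case $d=5$ against a direct computation.
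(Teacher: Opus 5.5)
Your plan is correct and is the same direct computation the paper relies on. You apply $\cA$ with $p=D-4=d-3$ together with \eqref{anomaly-hamiltonian} and expand $L_\nabla^{d-4}L_\md\chi_\maxw$ by Leibniz, with $i$ counting the $\nabla$'s on $dC$; the $i=0$ and $i=d-4$ terms give the first two lines. In the coordinate-like gauge you then use $b^*\nabla^iC=\theta^u\hat\nu^{(i-1)}+\theta^C\hat J^{(i-1)}_C$ for $i\geq1$ and trade $\hat\nu^{(N)}$ for $\frac{N}{d-3-N}\nabla^A\hat J^{(N-1)}_A$ via $\hat Y^{(N)}_\Omega$. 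For $H^\cA_l$, only the $\xi\ms{D-1}_\Omega$-component $-(\nu^2+\frac12F^{AB}F_{AB})$ of $L_\md H_\maxw+i_\nabla\chi_\maxw$ survives.

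There is one slip to fix. $\chi_\maxw$ is a $1$-form of ghost degree $D-1=d$, so its prefactor is $(-1)^{tot}=(-1)^{d+1}$, the same $(-1)^D$ you use for $H^\cA_l$, not $(-1)^d$. Also, $L_\md dC=\pm d(\md C)=0$ does not act on $dC$; the discarded $d\theta$-terms come from $L_\nabla^i dC=d(\nabla^iC)$.
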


\section*{Acknowledgments}
We are grateful to Xavier Bekaert,  Nicolas Boulanger, and Matthieu Vilatte for useful discussions.
M.G. also acknowledges fruitful exchanges with Fedor Popelensky. M.M. would also like to thank Viacheslav Krivorol and Bulat Farkhtdinov for valuable discussions.

\appendix
\section{Independence of the Boundary Cocycles}\label{app:cocycles-ind}
Suppose that $(M,Q)$ is a $Q$-manifold with a $Q$-boundary and $\Omega$ is a defining function for the boundary, $\xi\equiv Q\Omega$. Let us choose a vector field $\md$ such that
\begin{align}\label{app-coord-md}
    \md: \quad \gh{\md}=-1\,,\quad \md \xi=1\,,\quad \md^2=0\,.
\end{align}
Let $v$ be a form on $M$ satisfying 
 \begin{align}\label{app-coord-cocycle}
   \Omega L_Qv=p\xi v\,,
 \end{align}
then on the boundary we obtain cocycles $v^\cR(\md)$ and $v^\cA(\md)$, whose definition \eqref{RA-map} a priori depends on the choice of $\md$, as indicated here by the notation. We now show that their cohomology classes are independent of the choice of admissible $\md$.
\begin{prop}
Let $\md'$ be another vector field satisfying the properties \eqref{app-coord-md}. Then the corresponding boundary cocycles satisfy
\begin{align}
    v^\cR(\md')=v^\cR(\md)+L_Q(\dots)\,,\quad v^\cA(\md')=v^\cA(\md)+L_Q(\dots)\,.
\end{align}
In other words, their cohomology classes with respect to $L_Q$ are independent of the choice of $\md$.\end{prop}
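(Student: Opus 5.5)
The plan is a homotopy argument. The proof of Theorem~\bref{th-renorm} used only the properties $\md\Omega=0$, $\md\xi=1$ and the resulting $\nabla\Omega=1$, $\nabla\xi=0$; nilpotency $\md^2=0$ was never needed. So I would connect $\md$ and $\md'$ by the family $\md_t=\md+t\Delta$, $t\in[0,1]$, where $\Delta\equiv\md'-\md$. Since $\Delta\Omega=\Delta\xi=0$ and $\gh{\Delta}=-1$, every $\md_t$ satisfies $\md_t\Omega=0$ and $\md_t\xi=1$. The corresponding $\nabla_t=[Q,\md_t]=\nabla+t[Q,\Delta]$ satisfies $\nabla_t\Omega=1$ and $\nabla_t\xi=0$, because $[Q,\Delta]\Omega=Q\Delta\Omega+\Delta\xi=0$ and similarly $[Q,\Delta]\xi=0$. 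Hence the whole derivation of the theorem applies verbatim for each $t$. It therefore suffices to show that $\frac{d}{dt}b^*v^{(p)}_t$ and $\frac{d}{dt}b^*\bar v^{(p-1)}_t$ are $L_Q$-exact, and then integrate in $t$.

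For the renormalized cocycle, differentiation gives
\[
\frac{d}{dt}v^{(p)}_t=\sum_{k=0}^{p-1}L_{\nabla_t}^{k}\,[L_Q,L_\Delta]\,L_{\nabla_t}^{p-1-k}v .
\]
In each term I would write $[L_Q,L_\Delta]=L_QL_\Delta\pm L_\Delta L_Q$ and commute $L_Q$ to the far left using $[L_Q,L_{\nabla_t}]=0$. The term with $L_Q$ on the outside gives $L_Q b^*(\ldots)$, which is exact after pulling back. In the remaining term $L_Q$ acts on $v^{(p-1-k)}_t$. To handle it I would use the recursion from the proof of Theorem~\bref{th-renorm},
\[
(N+1)L_Qv^{(N)}=p\,\xi\, v^{(N+1)}-\Omega L_Q v^{(N+1)} .
\]
This trades $L_Q v^{(N)}$ for expressions carrying an explicit factor $\xi$ or $\Omega$. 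Since $L_\Delta$ commutes with multiplication by $\Omega$ and $\xi$, these factors can be moved to the left through $L_\Delta$. They then survive the subsequent $L_{\nabla_t}^{k}$ and $b^*$ only when a $\nabla_t$ hits $\Omega$, which lowers $k$ by one. This produces a triangular system. Solving it by induction on $k$, I expect the non-exact pieces to cancel in the sum over $k$, leaving $b^*\frac{d}{dt}v^{(p)}_t=L_Q b^*(\ldots)$. The same bookkeeping with one extra $L_{\md_t}$ applies to $\bar v^{(p-1)}_t$. There, $\frac{d}{dt}L_{\md_t}=L_\Delta$ yields an additional term $L_{\nabla_t}^{p-1}L_\Delta v$, which has to be treated with the same recursion, now via equation~\eqref{eq:first-step}.

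The main obstacle is this inductive cancellation. I would first organize it through the generating series $e^{sL_{\nabla_t}}v$: the recursion becomes a first-order ODE in $s$, and the $t$-derivative of its $s^p$ coefficient would then be exhibited as $L_Q$ of an explicit combination of $b^*L^{j}_{\nabla_t}L_\Delta v^{(m)}_t$. If that fails, I would instead first check the cases $p=1,2$ by hand to identify the correct homotopy, and then argue by induction on $p$, using Remark~\bref{rem:p-choice-independ} to pass from $p$ to $p+1$.
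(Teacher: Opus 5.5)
Your proposal is correct and follows the paper's proof: the same linear homotopy $\md_t=\md+t(\md'-\md)$, differentiation in $t$, splitting off an $L_Q$-exact piece, and disposing of the remainder with the cocycle relation $\Omega L_Qv=p\xi v$. You are also right that $\md_t^2=0$ is never used, which matters because $\md_t$ is in general not nilpotent for $0<t<1$, a point the paper passes over.

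The cancellation you flag as the main obstacle closes at once. The paper applies $b^*\sum_{i=0}^{p}L^i_{\nabla_t}L_\Delta L^{p-i}_{\nabla_t}$ to the cocycle relation and obtains $(p+1)$ times the remainder equal to zero. In your own bookkeeping, $T_k\equiv b^*L^k_{\nabla_t}L_\Delta L_Q v_t^{(p-1-k)}$ satisfies $T_0=0$ and $(p-k)T_k=-kT_{k-1}$, so every term vanishes individually. For $\cA$, the analogous recursion obtained from \eqref{eq:first-step} gives the needed identity only after summing over $k$.
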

\begin{proof}
For a given $\md$, any admissible $\md'$ can be written as $\md'=\md+X$, where the vector field $X$ satisfies
   \begin{align}
         \gh{X}=-1\,,\quad X\xi=0\,,\quad [\md,X]+\frac 12[X,X]=0\,.
    \end{align}
Introducing an auxiliary parameter $t$, we define
\begin{align}
    \md_t\equiv\md+tX\,,\quad \nabla_t\equiv[Q,\md_t]\,.
\end{align}
Clearly, one has
\begin{align}
     X=\frac{d}{dt}\md_t\,,\quad \md_t\xi=1,\quad\dot\nabla_t=[X,Q]\,.
\end{align}
Now consider $p!v^\cR(\md_t)\equiv b^{*}L_{\nabla_t}^{p}v$ and differentiate it with respect to $t$:
        \begin{multline}\label{app-coord-vr}
       p!\frac{d}{dt}v^{\cR}(\md_t)=b^*\sum_{i=0}^{p-1}L_{\nabla_t}^{i}L_{[Q,X]}L_{\nabla_t}^{p-1-i}v=L_Qb^{*}\sum_{i=0}^{p-1}L_{\nabla_t}^{i}L_{X}L_{\nabla_t}^{p-1-i}v+b^{*}\sum_{i=0}^{p-1}L_{\nabla_t}^{i}L_XL_{\nabla_t}^{p-1-i}L_Qv\,.
 \end{multline}
We show that the last sum vanishes. Acting on \eqref{app-coord-cocycle}  with $b^*\sum_{i=0}^{p}L_{\nabla_t}^{i}L_XL_{\nabla_t}^{p-i}$, we see that the right-hand side vanishes, while the left-hand side becomes
 \begin{multline}
     b^{*}\sum_{i=0}^p L_{\nabla_t}^{i}L_XL_{\nabla_t}^{p-i}(\Omega L_Qv)=b^{*}\sum_{i=0}^pL^i_{\nabla_t}(\Omega L_XL_{\nabla_t}^{p-i}L_Qv+(p-i)L_XL_{\nabla_t}^{p-i-1}L_Qv)=\\=(p+1)b^{*}\sum_{i=0}^{p-1}L^i_{\nabla_t}L_XL_{\nabla_t}^{p-i-1}L_Qv\,.
 \end{multline}
Integrating \eqref{app-coord-vr} with respect to $t$, we obtain
\begin{align}
     v^{\cR}(\md')-v^{\cR}(\md)=\frac{1}{p!}L_Q\int_{0}^1dt (b^{*}\sum_{i=0}^{p-1}L_{\nabla_t}^{i}L_{X}L_{\nabla_t}^{p-1-i}v)\,.
 \end{align}
 Similarly, 
 \begin{multline}\label{app-coord-an}
     (-1)^{tot(v)}(p-1)!\frac{d}{dt}v^\cA(\md_t)=\frac{d}{dt}b^{*}L_{\nabla_t}^{p-1}L_{\md_t}v=b^{*}\sum_{i=0}^{p-2}L_{\nabla_{t}}^{i}L_{[Q,X]}L_{\nabla_{t}}^{p-2-i}L_{\md_t}v+b^{*}L_{\nabla_t}^{p-1}L_{X}v=\\=L_Q(b^{*}\sum_{i=0}^{p-2}L_{\nabla_{t}}^{i}L_{X}L_{\nabla_{t}}^{p-2-i}L_{\md_t}v)+b^{*}\sum_{i=0}^{p-2}L_{\nabla_{t}}^{i}L_{X}L_{\nabla_{t}}^{p-2-i}L_QL_{\md_t}v+b^{*}L_{\nabla_t}^{p-1}L_{X}v=\\=L_Q(b^{*}\sum_{i=0}^{p-2}L_{\nabla_{t}}^{i}L_{X}L_{\nabla_{t}}^{p-2-i}L_{\md_t}v)+b^{*}\sum_{i=0}^{p-1}L_{\nabla_{t}}^{i}L_{X}L_{\nabla_{t}}^{p-1-i}v-b^{*}\sum_{i=0}^{p-2}L^{i}_{\nabla_t}L_XL_{\nabla_t}^{p-2-i}L_{\md_t}L_Qv.
 \end{multline}
 Acting on \eqref{app-coord-cocycle} with $\frac{1}{p}b^*\sum_{i=0}^{p-1}L^i_{\nabla_t}L_XL_{\nabla_t}^{p-1-i}L_{\md_t}$ a straightforward computation gives
 \begin{align}
            b^{*}\sum_{i=0}^{p-2}L^i_{\nabla_t}L_XL_{\nabla_t}^{p-2-i}L_{\md_t}L_Qv= b^{*}\sum_{i=0}^{p-1}L^i_{\nabla_t}L_XL_{\nabla_t}^{p-1-i}v\,.
 \end{align}
Substituting this identity into \eqref{app-coord-an} and integrating with respect to $t$ we obtain
 \begin{align}
     v^{\cA}(\md')-v^{\cA}(\md)=\frac{(-1)^{tot(v)}}{(p-1)!}L_Q\int_{0}^1dt(b^{*}\sum_{i=0}^{p-2}L_{\nabla_{t}}^{i}L_{X}L_{\nabla_{t}}^{p-2-i}L_{\md_t}v)\,.
 \end{align}
 \end{proof}

 We then turn to the independence of the map the choice of boundary defining function $\Omega$. The corresponding change in $v^\cR$ and $v^\cA$ arises from the change in $\Omega$ and $v \equiv \Omega^{p}\tilde v$.
\begin{prop}
Let $\tilde v$ be a $Q$-cocycle, and $\alpha\in C^\infty(M)$. For
    \begin{align}
        \Omega'=\exp(\alpha)\Omega,\quad v'=\exp(p\alpha)v\,,
    \end{align}
    we have
    \begin{align}
        v'^\cR-v^\cR=L_Q(\dots),\qquad v'^\cA-v^{\cA}=L_Q(\dots)\,,
    \end{align}
    so that the cohomology classes of $v^\cR$ and $v^\cA$  remain unchanged.
\end{prop}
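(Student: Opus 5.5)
We deform along a path and show the $t$-derivative of the induced cocycles is $L_Q$-exact, as in the preceding Proposition on the choice of $\md$. Set
\begin{align}
\Omega_t=e^{t\alpha}\Omega\,,\qquad v_t=e^{pt\alpha}v=\Omega_t^p\tilde v\,,\qquad \xi_t\equiv Q\Omega_t=e^{t\alpha}(\xi+t\,\Omega\, Q\alpha)\,.
\end{align}
For each $t\in[0,1]$, $\Omega_t$ is a boundary defining function with the same zero locus. Near the boundary, $\xi_t$ and $\xi$ generate the same ideal together with $\Omega$, so $\d^QF$ is unchanged. We also need a family $\md_t$ with $\md_t\Omega_t=0$, $\md_t\xi_t=1$, $\md_t^2=0$. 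In the coordinates $(\Omega_t,\xi_t,z^A)$ we take $\md_t=\dl{\xi_t}$ and accept that it depends smoothly on $t$. By the previous Proposition the classes do not depend on the choice of $\md_t$ for fixed $t$, so any smooth family will do. Then $\nabla_t=[Q,\md_t]$, and we have $v^\cR_t=\frac1{p!}b^*L_{\nabla_t}^pv_t$ and $v^\cA_t=\frac{(-1)^{tot}}{(p-1)!}b^*L_{\nabla_t}^{p-1}L_{\md_t}v_t$. Each $v_t$ satisfies $\Omega_tL_Qv_t=p\xi_tv_t$ on $Int(F)$, and hence on $F$ by continuity, which is the hypothesis used in the appendix.

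Differentiating in $t$ produces three kinds of terms: from $\dot\md_t=X_t$ (note that $X_t\xi_t\neq0$ in general), from $\dot\nabla_t=[Q,X_t]$, and from $\dot v_t=p\alpha v_t$. The $\dot\nabla_t$ and $\dot\md_t$ terms are handled as in the previous proof. We commute $L_Q$ outward to obtain $L_Q b^*(\dots)$. The leftover pieces containing $L_Qv_t$ are then rewritten using $b^*\circ L_{\nabla_t}^k$ applied to the cocycle relation $\Omega_tL_Qv_t=p\xi_tv_t$ (together with its $L_{\md_t}$ and $L_{X_t}$ variants), because at the boundary this relation converts $L_Q$-terms into terms with one fewer $\nabla_t$.

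For the $\dot v_t$ term the key identity is $\alpha\,\Omega_tL_Qv_t=p\xi_t\alpha v_t$, i.e.
\begin{align}
\Omega_tL_Q(\alpha v_t)=p\xi_t(\alpha v_t)+\Omega_t(Q\alpha)v_t\,.
\end{align}
Applying $b^*L_{\nabla_t}^{N}$ and using $L_{\nabla_t}\Omega_t=1$ gives
\begin{align}
N L_Q b^*L^{N-1}_{\nabla_t}(\alpha v_t)=(N-p)\,b^*L^{N}_{\nabla_t}(\alpha v_t)+N\,b^*L_{\nabla_t}^{N-1}((Q\alpha)v_t)\,.
\end{align}
Taking $N=p$ shows that $b^*L^{p-1}_{\nabla_t}((Q\alpha) v_t)$ is exact. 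We then need $p\,b^*L^p_{\nabla_t}(\alpha v_t)$, combined with the $X_t$-contributions, to collapse into an exact term. To get this I would write $X_t$ explicitly. Since $\xi_t=e^{t\alpha}\xi+\Omega(\dots)$, one finds $X_t\sim-\alpha\md_t$ plus terms vanishing suitably at $\Omega=0$. Then $L_{[Q,X_t]}$ produces $-\alpha\nabla_t-(Q\alpha)\md_t$, and the $-\alpha\nabla_t$ piece should cancel $p\alpha$ from $\dot v_t$ up to exact terms. For $v^\cA$ the analogous step uses $L_{\md_t}$ of the relation, exactly as for \eqref{eq:first-step}.

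The main obstacle is this bookkeeping: computing $X_t$ precisely enough that the $\alpha$-multiplication terms cancel at the boundary, since $\alpha$ does not commute with $\nabla_t$ (we have $[\nabla_t,\alpha]=\nabla_t\alpha$). My first attempt would be to sidestep this with a cleaner choice. Take $\md_t=e^{-t\alpha}\md$ modified by a correction killing $\md(\Omega Q\alpha)$, and check at the boundary only, where $\Omega=0$ simplifies everything. Integrating the derivative from $0$ to $1$ then gives $v'^\cR-v^\cR=L_Q(\dots)$ and $v'^\cA-v^\cA=L_Q(\dots)$.
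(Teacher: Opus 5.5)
Your setup is the same as the paper's: the path $\Omega_t=e^{t\alpha}\Omega$, $v_t=e^{pt\alpha}v$, a family $\mathfrak{d}_t$ with $\mathfrak{d}_t\xi_t=1$, differentiation in $t$, and pushing $L_Q$ outward. The gap is that the decisive cancellation is never proved. You label it the ``main obstacle'' and leave it open, and the mechanism you suggest for it is not the one that works.

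Write $X_t=\dot{\mathfrak d}_t$ and $L_{[Q,X_t]}=L_QL_{X_t}+L_{X_t}L_Q$. Then $p!\,\tfrac{d}{dt}v^{\cR}_t=L_Q(\dots)+R_t$, with
\[
R_t=b^*\sum_{i=0}^{p-1}L^i_{\nabla_t}L_{X_t}L^{p-1-i}_{\nabla_t}L_Qv_t+p\,b^*L^p_{\nabla_t}(\alpha v_t)\,,
\]
and the whole proof is the statement $R_t=0$. The missing idea is that you do not need $X_t$ explicitly, only its action on $\Omega_t$ and $\xi_t$. Differentiating $\mathfrak d_t\Omega_t=0$ and $\mathfrak d_t\xi_t=1$, and using $\mathfrak d_t(\alpha\Omega_t)=0$ for degree reasons, gives $X_t\Omega_t=0$ and $X_t\xi_t=-\mathfrak d_tQ(\alpha\Omega_t)=-\nabla_t(\alpha\Omega_t)$. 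In the $\mathfrak d$-independence proof the $L_Qv$-leftover vanished precisely because $X\xi=0$. Here the nonzero $X_t\xi_t$ is exactly what compensates the $\dot v_t$ term.

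Concretely, apply $b^*\sum_{i=0}^{p}L^i_{\nabla_t}L_{X_t}L^{p-i}_{\nabla_t}$ to $\Omega_tL_Qv_t=p\xi_tv_t$:
\begin{itemize}
\item On the left, commuting $\Omega_t$ out with $\nabla_t\Omega_t=1$ and $X_t\Omega_t=0$ gives $(p+1)$ times the first sum in $R_t$.
\item On the right, $\nabla_t\xi_t=0$ and $b^*\xi_t=0$ leave only $-p\,b^*\sum_{i}L^i_{\nabla_t}\bigl(\nabla_t(\alpha\Omega_t)\,L^{p-i}_{\nabla_t}v_t\bigr)$. By Leibniz and $\sum_{i=j}^{p}\binom{i}{j}=\binom{p+1}{j+1}$ this equals $-p(p+1)\,b^*L^p_{\nabla_t}(\alpha v_t)$.
\end{itemize}
Hence $R_t=0$. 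For $\cA$, apply $\tfrac1p b^*\sum_{i=0}^{p-1}L^i_{\nabla_t}L_{X_t}L^{p-1-i}_{\nabla_t}L_{\mathfrak d_t}$ to the same relation. The extra term $p\,b^*L^{p-1}_{\nabla_t}(\alpha L_{\mathfrak d_t}v_t)$ it produces cancels the one coming from $\dot v_t$.

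The mechanism you propose does not do this job. The $-\alpha\nabla_t$ part of $[Q,X_t]$ cannot cancel $p\,b^*L^p_{\nabla_t}(\alpha v_t)$ by itself. At the level of $\alpha$-multiplication terms one has
\[
b^*\sum_{i=0}^{p-1}L^i_{\nabla_t}(\alpha L^{p-i}_{\nabla_t}v_t)=\sum_j\binom{p}{j+1}b^*\bigl((\nabla_t^j\alpha)L^{p-j}_{\nabla_t}v_t\bigr)\,,
\]
whereas
\[
p\,b^*L^p_{\nabla_t}(\alpha v_t)=\sum_j p\binom{p}{j}b^*\bigl((\nabla_t^j\alpha)L^{p-j}_{\nabla_t}v_t\bigr)\,.
\]
These differ as soon as $\nabla_t\alpha\neq0$. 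In addition, on forms $L_{\alpha\nabla_t}$ carries an extra $d\alpha\wedge i_{\nabla_t}$ piece. Nothing in your argument shows the difference is exact; in the correct bookkeeping the compensation comes from the $L_Qv_t$-leftovers.

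Your displayed identity for $\alpha v_t$ is also wrong. Applying $b^*L^N_{\nabla_t}$ to $\Omega_tL_Q(\alpha v_t)=p\xi_t\alpha v_t+\Omega_t(Q\alpha)v_t$ gives $N\,L_Qb^*L^{N-1}_{\nabla_t}(\alpha v_t)=N\,b^*L^{N-1}_{\nabla_t}\bigl((Q\alpha)v_t\bigr)$, with no $(N-p)$ term, because $\nabla_t\xi_t=0$ and $b^*\xi_t=0$. An $(N-p)$ term only appears after first applying $L_{\mathfrak d_t}$, as in \eqref{eq:first-step}. Your conclusion at $N=p$ survives, but it is not the identity the proof needs.
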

\begin{proof}
We interpolate between the initial and the rescaled defining functions by introducing the one-parameter family
 \begin{align}
     \Omega_t=\exp(\alpha t)\Omega,\quad v_t=\exp(p\alpha t)v\,,\quad\xi_t\equiv Q\Omega_t\,.
 \end{align}
 The corresponding vector fields $\md_t$ are chosen so that $\md_t\xi_t=1$. It follows that
 \begin{align}
     \dot \md_t\xi_t=-\md_t\dot\xi_t=-\md_tQ(\alpha\Omega_t)=-\nabla_t(\alpha\Omega_t)\,,
     \end{align}
     where $\nabla_t=[Q,\md_t]$. We also have
     \begin{align}\label{app-coord-cocycle2}
         \Omega_tL_Qv_t=p\xi_t v_t\,.
     \end{align}
     Differentiating the definition of $v_t^\cR$ with respect to $t$, we obtain
     \begin{multline}
         p!\frac{d}{dt}v^\cR_t=b^{*}\sum_{i=0}^{p-1}L^i_{\nabla_t}L_{[Q,\dot \md_t]}L_{\nabla_t}^{p-i-1}v_t+pb^{*}L^p_{\nabla_t}(\alpha v_t)=\\=L_Qb^{*}\sum_{i=0}^{p-1}L^i_{\nabla_t}L_{\dot\md_t}L^{p-i-1}_{\nabla_t}v_t+b^{*}\sum_{i=0}^{p-1}L^i_{\nabla_t}L_{\dot\md_t}L^{p-i-1}_{\nabla_t}L_Qv_t+pL^p_{\nabla_t}(\alpha v_t).
     \end{multline}
     Now we act with $b^{*}\sum_{i=0}^{p}L^i_{\nabla_t}L_{\dot\md_t}L^{p-i}_{\nabla_t}$ on both sides of \eqref{app-coord-cocycle2}. The left-hand side gives
     \begin{align}
         b^*\sum_{i=0}^pL^i_{\nabla_t}(\Omega_tL_{\dot\md_t}L_{\nabla_t}^{p-i}+(p-i)L_{\dot\md_t}L^{p-i-1}_{\nabla_t})L_Qv_t=(p+1)b^{*}\sum_{i=0}^{p-1}L^i_{\nabla_t}L_{\dot\md_t}L^{p-i-1}_{\nabla_t}L_Qv_t
     \end{align}
     while the right-hand side becomes
     \begin{multline}
         p b^{*}\sum_{i=0}^{p}L^i_{\nabla_t}L_{\dot\md_t}L_{\nabla_t}^{p-i}(\xi_t v_t)=p b^{*}\sum_{i=0}^{p}L^i_{\nabla_t}(-L_{\nabla_t}(\alpha\Omega_t))L^{p-i}_{\nabla_t}v_t)=\\=-pb^{*}\sum_{i=0}^{p}\sum_{j=0}^{i}(j+1)C^j_iL^j
_{\nabla_t}\alpha L^{p-j}_{\nabla_t}v_t  =-pb^{*}\sum_{j=0}^{p}(\sum_{i=j}^pC^j_i)(j+1)L^j_{\nabla_t}\alpha L_{\nabla_t}^{p-j}v_t   =\\=-pb^{*}\sum_{j=0}^{p}(j+1)C^{j+1}_{p+1}L^j_{\nabla_t}\alpha L_{\nabla_t}^{p-j}v_t=-p(p+1)b^{*}L_{\nabla_t}^{p}(\alpha v_t)\,.
\end{multline}
Substituting the result into the expression for $\frac{d}{dt}v^\cR_t$ and integrating over $t$ gives
\begin{align}
    v'^{\cR}-v^\cR=\frac{1}{p!}L_Q\int_0^1dt(b^{*}\sum_{i=0}^{p-1}L^i_{\nabla_t}L_{\dot\md_t}L^{p-i-1}_{\nabla_t}v_t)\,.
\end{align}
The proof for $v^\cA$ proceeds analogously. Differentiating the definition of $v_t^\cA$ with respect to $t$, we obtain
\begin{multline}
    (-1)^{tot(v)}(p-1)!\frac{d}{dt}v^\cA_t=b^{*}\frac{d}{dt}L_{\nabla_t}^{p-1}L_{\md_t}v_t=\\=b^{*}\sum_{i=0}^{p-2}L_{\nabla_t}^{i}L_{[Q,\dot\md_t]}L_{\nabla_t}^{p-2-i}L_{\md_t}v_t+b^{*}L_{\nabla_t}^{p-1}(L_{\dot\md_t}v_t+p\alpha L_{\md_t}v_t)=\\=L_Qb^{*}\sum_{i=0}^{p-2}L_{\nabla_t}^{i}L_{\dot\md_t}L_{\nabla_t}^{p-2-i}L_{\md_t}v_t+b^{*}\sum_{i=0}^{p-2}L_{\nabla_t}^{i}L_{\dot\md_t}L_{\nabla_t}^{p-2-i}L_QL_{\md_t}v_t+b^{*}L_{\nabla_t}^{p-1}(L_{\dot\md_t}v_t+p\alpha L_{\md_t}v_t)
\end{multline}
Applying $\frac{1}{p}b^{*}\sum_{i=0}^{p-1}L^i_{\nabla_t}L_{\dot\md_t}L_{\nabla_t}^{p-1-i}L_{\md_t}$ to the relation \eqref{app-coord-cocycle2}, a straightforward computation gives
\begin{align}
    b^{*}\sum_{i=0}^{p-2}L^i_{\nabla_t}L_{\dot\md_t}L_{\nabla_t}^{p-2-i}L_{\md_t}L_Qv_t=b^{*}\sum_{i=0}^{p-1}L^i_{\nabla_t}L_{\dot\md_t}L_{\nabla_t}^{p-1-i}v_t+pb^{*}L_{\nabla_t}^{p-1}(\alpha L_{\md_t}v_t)\,.
\end{align}
Substituting this identity into the previous expression, using $\nabla_t=[Q,\md_t]$, and integrating with respect to $t$, we arrive at
\begin{align}
v'^\cA-v^{\cA}=\frac{(-1)^{tot(v)}}{(p-1)!}L_Q\int_{0}^1dt(b^{*}\sum_{i=0}^{p-2}L_{\nabla_t}^{i}L_{\dot\md_t}L_{\nabla_t}^{p-2-i}L_{\md_t}v_t)\,.
\end{align}
\end{proof}
 \begin{rem}
For $p=1$, the $L_Q$-exact term vanishes in both proofs. In other words, in this case $v^\cR$ and $v^\cA$ are independent of the choice of $\md$ and of the boundary defining coordinate $\Omega$.
\end{rem}

\section[Properties of R and A maps]{Properties of $\cR$ and $\cA$ maps} \label{app:properties-maps}

\begin{lemma} 
    The maps $\cR, \cA$ both commute with $L_Q$ and $d$.
\end{lemma}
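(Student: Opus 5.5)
The plan is to reduce both statements to the defining formulas \eqref{R-map}, \eqref{A-map}, using two facts. First, by Remark~\bref{rem:p-choice-independ}, $\cR$ and $\cA$ may be computed with any admissible $p$, so we may raise $p$ by one whenever convenient. Second, the relevant operators commute: $[L_Q,L_\nabla]=0$, $[d,L_\nabla]=[d,L_{\md}]=[d,L_Q]=0$, and $b$ is a $Q$-map (so $b^*L_Q=L_Qb^*$ and $b^*d=db^*$). We also use $b^*\Omega=b^*\xi=0$, $b^*d\Omega=0$, and the elementary identities $\nabla\Omega=1$, $\nabla\xi=0$, $\md\Omega=0$, $\md\xi=1$, $L_\nabla d\Omega=d(\nabla\Omega)=0$, $L_{\md}d\Omega=0$, together with $L_{\md}L_Q+L_QL_{\md}=L_\nabla$.

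For $L_Q$, let $\tilde\alpha\in\bigwedge^\bullet_\fop$ with $\alpha=\Omega^p\tilde\alpha$ smooth. Then $\Omega^{p+1}L_Q\tilde\alpha=\Omega L_Q\alpha-p\xi\alpha$ is smooth, so we compute $\cR(L_Q\tilde\alpha)$ with exponent $p+1$. By Leibniz, $L_\nabla^{p+1}(\Omega\beta)=\Omega L_\nabla^{p+1}\beta+(p+1)L_\nabla^p\beta$ since $\nabla^2\Omega=0$, and $L_\nabla^{p+1}(\xi\alpha)=\xi L_\nabla^{p+1}\alpha$. Applying $b^*$ kills the terms proportional to $\Omega$ and $\xi$, leaving $\frac1{p!}b^*L_\nabla^pL_Q\alpha=L_Q\cR(\tilde\alpha)$.

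For $\cA$ we do the same, first applying $L_{\md}$. We have $L_{\md}(\Omega L_Q\alpha)=\Omega L_{\md}L_Q\alpha$ and $L_{\md}(\xi\alpha)=\alpha-\xi L_{\md}\alpha$. After $b^*L_\nabla^p$ this gives
\[
p\,b^*L_\nabla^{p-1}L_{\md}L_Q\alpha-p\,b^*L_\nabla^p\alpha=-p\,L_Q\,b^*L_\nabla^{p-1}L_{\md}\alpha,
\]
where we used $L_{\md}L_Q=L_\nabla-L_QL_{\md}$. Dividing by $p!$, and noting that the total degree increases by one so the prefactor $(-1)^{tot}$ flips, the minus sign is absorbed. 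We obtain $\cA(L_Q\tilde\alpha)=L_Q\cA(\tilde\alpha)$, which is the graded commutation appropriate for odd $L_Q$.

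The exterior derivative is handled identically. We write $\Omega^{p+1}d\tilde\alpha=\Omega\,d\alpha-p\,d\Omega\,\alpha$. The second term survives neither $L_\nabla^{p+1}$ followed by $b^*$ nor $L_\nabla^pL_{\md}$ followed by $b^*$, because $d\Omega$ commutes through $L_\nabla$ and $L_{\md}$ and $b^*d\Omega=0$. The first term reproduces $d$ applied to the original expression, via the same Leibniz count. The signs $(-1)^{tot}$ again match the graded commutation of $d$ with $L_{\md}$.

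The only delicate point is sign bookkeeping in the $\cA$ case. I would fix conventions by checking it on a simple example, e.g.\ $p=1$ with $\alpha$ a function, before writing the general argument.
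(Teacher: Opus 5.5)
Your proof is correct and is essentially the paper's argument. You raise the exponent by one (the paper instead takes $\Omega^{p-1}\tilde\beta$ smooth and evaluates with $p$), expand by Leibniz using $\nabla\Omega=1$, $\nabla\xi=0$, $\mathfrak{d}\Omega=0$, $\mathfrak{d}\xi=1$ and $L_{\mathfrak{d}}L_Q+L_QL_{\mathfrak{d}}=L_\nabla$, and let $b^*$ kill everything proportional to $\Omega$, $\xi$ or $d\Omega$. One wording nit: what you actually derive for $\cA$ is the plain identity $\cA(L_Q\tilde\alpha)=L_Q\cA(\tilde\alpha)$, exactly as in the paper, because the prefactor $(-1)^{tot}$ absorbs the sign (and $d$, $L_{\mathfrak{d}}$ anticommute in the total-degree convention).
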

\begin{proof}
Let $\tilde{\beta} \in \bigwedge^\bullet{}_{\fop}(Int(F))$ and $\tilde{\alpha} = L_Q \tilde{\beta}$. Choose $p$ such that $\beta = \Omega^{p-1}\tilde{\beta}$ is smooth on the boundary. Let us calculate $\cR(L_Q\tilde{\beta}) = \cR(\tilde{\alpha})$:
\begin{multline}
    \alpha^\cR = \frac{1}{p!} b^* L^p_\nabla \Omega^pL_Q \tilde{\beta} =\frac{1}{p!}  b^* ( L_Q (L^p_\nabla \Omega^p\tilde{\beta}) - p\xi L^p_\nabla \Omega^{p-1}\tilde{\beta}) = \\ = L_Q b^* \frac{1}{p!} ( (L^p_\nabla \Omega^p\tilde{\beta})) = L_Q \beta^\cR
\end{multline}
and for $\cA(\tilde\alpha)$:
\begin{multline}
    \alpha^\cA = \frac{(-1)^{tot(\tilde{\beta})+1}}{(p-1)!}b^* L^{p-1}_\nabla \Omega^p L_{\md} L_Q \tilde{\beta} = \frac{(-1)^{tot(\tilde{\beta})+1}}{(p-1)!}b^*(L^{p-1}_\nabla \Omega^p L_\nabla \tilde{\beta} - L^{p-1}_\nabla \Omega^p L_Q L_{\md}  \tilde{\beta}) = \\ = \frac{(-1)^{tot(\tilde{\beta})+1}}{(p-1)!}b^* \Big[ L^{p}_\nabla \Omega^p  \tilde{\beta} - p L^{p-1}_\nabla \Omega^{p-1} \tilde{\beta} - L_Q L^{p-1}_\nabla \Omega^p L_{\md} \tilde{\beta} + p L^{p-1}_\nabla \xi \Omega^{p-1} L_{\md} \tilde{\beta} \Big] = \\ = \frac{(-1)^{tot(\tilde{\beta})+1}}{(p-1)!}b^*\Big[ \Omega L_\nabla^p\beta -  L_Q L^{p-1}_\nabla \Omega^p L_{\md} \tilde{\beta} + p \xi L^{p-1}_\nabla   L_{\md} \beta \Big] = \\ = L_Q \frac{(-1)^{tot(\tilde{\beta})+1}}{(p-1)!}b^* (-L^{p-1}_\nabla \Omega^p L_{\md} \tilde{\beta}) = L_Q \frac{(-1)^{tot(\tilde{\beta})}}{(p-1)!}b^* (L^{p-1}_\nabla \Omega^p L_{\md} \tilde{\beta}) = L_Q \beta^\cA
\end{multline}
Therefore, we have shown that $\cR(L_Q \tilde{\beta}) = L_Q \cR(\tilde{\beta})$ and $\cA(L_Q \tilde{\beta}) = L_Q \cA(\tilde{\beta})$

To show that $\cR,\cA$ intertwine the de Rham differential we consider $\tilde{\alpha} = d\tilde{\beta}$ and choosing $p$ such that $\beta = \Omega^{p-1} \tilde\beta$ is smooth on the boundary, we compute:
\begin{equation}
\begin{gathered}
    \alpha^\cR = b^* \frac{1}{p!}L^p_\nabla \Omega^p d \tilde{\beta} = \frac{1}{p!}b^*( d L^p_\nabla \Omega^p  \tilde{\beta} - p L^p_\nabla d(\Omega) \beta ) = d \beta^\cR
\end{gathered}
\end{equation}
and 
\begin{multline}   
    \alpha^\cA = \frac{(-1)^{tot(\tilde{\beta})+1}}{(p-1)!}b^* L^{p-1}_\nabla \Omega^p L_{\md} d \tilde{\beta} = \\=\frac{(-1)^{tot(\tilde{\beta})+1}}{(p-1)!}b^*(-d L^{p-1}_\nabla \Omega^p L_{\md} d \tilde{\beta} + p (d\Omega )L^{p-1}_\nabla  L_{\md} \beta ) = d \beta^\cA
\end{multline}
proving that $\cR(d\tilde{\beta}) = d \cR(\tilde\beta)$  and $\cA(d\tilde{\beta}) = d \cA(\tilde\beta)$.
\end{proof}
\begin{lemma}
    If $\cI$ is extendable the maps $\cR, \cA$ can be restricted to 
    \begin{equation}
    \begin{gathered}
        \cR: \bigwedge{}^\bullet_{\fop}/\cI|_{\fop} \xrightarrow{} \bigwedge^\bullet(\partial^Q F)/\cI_{\partial} \\
        \cA: \bigwedge{}^\bullet_{\fop}/\cI|_{\fop} \xrightarrow{} \bigwedge^\bullet(\partial^Q F)/\cI_{\partial}
    \end{gathered}
    \end{equation}
\end{lemma}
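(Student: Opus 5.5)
Since $\cR$ and $\cA$ are linear, the statement reduces to showing that both maps send $\cI_{\fop}$ into $\cI_\partial$; the induced maps on the quotients are then automatically well defined. Every element of $\cI_{\fop}$ is a finite sum of terms $\tilde g\,\tilde f_i$ with $\tilde g\in\bigwedge^\bullet_{\fop}(Int(F))$ and $\tilde f_i=f_i|_{Int(F)}$ a generator extending to $\d^Q F$. It therefore suffices to treat one such product. Choose $p$ so large that $g\equiv\Omega^p\tilde g$ is smooth, using Remark~\bref{rem:p-choice-independ}. Then $\Omega^p(\tilde g\tilde f_i)=g f_i$ is smooth, and this same $p$ can be used in \eqref{R-map} and \eqref{A-map}.

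For $\cR$, expand with the Leibniz rule for the derivation $L_\nabla$:
\begin{equation}
b^*L_\nabla^p(gf_i)=\sum_{k=0}^{p}C^k_p\, b^*(L_\nabla^{p-k}g)\, b^*(L_\nabla^k f_i)\,.
\end{equation}
The term $k=0$ is a multiple of $b^*f_i$, which is a generator of $\cI_\partial$. For $k\geq 1$ I need $b^*L^k_\nabla f_i\in\cI_\partial$. To get this I plan to use the assumption $L_\md\cI\subset\cI$ together with the $Q$-invariance of $\cI$, which give $L_\nabla\cI\subset\cI$ on $Int(F)$. So $L_\nabla^k f_i|_{Int(F)}=\sum_j h_j f_j$ with coefficients $h_j$ defined on the interior.

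The step I expect to be the main obstacle is controlling the coefficients $h_j$: they might have poles at $\Omega=0$, and then $b^*$ of the expression need not lie in $\cI_\partial$. My plan is to interpret ``extendable'' as implying that these closure relations hold with smooth coefficients on $F$. That is, the ideal $\cI_F\subset\bigwedge^\bullet(F)$ generated by the $f_i$ is itself stable under $L_\md$ and $L_Q$, hence under $L_\nabla$. If that interpretation is too strong, I would fall back on a density argument: because $\Omega$ is a non-zero-divisor, $\Omega^m L_\nabla f_i\in\cI_F$ for some $m$. I would then use the freedom to increase $p$ to absorb these powers of $\Omega$ before applying $b^*$. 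Granting that $\cI_F$ is stable, $b^*L_\nabla^k f_i\in b^*\cI_F=\cI_\partial$, because $b^*$ is an algebra map that sends generators to generators.

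For $\cA$, I will write $L_\nabla^{p-1}L_\md(gf_i)$ and expand similarly. We have $L_\md(gf_i)=(L_\md g)f_i\pm g\,L_\md f_i$, and both pieces lie in $\cI_F$ by the stability just discussed. Applying $L_\nabla^{p-1}$ keeps the result in $\cI_F$, and then $b^*$ lands in $\cI_\partial$. The sign factor $(-1)^{tot}$ is irrelevant to ideal membership. This completes the argument for both maps.
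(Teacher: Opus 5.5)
Your proposal is correct and is essentially the paper's proof. Linearity reduces the claim to $\cR(\cI_{\fop}),\cA(\cI_{\fop})\subset\cI_\partial$, which follows because $L_Q$, $L_{\mathfrak{d}}$ and hence $L_\nabla=[L_Q,L_{\mathfrak{d}}]$ preserve the ideal and $b^*$ maps it into $\cI_\partial$; your reading of ``extendable'' as stability of the ideal generated by the $f_i$ in $\bigwedge^\bullet(F)$ (with smooth coefficients) is exactly what the paper's one-line argument tacitly uses. Drop the fallback, however: $\cR$ and $\cA$ do not depend on $p$, so raising $p$ cannot absorb poles, and under the interior-only reading the lemma actually fails. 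For instance, on $F$ with coordinates $(\Omega,\xi,z)$ take $Q=\xi\,\partial/\partial\Omega$, $\mathfrak{d}=\partial/\partial\xi$ (so $\nabla=\partial/\partial\Omega$) and the single generator $\tilde f=\Omega z$; then $z=\Omega^{-1}\tilde f\in\cI_{\fop}$, yet $\cR(z)=b^*z\neq 0$ while $\cI_\partial=0$.
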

\begin{proof}
    One has to show that $\cR, \cA$ are well defined on equivalence classes. Suppose we pick an equivalence class $[\tilde{\alpha}]$ and two representatives $\tilde{\alpha}, \tilde{\alpha}' \in [\tilde{\alpha}]$, i.e. $\tilde{\alpha} -  \tilde{\alpha}' = \tilde{\beta} \in \cI_{\fop}$. Because the maps are linear over $\tilde{\alpha}$ we only have to show that they map elements of $\cI|_{\fop}$ to $\cI_\partial$. We have
    \begin{equation}
        \beta^\cR = b^*( \frac{1}{(p+1)!}L^{p+1}_\nabla \Omega^{p+1}\tilde{\beta})
    \end{equation}
    \begin{equation}
        \beta^\cA = (-1)^{tot(\tilde{\beta})}b^*( \frac{1}{p!}L^{p}_\nabla \Omega^{p+1}L_{\md}\tilde{\beta}))
    \end{equation}
    Because $\Omega^p\tilde{\beta}$ is smooth on the boundary and using that $L_Q, L_{\md}$ and consequently $L_\nabla$ preserve $\cI$ we conclude that the claim is true.
\end{proof}

\section{Proof of Proposition~\bref{prop-renorm} } \label{app:renorm}
\begin{proof}
    By definition \eqref{eq-v(N)},
    \begin{multline}
       v^{(p)}=L_\nabla^pv=L_\nabla^p(\Omega^{p}\tilde v) = \sum_{i=0}^p C^{i}_p\nabla^i(\Omega^{p})L_\nabla^{p-i}\tilde v
       =p!\tilde v+\sum_{i=0}^{p-1}C^{i}_p\frac{p!}{(p-i)!}\Omega^{p-i}L_\nabla^{p-i-1}L_QL_{\md}\tilde v\,,
    \end{multline}
        where in the last equality we used $\nabla=[Q,\md]$ and $L_Q\tilde v=0$. Next, we move $L_Q$ to the left and rewrite $\tilde v=\Omega^{-p}v$, $\bar v\equiv L_\md v$:
        \begin{multline}\label{renorm-proof-AB}
        v^{(p)}=p!\tilde v+L_Q\left(\sum_{i=0}^{p-1}C^{i}_p\frac{p!}{(p-i)!}\Omega^{p-i}L_\nabla^{p-i-1}(\Omega^{-p}\bar v)\right)-\\\sum_{i=0}^{p-1}C^{i}_p\frac{p!}{(p-i)!}(p-i)\Omega^{p-i-1}(Q\Omega )L_\nabla^{p-i-1}(\Omega^{-p}\bar v)\equiv p!\tilde v+L_QA+B\,,
    \end{multline}
    where we have introduced the notation $A$ and $B$ for the corresponding sums.

  Let us compute $A$, again using the Leibniz rule and $\nabla^j\Omega^{-p}=(-1)^j\frac{(p+j-1)!}{(p-1)!}\Omega^{-p-j}$:
    \begin{align}
        A=\sum_{i=0}^{p-1}\sum_{j=0}^{p-i-1}C^{i}_pC^j_{p-i-1}\frac{p!}{(p-i)!}(-1)^j\frac{(p+j-1)!}{(p-1)!}\Omega^{-i-j}\bar v^{(p-i-j-1)}\,.
    \end{align}
    Making the change of summation variable $s=i+j$ and interchanging the order of summation, we arrive at
    \begin{align}\label{proof-renorm-A}
        A=\sum_{s=0}^{p-1}(\sum_{i=0}^{s}C^{i}_pC^{s-i}_{p-i-1}\frac{p}{(p-i)!}(-1)^{s-i}(p+s-i-1)!)\Omega^{-s}\bar v^{(p-s-1)}
        \equiv \sum_{s=0}^{p-1}a_{p,s}\Omega^{-s}\bar v^{(p-s-1)}\,.
    \end{align}
    Let us compute $a_{p,s}.$ Clearly, $a_{p,0}=1$. For $s>0,$ factoring out the terms independent of the summation index and rewriting the factorials in terms of binomial coefficients, we get
     \begin{align}
        a_{p,s}=p(-1)^s (s-1)!\sum_{i=0}^{s}(-1)^iC^{i}_pC^{s-i}_{p-i-1}C^{p-i}_{p+s-i-1},.
    \end{align}
    We now use the identities $C^{i}_p=C^{p-i}_p$, $C_{p-1-i}^{s-i}=C^{p-1-s}_{p-1-i}$ together with $C^{p-i}_{p}=\frac{p}{p-i}C^{p-i-1}_{p-1}$:
    \begin{align}
        a_{p,s}=p(-1)^s (s-1)!\sum_{i=0}^{s}(-1)^i\frac{p}{p-i}C^{p-i-1}_{p-1}C^{p-1-s}_{p-1-i}C^{p-i}_{p+s-i-1}\,.
    \end{align}
    Using the standard identity for binomial coefficients,
    $C^{p-i-1}_{p-1}C^{p-1-s}_{p-1-i}=C^{s}_{p-1}C^{i}_s$,
    and factoring out the terms independent of $i$, we obtain
\begin{align}\label{proof-aps}
    a_{p,s}=p^2(-1)^s (s-1)!C^s_{p-1}\sum_{i=0}^{s}(-1)^i\frac{1}{p-i}C^{i}_sC^{p-i}_{p+s-i-1}\,.
\end{align}
To compute this sum, note that $C^{p-i}_{p+s-i-1}$ is a polynomial in $i$ of degree $s-1\geq0$. For $s=1$ it is equal to $1$, while for $s>1$ we expand it around the point $p-i$:
\(C^{p-i}_{p+s-i-1}=1+(p-i)P(i)\),
where $P(i)$ is a polynomial of degree $s-2$. Now observe that for any $0\leq r\leq s-1$ one has
\begin{align}
    \sum_{i=0}^{s}(-1)^iC^{i}_s i^{r}=0\,.
\end{align}
Therefore, the term involving $P(i)$ does not contribute to the sum, and we may rewrite \eqref{proof-aps} as
\begin{align}\label{proof-aps2}
    a_{p,s}=p^2(-1)^s (s-1)!C^s_{p-1}\sum_{i=0}^{s}(-1)^iC^{i}_s\frac{1}{p-i}\,.
\end{align}
Let us compute the sum:
\begin{multline}
    \sum_{i=0}^{s}(-1)^iC^{i}_s\frac{1}{p-i}
    =\sum_{i=0}^{s}(-1)^iC^{i}_s\int_0^1 t^{p-i-1}dt
    =\int_0^1t^{p-1}\left(\sum_{i=0}^{s}(-1)^iC^{i}_st^{-i}\right)dt=\\
    =\int_0^1t^{p-1} \left(1-\frac{1}{t}\right)^sdt
    =(-1)^s\int_0^1t^{p-1-s}(t-1)^sdt
    =(-1)^s\frac{(p-s-1)!s!}{p!}\,,
\end{multline}
where in the last equality we used the standard expression for the Euler beta function. Substituting this into \eqref{proof-aps2}, we obtain
\begin{align}
    a_{p,s}=p(s-1)!\,,\qquad s\geq1\,.
\end{align}
Using this in \eqref{proof-renorm-A}, we obtain the desired expression for $A$.
    We now compute
    \begin{align}
       B\equiv -\sum_{i=0}^{p-1}C^{i}_p\frac{p!}{(p-i)!}(p-i)\Omega^{p-i-1}(Q\Omega) L_\nabla^{p-i-1}(\Omega^{-p}\bar v)\,.
    \end{align}
    We see that compared to the previous case, the only difference is the factor $-(p-i)\Omega^{-1} Q\Omega$. Therefore,
    \begin{multline}
        B=-\sum_{s=0}^{p-1}(\sum_{i=0}^{s}C^{i}_pC^{s-i}_{p-i-1}\frac{p}{(p-i)!}(-1)^{s-i}(p+s-i-1)!(p-i))\Omega^{-s-1}Q\Omega\bar v^{(p-s-1)}\equiv \\ \equiv \sum_{s=0}^{p-1}b_{p,s}\Omega^{-s-1}Q\Omega\bar v^{(p-s-1)}\,.
    \end{multline}
    Repeating the argument used for $a_{p,s}$, we see that only the term with $s=0$ contributes, and hence $b_{p,s}=-p\delta_{s,0}$. Thus,
    \begin{align}
        B=-p\Omega^{-1}Q\Omega\bar v^{(p-1)}\,.
    \end{align}
The lemma follows by substituting $A$ and $B$ into \eqref{renorm-proof-AB}.
\end{proof}
To prove Corollary~\bref{prop-renorm} we note that \eqref{eq:first-step} implies 
\begin{align}\label{app-renorm-formula}
    kL_Q\bar v^{(k-1)}-(k-p)v^{(k)}=p\xi\bar v^{(k)}-\Omega(L_Q\bar v^{(k)}-v^{(k+1)})
\end{align}
and hence
\begin{multline}
    p\ln\Omega L_Q\bar v^{(p-1)}=\ln\Omega(p\xi\bar v^{(p)}-\Omega(L_Q\bar v^{(p)}-v^{(p+1)}))=\\=L_Q(p\Omega\ln\Omega\bar v^{(p)})-\xi p \bar v^{(p)}-\Omega\ln\Omega((p+1)L_Q\bar v^{(p)}-v^{(p+1)})=L_Q(p\Omega\ln\Omega\bar v^{(p)})+r_1[v]
\end{multline}
thus completing the proof for $N=1$. For $N>1$ we note that
\begin{align}
    r_N[v]=pL_Q(\frac{(-1)^N}{(N+1)!}\Omega^{N+1}\ln\Omega\bar v^{(p+N)})+r_{N+1}[v]
\end{align}
which can be proven using $\eqref{app-renorm-formula}$ and $\xi\Omega^{i}=\frac{1}{i+1}Q(\Omega^{i+1})$ and follows from
\begin{multline}
    \Omega^N\ln\Omega((N+p)L_Q\bar v^{(p+N-1)}-Nv^{(p+N)})=\\=\Omega^N\ln\Omega((N+p)(p\xi\bar v^{(p+N)}-\Omega(L_Q\bar v^{(N+p)}-v^{(N+p+1)}))=\\=L_Q(\frac{p}{N+1}\Omega^{N+1}\ln\Omega\bar v^{(p+N)})-\frac{p}{N+1}\Omega^N\bar v^{(p+N)}-\\\frac{1}{N+1}\Omega^{N+1}\ln\Omega((N+1+p)L_Q\bar v^{(N+p)}-(N+1)v^{(N+p+1)})\,.
\end{multline}

\section{A Certain Combinatorial Identity}\label{app: Faa}
We frequently need to compute expressions of the form $X^k\xi\ms{N}_{a_1\dots a_{D-N}}$, where $X$ is a vector field and 
\begin{align}
    \xi\ms{N}_{a_1\dots a_{D-N}}\equiv\frac{1}{N!}\epsilon_{a_1\dots a_{D-N}b_1\dots b_{N}}\xi^{b_1}\dots\xi^{b_{N}}\,.
\end{align}
The first few of these can be computed explicitly:
\begin{align}
    \begin{split}
        &X\xi\ms{N}_{a_1\dots a_{D-N}}=(X\xi^{b_1})\xi\ms{N-1}_{a_1\dots a_{D-N}b_1}\,,\\
        & X^{2}\xi\ms{N}_{a_1\dots a_{D-N}}=(X^{2}\xi^{b_1})\xi\ms{N-1}_{a_1\dots a_{D-N}b_1}+(X\xi^{b_1})(X\xi^{b_2})\xi\ms{N-2}_{a_1\dots a_{D-N}b_1 b_2}\,.
    \end{split}
\end{align}
Iterating this procedure requires keeping track of combinatorial factors, and it is not difficult to observe that it is governed by the well-known Faà di Bruno formula, which is simply a generalization of the chain rule to higher derivatives. Namely,
    \begin{align}
        X^{k}\xi\ms{N}_{a_1\dots a_{D-N}}=\sum_{i=1}^k B_{k,i}(X\xi,\dots,X^{k-i+1}\xi)^{b_1\dots b_i}\xi\ms{N-i}_{a_1\dots a_{D-N}b_{1}\dots b_i}\,,
    \end{align}
where $B_{k,i}$ are the exponential Bell polynomials, which can be computed iteratively or looked up in references (e.g., \cite{comtet2012advanced}, Appendix ``Fundamental Numerical Tables''). The first few of these are given in the following table:
\begin{table}[h!]
\centering
\begin{tabular}{c|ccccc}
$k \backslash i$ & 1 & 2 & 3 & 4 &5\\ \hline
1 & $x_1$ & & & &\\
2 &  $x_2$ & $x_1^2$ & & &\\
3 & $x_3$ & $3x_1x_2$ & $x_1^3$ & &\\
4 & $x_4$ & $4x_1x_3+3x_2^2$ & $6x_1^2x_2$ & $x_1^4$ &\\
5 & $x_5$& $5x_1x_4+10x_2x_3$& $10x_1^2x_3+15x_1x_2^2$&$10x_1^3x_2$&$x_1^5$
\end{tabular}
\caption{Bell polynomials $B_{k,i}(x_1,x_2,\dots,x_{k-i+1})$}
\end{table}

For instance, we can deduce that
\begin{multline} 
    X^4\xi\ms{D}=(X^4\xi^a)\xi\ms{D-1}_a+(4(X\xi^a)(X^3\xi^b)+3(X^2\xi^a)(X^2\xi^b))\xi\ms{D-2}_{ab}+\\+6(X\xi^a)(X\xi^b)(X^2\xi^c)\xi\ms{D-3}_{abc}+(X\xi^a)(X\xi^b)(X\xi^c)(X\xi^d)\xi\ms{D-4}_{abcd}\,.
\end{multline}

\section{Derivation of the presymplectic potentials}\label{app:scalar-derivation}

Although in the main text we simply present the symplectic potential on $F^\varphi_\red$ and check that it is equivalent to the standard one, it is useful to show how it arises from the standard presymplectic potential on $(\tilde F^\varphi,Q)\to(\algg_\Lambda[1],\mathrm{d}_{\algg})$:
    \begin{align}
    \tilde \chi'_\varphi=\tilde\xi\ms{d}_a\tilde\nabla^a\tilde\varphi d\tilde\varphi\,,\quad \tilde H'_\varphi=\frac{(-1)^{d+1}}{2}\tilde\xi\ms{d+1}(m^2\tilde\varphi^{2}+\tilde\nabla_a\tilde\varphi\tilde\nabla^a\tilde\varphi)\,.
\end{align}
Using $\xi^a=\Omega\tilde\xi^{a}$ and \eqref{scalar-near-b} we can uplift $\tilde\chi'_\varphi$ and $\tilde H'_\varphi$ to $Int(F^\varphi)$:
\begin{align}
\begin{split}
    &\tilde \chi_\varphi'=\Omega^{-d-w}\xi\ms{d}_a(\Omega\nabla^a\varphi-w(n^a\varphi))d(\Omega^{-w}\varphi)\,,\\
    &\tilde H_\varphi'=\frac{(-1)^{d+1}}{2}\Omega^{-2w-d-1}\xi\ms{d+1}(\Omega^2\nabla_a\varphi\nabla^a\varphi-2w\Omega\varphi n^a\nabla_a\varphi+(m^2+n_an^a w^2)\varphi^2)\,.
\end{split}
\end{align}
It turns out to be convenient to choose a different representative within the equivalence class of the presymplectic potential. To see this, note that the term
\begin{align}
\begin{split}
&-w(\Omega^{-d-w}\xi\ms{d}_a(n^a\varphi)d(\Omega^{-w}\varphi)=-\frac{w}{2}\Omega^{-d}\xi_a\ms{d}n^ad(\Omega^{-2w}\varphi^{2})=d\alpha+\beta\,,\quad\text{where
}\\
    &\alpha=(-1)^{d+1}\frac{w}{2}\Omega^{-d-2w}\xi_a\ms{d}n^a\varphi^2\,,\qquad \beta=(-1)^d\frac{w}{2}\varphi^{2}\Omega^{-2w}d(\Omega^{-d}\xi_a\ms{d}n^a)\in \cI_{Int(F)}\,.
\end{split}
\end{align}
We exploit the freedom in the definition of the presymplectic potential and define
\begin{align}\label{utv-adsscalar-1}
    \tilde\chi_\varphi\equiv \tilde\chi_\varphi'-d\alpha-\beta= \Omega^{-d-w+1}\xi\ms{d}_a\nabla^a\varphi d(\Omega^{-w}\varphi)\,.
\end{align}
Under this shift, the covariant Hamiltonian also changes:
\begin{multline}\label{utv-adsscalar-2}
    \tilde H_\varphi=\tilde H_\varphi'+i_Q\beta=\frac{(-1)^{d+1}}{2}\Omega^{-2w-d-1}[\xi\ms{d+1}(\Omega^2\nabla_a\varphi\nabla^a\varphi-2w\Omega\varphi n^a\nabla_a\varphi+\\+(m^2+n_an^a w(d+w))\varphi^2)+w\Omega\varphi^2(\xi\ms{d+1}\tau-\Omega\lambda^a\xi_a\ms{d})]\,.
\end{multline}
The pullback of $\tilde\chi_\varphi$ and $\tilde H_\varphi$ to $Int(F^\varphi_{\red})$ completes the proof of the statement.

\setlength{\itemsep}{0em}
\small


\begingroup\raggedright\endgroup

\end{document}